\newcommand{\EE}{{\mathord{I\kern -.33em E}}}
\newcommand{\PP}{{\mathord{I\kern -.33em P}}}
\newcommand{\utDelta}
{\renewcommand{\arraystretch}{0.5} \begin{array}[t]{c}{{\pmb\Delta}}\\
\vspace{-4.5mm} \\
  \widetilde{}
\end{array}\hspace{-2mm}
\renewcommand{\arraystretch}{1}}
\newcommand{\utR}
{\renewcommand{\arraystretch}{0.5}\hspace{-2mm}
\begin{array}[t]{c}{{{R}}}\\
\vspace{-4.5mm} \\
  \widetilde{}
\end{array}\hspace{-2mm}\vspace{-1.5mm}
\renewcommand{\arraystretch}{1}}
\newcommand{\utRRel}
{\renewcommand{\arraystretch}{0.5}\hspace{-2mm}
\begin{array}[t]{c}{{\mathbf{R}}}\\
\vspace{-4.5mm} \\
  \widetilde{}
\end{array}\hspace{-2mm}\vspace{-1.5mm}
\renewcommand{\arraystretch}{1}}
\newcommand{\utT}
{\renewcommand{\arraystretch}{0.5}\hspace{-2mm}
\begin{array}[t]{c}{{\mathbf{T}}}\\
\vspace{-4.5mm} \\
  \widetilde{}
\end{array}\hspace{-2mm}\vspace{-1.5mm}
\renewcommand{\arraystretch}{1}}
\DeclareRobustCommand{\utLbf}
{\renewcommand{\arraystretch}{0.5} \!\!\!\begin{array}[t]{c}{{\bf L} }\\
\vspace{-4.5mm} \\
   {\widetilde{}} \\ 
  \vspace{-10mm}
\end{array}\hspace{-2mm}
\renewcommand{\arraystretch}{1}}
\DeclareRobustCommand{\utLbfCaption}
{\renewcommand{\arraystretch}{0.5} \!\!\!\begin{array}[t]{c}{{\bf L} }\\
\vspace{-2.0mm} \\
  \widetilde{}
\end{array}\hspace{-2mm}
\renewcommand{\arraystretch}{1}}
\newcommand{\n}{^{(n)}}
\newcommand{\npr}{^{(n)\prime}}
\newcommand{\pr}{^{\prime}}
\newcommand{\cqfd}{\hfill $\square$}
\newcommand{\inv}{^{-1}}
\newcommand{\vecd}{ \mathrm{vecd}^{\circ} }
\newcommand{\fulan}{ \mathcal{F}_{\text{\tiny{ULAN}}}} 
\newcommand{\matd}{ \mathrm{matd}^{\circ}}
\newcommand{\half}{\frac{1}{2}}
\newcommand{\be}{\begin{equation}}
\newcommand{\en}{\end{equation}}
\newtheorem{lemma}%[theorem]
{Lemma}[section]
\newtheorem{proposition}%[theorem]
{Proposition}[section]
\newtheorem{corollary}%[theorem]
{Corollary}[section]
\theoremstyle{definition}
\title{\vspace{-25mm}
{\sc 
$R$-Estimation for Asymmetric\\ Independent Component Analysis
%%%%%%%%%%%%%%%%
% for blind, uncomment the following
\vspace{-1mm}
%\vspace{-10mm}
%%%%%%%%%%%%%%%%
}}
\author{ 
%%%%%%%%%%%%%%%%
% for blind, have "white"
% for blind, uncomment thanks
%%%%%%%%%%%%%%%%
Marc {\sc Hallin}\thanks{Marc Hallin is Professor at ECARES, Universit\'{e} libre de Bruxelles, Ave.\ F.D.\  Roosevelt~50, CP 114/04, B-1050 Bruxelles, Belgium and ORFE, Princeton University, Sherrerd Hall,   Princeton, NJ 08544  (E-mail: mhallin@ulb.ac.be).  Research supported by the Sonderforschungsbereich ``Statistical modelling
of nonlinear dynamic processes" (SFB 823) of the Deutsche Forschungsgemeinschaft, and
the IAP research network grant~P7/06 of the Belgian Federal Government (Belgian Science
Policy).
} 
%\ \
%%%%%%%%%%%%%%%%
% for blind, have "white"
%%%%%%%%%%%%%%%%
%\ \
\  and
%\ \
Chintan {\sc Mehta}\thanks{Chintan Mehta is Doctoral Student at ORFE, Princeton University, Sherrerd Hall,   Princeton, NJ 08544 (E-mail: cmehta@princeton.edu). Supported in part by the National Institutes of Health through Grant Number R01 GM072611 and National Science Foundation DMS-0704337.  } }
\date{\today}
\begin{document}
\maketitle

$\,$\vspace{-15mm}\\
\begin{abstract} Independent Component Analysis (ICA) recently has attracted much attention in the statistical literature as an appealing alternative to elliptical models. Whereas $k$-dimensional elliptical densities depend on  one single unspecified radial density, however, $k$-dimensional independent component distributions involve $k$ unspecified component densities. In practice, for  given sample size $n$ and  dimension $k$, this makes the statistical analysis much harder.  We focus here on the estimation, from an independent sample, of the mixing/demixing matrix of the model. Traditional methods (FOBI, Kernel-ICA,  FastICA) mainly originate from the engineering literature.  Their consistency requires moment conditions, they are poorly robust, and  do not achieve any type of asymptotic efficiency. When based on robust scatter matrices, the two-scatter methods developed by \citet{ose2006} and \citet{noo2008}  enjoy better robustness features, but their optimality properties remain unclear.  
The ``classical semiparametric" approach by \citet{cb2006}, quite on the contrary,  achieves semiparametric efficiency, but requires  the estimation of the  densities  of the $k$  unobserved independent components. As a reaction, an efficient (signed-)rank-based approach has been proposed by \citet{ip2011} for the case of symmetric component densities. The performance of theiir estimators is quite good, but they unfortunately fail to be root-$n$ consistent as soon as one of the component densities violates the symmetry assumption. In this paper, using ranks rather than signed ranks, we extend their approach to the asymmetric case and propose a one-step R-estimator for ICA mixing matrices. The finite-sample performances of those estimators are investigated and compared to those of existing methods under moderately large sample sizes. Particularly good performances are obtained from a version involving data-driven scores taking into account  the skewness and kurtosis of residuals. Finally, we show, by an empirical exercise,  that our methods also may provide excellent results in a context  such as image analysis, where   the basic assumptions of ICA are quite unlikely to hold. 
 \end{abstract}
 
Keywords and phrases: 
Independent Component Analysis (ICA), 
local asymptotic normality (LAN),
ranks,
$R$-estimation,
robustness.

\doublespace

% ----- ----- ----- ----- ----- ----- ----- ----- ----- ----- ----- ----- ----- ----- ------
\section{Introduction}
\subsection{Independent Component Analysis (ICA)} The traditional Gaussian model for noise, where a $k$-dimensional error term $\bf e$ is ${\cal N}({\bf 0}, {\pmb\Sigma})$ can be extended, mainly, into two directions. Either the elliptical density contours of the multinormal are preserved, and $\bf e$ is assumed to be {\it elliptically symmetric} with respect to the origin, with unspecified radial density $f$. Or, the independence of the marginals of~${\pmb\Sigma}^{-1/2}{\bf e}$ is preserved, but their densities $f_1,\ldots , f_k$ remain unspecified, yielding  the {\it independent component} model. In both cases, the distribution of $\bf e$ involves an unknown linear transformation: the~$k\times k$ symmetric positive definite {\it sphericizing matrix} ${\pmb\Sigma}^{-1/2}$ ($k(k+1)/2$ parameters) in the elliptical case; the $k\times k$ {\it mixing matrix} $\pmb\Lambda$ ($k^2$ parameters) in the independent component case. The main difference, however, is that, while elliptical noise only depends on {\it one} nonparametric nuisance, the radial density $f$, independent component noise involves $k$ nonparametric nuisances, the component densities~$f_1,\ldots , f_k$. This makes the statistical analysis of models based on independent component noise significantly harder than its elliptical counterpart.  

In this paper, we focus on the problem of estimating $\pmb\Lambda$. Many   solutions (FastICA,  FOBI, Kernel-ICA, ...)  have been proposed, mostly in the engineering  literature; see Section~\ref{Sec:EstimationMethod} for details. Their root-$n$ consistency requires finite moments---of order four (deflation-based Fast-ICA: see Olilla (2010), Nordhausen et al.~(2011) or Ilmonen~(2012)) or eight (FOBI: see Ilmonen et al.~(2010))---or remains an open question (Kernel-ICA). None of them is achieving efficiency nor enjoying any well-identified optimality property, and their robustness properties are  poor or likely to be poor (for FOBI, see Ilmonen et al.~(2010)).  An ingenious  method based on the availability of two scatter matrices has been developed by  \citet{ose2006} and \citet{noo2008}. Under appropriate assumptions on the component densities, the resulting estimators are root-$n$ consistent and, when based on robust scatter matrices, the method can be seen as a robustification of FOBI.  No particular efficiency property can be expected, though, and, as soon as one of the component densities is asymmetric, a  preliminary  symmetrization step may be  required, which is computationally demanding.  

In contrast with these approaches, a  rigorous asymptotic analysis of the problem, putting emphasis on asymptotic efficiency,  is performed by \citet{cb2006} in line with the classical  \citet{bkrw1993}  semiparametric methodology, based on tangent space projections. 
 In that approach,  the $k$ component densities $f_1,\ldots , f_k$ need to be estimated, which again is computationally very costly. 
 
As a reaction, an efficient rank-based method has been developed recently by  Ilmonen and Paindaveine~(2011). That method is exploiting the consistency properties of such estimators as FOBI or FastICA, or those based on the two-scatter method, and  taking into account the invariance and distribution-freeness features of ranks in order to bypass the costly step of estimating~$k$ densities. Their estimators---call them $R_+$-estimators---achieve semiparametric efficiency at some selected $k$-tuple of component densities, and yield very good finite-sample performances, even under moderately large samples. However, they are  based on marginal {\it signed ranks}, which requires the somewhat restrictive assumption that all component densities are symmetric. 

We show here how that unpleasant assumption can be avoided, and propose a one-step $R$-estimation procedure based on residual ranks rather than   residual {\it signed} ranks as in $R_+$-estimation. We establish the asymptotic root-$n$ consistency and  asymptotic normality of our $R$-estimators, and carefully study their finite-sample performances via simulations. In particular, we show how they improve on the traditional and two-scatter methods,  
%(FOBI, FastICA, Kernel-ICA, and some others),
  and outperform Ilmonen and Paindaveine's $R_+$-estimators as soon as the symmetry assumption is violated by one of the component densities.

$R$-estimation, as well as  $R_+$-estimation, requires choosing~$k$ score functions, a choice that, in this context,  may be somewhat difficult. We  therefore describe and recommend a version of our method based on data-driven scores, where the skewness and kurtosis of component residuals   can be  taken into account. That method is easily implementable, and  achieves particularly good results. 

Finally, with an application to image analysis, we  show that our method also provides good results in situations where the basic assumptions of ICA clearly do not hold. There, our $R$-estimators are shown to improve, quite substantially,  the demixing performances of such classical methods as FOBI, FastICA or Kernel-ICA.

\subsection{Notation, identifiability, and main assumptions}\label{subSec:Notation}

% Model Set-up
  Denote by $\mathbf{X}\n :=(\mathbf{X}\npr_1 ,\ldots , \mathbf{X}\npr_n)\pr$, $n\in\mathbb{N}$, with $\mathbf{X}\npr_i := (X\n_{i1}, \ldots , X\n_{ik})$, $i=1,\ldots ,n$, a triangular array of observed     $k$-dimensional random vectors   satisfying \vspace{-2mm}
\begin{equation}\label{ICmod}
\mathbf{X}\n_i ={\pmb\mu} + {\pmb\Lambda} \mathbf{Z}\n_i
\vspace{-2mm}\end{equation}
where $\mathbf{Z}\n :=(\mathbf{Z}\npr_1 ,\ldots , \mathbf{Z}\npr_n)\pr$ is an unobserved  $n$-tuple of i.i.d.\   $k$-dimensional \textit{latent vectors}~$\mathbf{Z}\npr_i := (Z\n_{i1}, \ldots , Z\n_{ik})$, $i=1,\ldots ,n$, with joint  and marginal densities  $f^{\mathbf{Z}}$ and~$f_1,\ldots , f_k$ such that  (the IC assumption)\vspace{-2mm}
\begin{equation}
\label{eq:indf}
f^{\mathbf{Z}}(\mathbf{z})=\prod_{j=1}^kf_j(z_j), \qquad \mathbf{z}=(z_1,\ldots , z_k)\in\mathbb{R}^k.
\vspace{-2mm}\end{equation}
%\end{enumerate}
 The $k\times 1$ vector $\pmb\mu$ and the $k\times k$ full-rank matrix $\pmb\Lambda$  are parameters; $\pmb\Lambda$ and its inverse~$\pmb{\Lambda}^{-1}$ are called the \textit{mixing} and \textit{demixing} (or \textit{unmixing}) matrices, respectively.  
 Under (\ref{eq:indf}), the~$k$ components $Z\n_{i1}, \ldots , Z\n_{ik}$ of the latent vectors $\mathbf{Z}\n_i$ are mutually independent: they are called the \textit{independent components}, and  their marginal probability densities~$f:=(f_1,\ldots , f_k)$   the \textit{component densities},  of the \textit{independent component model}~(\ref{ICmod})-(\ref{eq:indf}). 

% Identification constraints
  Identification constraints clearly are needed in order for $\pmb\mu$ and  $\pmb\Lambda$ to be identified. 
  Without any loss of generality, we throughout impose that $f\in\mathcal{F}_0$, where \vspace{-2mm}
$$\mathcal{F}_0:=\Big\{ f:=(f_1,\ldots , f_k) \  \vert  f_j(z)>0 \text{ for all } z\in\mathbb{R}, \text{ and }  \int_{-\infty}^0f_j(z)\mathrm{d}z=1/2=\int_{0}^\infty f_j(z)\mathrm{d}z \Big\}; \vspace{-2mm}$$
the vector  $\pmb\Lambda ^{-1}\pmb\mu$ then is identified as the componentwise median of the $\pmb\Lambda ^{-1}\mathbf{X}\n_i $'s. Identification issues  for~$\pmb\Lambda$ are more severe, due to the invariance of the IC assumption (\ref{ICmod}) and~(\ref{eq:indf})   under permutation, rescaling, and sign changes of the centered independent components $\mathbf{Z}\n_i -  \pmb\Lambda ^{-1}\pmb\mu$. 
Denoting by  $\mathbf{D}_1$ and $\mathbf{D}_2$  two arbitrary full-rank $k\times k$ diagonal matrices, and by $\mathbf P$  an arbitrary  $k\times k$ permutation matrix, we clearly have that  $\mathbf{\Lambda} \mathbf{Z} = \mathbf{\Lambda}^{\ast} \mathbf{Z}^{\ast}$ for~$\mathbf{\Lambda}^{\ast} = \mathbf{\Lambda} \mathbf{D}_1 \mathbf{P} \mathbf{D}_2$ and~$\mathbf{Z}^{\ast} = \mathbf{D}_2^{-1} \mathbf{P}^{-1} \mathbf{D}_1^{-1}\mathbf{Z}$, where $\mathbf{Z}^{\ast}$ still satisfies (\ref{ICmod}) and (\ref{eq:indf}). 
The mixing matrices ${\pmb\Lambda}$ and~$\mathbf{\Lambda}^{\ast}$ therefore are observationally equivalent.

% Ilmononen-Paindaveine Approach to Identifiability 
  Several identification constraints have been proposed in the literature in order to tackle this identifiability issue. 
  Those  we are imposing here are borrowed from Ilmonen and Paindaveine~(2011). 
  Considering the equivalence classes of $k \times k$ nonsingular matrices associated with the equivalence relation  
$ {\mathbf{\Lambda}}^\ast \sim \mathbf{\Lambda}$ iff ${\mathbf{\Lambda}}^\ast = \mathbf{\Lambda} \mathbf{D}_1\mathbf{P} \mathbf{D}_2 $ 
for some permutation and full-rank diagonal matrices  $\mathbf{P}$, $\mathbf{D}_1$ and~$\mathbf{D}_2$, respectively, denote by $ \Pi$  the mapping \vspace{-3mm}
\begin{equation} 
\label{eq:PiMapping}
\mathbf{\Lambda} \mapsto {\Pi}(\mathbf{\Lambda}) := \mathbf{\Lambda} \mathbf{D}_1^{\mathbf{\Lambda}} \mathbf{P}^{\mathbf{\Lambda}} \mathbf{D}_2^{\mathbf{\Lambda}},\vspace{-3mm}
\end{equation}
where (a) $\mathbf{D}_1^{\mathbf{\Lambda}}$  is  the $k \times k$ positive 
 diagonal matrix whose $j^{th}$ diagonal element is the inverse of the Euclidean norm of $\pmb\Lambda$'s $j^{th}$ column ($j=1, \ldots,k$), (b) $\mathbf{P}^{\mathbf{\Lambda}}$ is a permutation matrix  that reorders the columns of $\pmb\Lambda\mathbf{D}_1^{\mathbf{\Lambda}}$  in such a way  that  $\left|( \mathbf{\Lambda} \mathbf{D}_1^{\mathbf{\Lambda}} \mathbf{P}^{\mathbf{\Lambda}} )_{ij} \right| <  \left| ( \mathbf{\Lambda}\mathbf{D}_1^{\mathbf{\Lambda}} \mathbf{P}^{\mathbf{\Lambda}})_{ii} \right|$ for all~$j > i$, and 
 (c) the (not necessarily positive) diagonal matrix $\mathbf{D}_2^{\mathbf{\Lambda}}$  normalizes $\mathbf{\Lambda} \mathbf{D}_1^{\mathbf{\Lambda}} \mathbf{P}^{\mathbf{\Lambda}} $ in such a way that  $( \mathbf{\Lambda} \mathbf{D}_1^{\mathbf{\Lambda}} \mathbf{P}^{\mathbf{\Lambda}} \mathbf{D}_2^{\mathbf{\Lambda}})_{jj}=1$, i.e. $(\mathbf{D}_2^{\mathbf{\Lambda}})_{jj} =  ( \mathbf{\Lambda}\mathbf{D}_1^{\mathbf{\Lambda}} \mathbf{P}^{\mathbf{\Lambda}} )_{jj}^{-1}$ for $j = 1, \ldots, k$.
Consider the set~$\mathcal{M}_k$ of nonsingular $k\times k$ matrices for which  no tie occurs in the definition of  $\mathbf{P}^{\mathbf{\Lambda}}$. 
Then, for~$\mathbf{\Lambda}_1, \mathbf{\Lambda}_2 \in \mathcal{M}_k$,  $\mathbf{\Lambda}_1 \sim \mathbf{\Lambda}_2$ if and only if $\Pi (\mathbf{\Lambda}_1) = \Pi (\mathbf{\Lambda}_2)$. 
Each class of equivalence thus contains a  unique  element $\mathbf{\Lambda}$ such that $\Pi(\mathbf{\Lambda}) = \mathbf{\Lambda}$, and inference for mixing matrices can be restricted to the set $\mathcal{M}_{k}^1:= \Pi(\mathcal{M}_k)$. 

  The matrices $\mathbf{\Lambda}$ for which ties occur in the construction  of $\mathbf{P}^{\mathbf{\Lambda}}$ have Lebesgue measure zero in $\mathbb{R}^{k \times k}$; neglecting them has little practical implications. 
While one could devise a systematic way to define a unique $\mathbf{P}^{\mathbf{\Lambda}}$ in the presence of such  ties, the resulting mapping~$\mathbf{\Lambda}\mapsto \mathbf{P}^{\mathbf{\Lambda}}$ would not be continuous, which disallows the use of the Delta method when constructing  root-$n$ consistent estimators for $\mathbf{\Lambda}$.  

% Defining Parameter Space 
  For $\mathbf{L} \in \mathcal{M}_{k}^1$, denote  by $\pmb \theta = (\pmb \mu, \text{vecd}^{\circ}(\mathbf{L}))$ the model parameter, where $\text{vecd}^{\circ}( \mathbf{L} )$ 
 stands for the vector of size $k(k-1)$ that stacks the columns of $\mathbf{L}$ on top of each with the diagonal elements omitted (since, by definition, they are set to one). 
 Write  $\Theta := \left( \mathbb{R}^k \times \text{vecd}^{\circ}(\mathcal{M}_{k}^1) \right)$ for  the parameter space. 
 Note that, by imposing scaling and some nonnegative asymmetry constraints on the component densities, one could add the (unique) diagonal matrix $\mathbf{D}^{\mathbf{\Lambda}}$ such that $\mathbf{D}_1^{\mathbf{\Lambda}} \mathbf{P}^{\mathbf{\Lambda}} \mathbf{D}_2^{\mathbf{\Lambda}}=\mathbf{P}^{\mathbf{\Lambda}} \mathbf{D}^{\mathbf{\Lambda}}$ to the list of (nuisance) parameters. 
 In the present context, it is more convenient to have it absorbed into the unspecified form of $f$.   
 The role of $\mathbf D^{\mathbf{\Lambda}}$ is quite similar, in that respect, to that of the   scale functional  in elliptical families, as discussed in \citet{hp2006}.  
 
  Another solution to those identification problems is adopted by \citet{cb2006}, who impose scaling restrictions of $f$, and then let their  PCFICA algorithm (\citet{cb2005}) make a choice between the various observationally equivalent values of   $\pmb\Lambda\inv$.
\vspace{-2mm}

\section{Local asymptotic normality and group invariance}
\subsection{Group Invariance and semiparametric efficiency}\label{subsec:groupInvariance}
  Denoting by $\mathrm{P}\n_{\pmb \theta ;f}$,  $\mathrm{P}\n_{\pmb \mu,  \mathbf{L};f}$ or $\mathrm{P}\n_{\pmb \mu, \text{vecd}^{\circ}(\mathbf{L});f}$  the joint distribution of $\mathbf{X}\n$ under location $\pmb\mu$, mixing matrix ${\pmb\Lambda}$ such that $\Pi(\pmb{\Lambda}) = \mathbf{L}$, and component densities $f=( f_1,\ldots , f_k)$, let 
 \vspace{-4mm}
$$ \mathcal{P}\n:=\big\{ \mathrm{P}\n_{\pmb \theta ;f} \ \vert \  \pmb\theta \in\Theta ,\ f\in\mathcal{F}_0\big\}, \qquad  \mathcal{P}\n_f:= \big\{ \mathrm{P}\n_{\pmb \theta ;f} \ \vert \  \pmb\theta \in\Theta \big\} \ \text{for fixed} \ f\in\mathcal{F}_0,$$
 \vspace{-12mm}
$$ \mathcal{P}\n_{\pmb\mu ; f}:= \big\{ \mathrm{P}\n_{\pmb \mu,  \mathbf{L};f}\ \vert \  \mathbf{L}  \in \mathcal{M}_{k}^1 \big\} \  \text{for fixed} \ \pmb\mu\in\mathbb{R}^k\ \text{and} \ f\in\mathcal{F}_0 , $$
 \vspace{-8mm}
$$\mathcal{P}\n_{\mathbf{L}} \text{ or } \mathcal{P}\n_{\pmb{\Lambda}} := \big\{  \mathrm{P}\n_{\pmb \mu, \mathbf{L};f} \ \vert \  \pmb\mu \in \mathbb{R},\ f\in\mathcal{F}_0 \big\} \ \text{for fixed} \ \Pi({\pmb\Lambda })=\mathbf{L} \in \mathcal{M}_{k}^1,  \ \ \ \text{and}  $$
 \vspace{-8mm}
$$ \mathcal{P}\n_{\pmb{\mu}, \mathbf{L}} \text{ or } \mathcal{P}\n_{\pmb{\mu}, \pmb{\Lambda}} := \big\{  \mathrm{P}\n_{\pmb \mu, \mathbf{L};f} \ \vert   \ f\in\mathcal{F}_0  \big\} \  \text{for fixed} \ \pmb{\mu}\in\mathbb{R}^k \ \text{and} \ \Pi({\pmb\Lambda })=\mathbf{L} \in \mathcal{M}_{k}^1.$$  
All those subfamilies will play a role in the sequel. 
 
  A semiparametric (in the spirit of \citet{bkrw1993}) approach to Independent Component Analysis (ICA) and, more particularly,  the estimation of $\pmb\Lambda$,  requires the  \textit{uniform local asymptotic normality} (ULAN) of $\mathcal{P}\n_{ f}$ at any~$f$ satisfying adequate regularity assumptions: see Section~\ref{ULANsec}. 
It is easy to see that ULAN of $\mathcal{P}\n_{ f}$ (with parameters $\pmb\mu$ and $\bf L$) implies that of~$\mathcal{P}\n_{\pmb\mu ; f}$ (with parameter  $\bf L$)  for any given $\pmb\mu\in\mathbb{R}^k$. 

  The model we are interested in involves the family $\mathcal{P}\n$. 
Depending on the context, several distinct semiparametric approaches to ICA are possible: either both the location~$\pmb\mu$ and the mixing matrix~$\pmb\Lambda$ are parameters of interest with the density $f$ being a nuisance; or the location~$\pmb\mu$ is a parameter of interest with nuisance~$(\pmb\Lambda , f)$; or the mixing matrix~$\pmb\Lambda$ (equivalently, $\mathbf{L}$) only is of interest and $(\pmb\mu , f)$ is a nuisance. 
\citet{hw2003} have shown that, under very general conditions, if the parametric submodels associated with fixed values of the nuisance are \textit{uniformly locally asymptotically normal} (ULAN), while the submodels associated with fixed values of the parameter of interest are generated by groups of transformations, then semiparametrically efficient inference  can be based on the maximal invariants of those groups. 
  
  In the present context,   $\pmb\Lambda$ is the parameter of interest, and $(\pmb\mu , f)$ is the nuisance. 
Consider~$f=(f_{*1},\ldots , f_{*k})$, and assume that 
\begin{enumerate}
\item[(A1)] 
  $f$ belongs to the subset ${\mathcal F}_{\text{\tiny{ULAN}}}$ of ${\cal{F}}_0$  such that the sequence of (parametric) subfamilies~$\mathcal{P}\n_{\pmb\mu ; f}$, with parameter $\mathbf{L}$, is  ULAN, with \textit{central sequence} ${\pmb\Delta}\n_{\pmb\mu ; f}(\mathbf{L})$ (actually, ULAN holds at any~$(\pmb\mu , f )$ iff it holds  at $(\mathbf{0}  , f)$), and 

\item[(A2)] for all $\mathbf{L}  \in \mathcal{M}_{k}^1$ and $n\in\mathbb{N}$, the (nonparametric) subfamily $\mathcal{P}\n_{\mathbf{L}}$ is generated by some group of transformations  $\mathcal{G}\n(\mathbf{L} ), {\scriptstyle\circ}$ acting on the observation space  $\mathbb{R}^{kn}$, with maximal invariant~$\mathbf{R}\n(\mathbf{L} )$.
\end{enumerate}
It follows from \citet{hw2003} that the  semiparametric efficiency bounds (at~$(\pmb\mu , f )$, if  $\mathbf L$ is the parameter of interest)  can be achieved  by basing inference on the maximal invariant~$\mathbf{R}\n(\mathbf{L} )$---more specifically, on the conditional expecta\-tion~$\mathrm{E}_{ \mathrm{P}\n_{\pmb \mu, \mathbf{L};f}}[{\pmb\Delta}\n_{\pmb\mu ; f}({\bf L})\vert \  \mathbf{R}\n(\mathbf{L} )]$; since  $ \mathbf{R}\n(\mathbf{L} )$ is invariant, that conditional expectation moreover  is distribution-free under~$\mathcal{P}\n_{\mathbf{L}}$ (hence, also under densities $f$ that do not necessarily belong to ${\mathcal F}_{\text{\tiny{ULAN}}}$).  

Section~\ref{ULANsec} establishes the  ULAN property (A1) of  $\mathcal{P}\n_{\pmb\mu ; f}$ for any $\pmb\mu$ and $f$ satisfying some mild regularity assumptions. 
Let us show here that (A2) holds for any $\mathbf{L}  \in \mathcal{M}_{k}^1$ and $n$, and that the maximal invariant is the vector $ \mathbf{R}\n(\mathbf{L} ) = (\mathbf{R}\npr_{1}(\mathbf{L} ),\ldots , \mathbf{R}\npr_{n}(\mathbf{L} ))\pr$, where~$\mathbf{R}\n_{i}(\mathbf{L} ) = ({R}\n_{i1}(\mathbf{L} ),\ldots , {R}\n_{ik}(\mathbf{L} ))\pr$ 
and ${R}\n_{ij}(\mathbf{L} )$ is the rank of $(\mathbf{L}^{-1}\mathbf{X}\n_{i})_j$ among $(\mathbf{L}^{-1}\mathbf{X}\n_{1})_j,\ldots , (\mathbf{L}^{-1}\mathbf{X}\n_{n})_j$. Letting \vspace{-2mm}
\begin{equation}
\label{eq:residuals}
\mathbf{Z}_i^{(n)}\left(\pmb \mu ,  \mathbf{L} \right) := \mathbf{L}^{-1} ( \mathbf{X}_i^{(n)} - \pmb \mu  ), \quad  i=1, \ldots, n, 
\vspace{-2mm}\end{equation}
${R}\n_{ij}(\mathbf{L} )$,  under $\mathcal{P}\n_{\mathbf{L}}$, is thus also  the rank of $\big(\mathbf{Z}\n_{i}(\pmb \mu ,  \mathbf{L} )\big)_j$ among $\big(\mathbf{Z}\n_{1}(\pmb \mu ,  \mathbf{L} )\big)_j,\ldots , \big(\mathbf{Z}\n_{n}(\pmb \mu ,  \mathbf{L} )\big)_j$.

The elements $ g_{\bf h} $ of the generating group $\mathcal{G}\n(\mathbf{L} ),{\scriptstyle\circ} $ are  indexed by the family $\mathcal H$ of $k$-tuples ${\bf h}=(h_1,\ldots , h_k)$ of monotone continuous and strictly increasing functions $h_j$ from $\mathbb{R}$ to  $\mathbb{R}$ such that $\lim_{z\to\pm\infty} h_j(z)= \pm\infty$, with  $g_{\bf h}\in \mathcal{G}\n(\mathbf{L} )$  defined as \vspace{-2mm}
$$g_{\bf h} : {\bf x}=({\bf x}\pr_1, \ldots , {\bf x}\pr_n)\pr = \big( (x_{11},\ldots , x_{1k}), \ldots , (x_{n1},\ldots , x_{nk})\big)\pr \in\mathbb{R}^{kn}\mapsto g_{\bf h} ({\bf x})
\vspace{-4mm}$$
where \vspace{-3mm}
$$g_{\bf h} ({\bf x})=  \Big({\bf L}\big(h_1(({\bf L}^{-1}{\bf x}_1)_{1}),\ldots ,  h_k(({\bf L}^{-1}{\bf x}_1)_{k})\big)\pr ,  \ldots , 
{\bf L}\big(h_1(({\mathbf{L}}^{-1}{\bf x}_{n})_1),\ldots , h_k(({\mathbf{L}}^{-1}{\bf x}_{n})_k\big)\pr \Big)\pr.\vspace{-2mm}$$
That is, $\mathcal{G}\n(\mathbf{L} ),{\scriptstyle\circ}$ is a transformation-retransformation form of the group   of continuous marginal order-preserving transformations acting componentwise on the $\mathbf{L}^{-1}\mathbf{X}\n_i$'s. 
Standard results on ranks entail that this group is generating $\mathcal{P}\n_{\mathbf{L}}$ and has maximal invariant~$ \mathbf{R}\n(\mathbf{L} )$.

  A similar situation holds when the parameter of interest is $({\pmb\mu}, {\bf L})$; similar ideas then lead to considering a smaller group $\mathcal{G}_0\n(\mathbf{L} )$, with maximal invariant the componentwise signs and ranks extending  the methods proposed in Hallin et al.~(2006, 2008).    This latter approach is not needed  here, where we focus on $R$-estimation of~$\bf L$, but it  is considered in \citet{hm2013}, who study  testing problems for location and regression. 

  The approach by \citet{ip2011} is quite parallel. 
However, although  addressing the problem of estimating the mixing matrix $\pmb\Lambda$, so that $\pmb\mu$ is a nuisance, these authors   do not consider the group $\mathcal{G}\n(\mathbf{L} )$, nor the group $\mathcal{G}_0\n(\mathbf{L} )$.  
They rather  make  the additional assumption that the $k$ component densities $f_j$ all are symmetric with respect to the origin. 
Under that assumption, they are using yet another group, which is the subgroup $\mathcal{G}_+\n(\mathbf{L} )$ of $\mathcal{G}\n(\mathbf{L} )$ corresponding to those   ${\bf h}\in\mathcal{H}$ such that $h_j(-z)= -h_j(z) $ for all $j=1,\ldots ,k$ and~$z\in\mathbb{R}$.  
The resulting maximal invariant is a vector of componentwise \emph{signed ranks}, that is,  the vector of componentwise residual signs, 
 along with the vector   $ \mathbf{R}_+\n(\pmb{\mu}, \mathbf{L} ) = (\mathbf{R}\npr_{+1}(\pmb{\mu}, \mathbf{L} ),\ldots , \mathbf{R}\npr_{+n}(\pmb{\mu}, \mathbf{L} ))\pr$, where $\mathbf{R}\n_{+i}(\pmb{\mu}, \mathbf{L} ) = ({R}\n_{+i1}(\pmb{\mu}, \mathbf{L} ),\ldots , {R}\n_{+ik}(\pmb{\mu}, \mathbf{L} ))\pr$,  with  
 ${R}\n_{+ij}(\pmb{\mu}, \mathbf{L} )$  the rank of $\big\vert \big(  \mathbf{Z}_i^{(n)}\left(\pmb \mu ,  \mathbf{L} \right)\big)_j\big\vert $ among $\big\vert  \big( \mathbf{Z}_1^{(n)}\left(\pmb \mu ,  \mathbf{L} \right)\big)_j\big\vert ,\ldots , \big\vert  \big( \mathbf{Z}_n^{(n)}\left(\pmb \mu ,  \mathbf{L} \right)\big)_j\big\vert $.  
 As a result, their estimators lose root-$n$ consistency as soon as one of the underlying $f_j$'s fails to be symmetric with respect to zero---an assumption that   hardly can be checked for. 

% ------- ------- ------- ------- ------- ------- ------- -------
% ------- ------- ------- ------- ------- ------- ------- -------
\subsection{Uniform local asymptotic normality (ULAN)}\label{ULANsec}
% ------- ------- ------- ------- ------- ------- ------- -------
% ------- ------- ------- ------- ------- ------- ------- -------
  Establishing ULAN %with explicit forms of the central sequence and information matrix
   requires regularity conditions on $f$.  
  The following conditions are sufficient for $f=(f_1,\ldots , f_k)$ to belong to $\mathcal{F}_{ \text{\tiny{ULAN}} }$. \vspace{-2mm}
\begin{enumerate} 
\item[(A3)] The component densities $f_j$, $j=1,\ldots , k$, are \textit{absolutely continuous}, that is, there exist~$k$ real-valued functions $\dot{f}_j$ such that, for any $a<b$, $f_j(b) - f_j(a)=\int_a^b\dot{f}_j(z)\mathrm{d}z$.\vspace{-2mm}
\end{enumerate}
Letting $
\pmb{\varphi}_f (\mathbf{z}) := \left( \varphi_{f_1} ( z_{1}), \ldots, \varphi_{f_k}( z_{k} ) \right)\pr$,  $\mathbf{z}=(z_1,\ldots , z_k)\pr \in\mathbb{R}^k$,
with $\varphi_{f_j} := -f_j'/f_j$, assume moreover that \vspace{-2mm}
\begin{enumerate} 
\item[(A4)] all  component densities $f_j$ admit finite second-order moments, finite information for location, and finite information for scale; i.e. for $j =1, \ldots, k$, 
$\displaystyle{s^2_{f_j} := \int_{-\infty}^{\infty} z^2 f_j (z) \mathrm{d}z }$, %< \infty$, %\qquad 
$\displaystyle{\mathcal{I}_{f_j} := \int_{-\infty}^{\infty} \varphi_{f_j}^2(z) f_j(z) \mathrm{d}z}$, % < \infty$, % \\
and $\displaystyle{\mathcal{J}_{f_j} := \int_{-\infty}^{\infty} z^2 \varphi_{f_j}^2 (z) f_j(z) \mathrm{d}z}$ are finite. \vspace{-2mm}
\end{enumerate}
For such $f$, it follows from the Cauchy-Schwarz inequality that 
$
\displaystyle{\alpha_{f_j} := \int_{-\infty}^{\infty} z f_j(z) \mathrm{d}z }$ \linebreak and $\displaystyle{
 \kappa_{f_j} := \int_{-\infty}^{\infty} \varphi_{f_j}^2( z) z f_j(z) \mathrm{d}z}$, %\qquad \text{
 $j=1,\ldots,k$, %}
also are finite.
Consequently,  the quantities 
$
\gamma_{pq}(f) := \mathcal{I}_{f_p} s^2_{f_q}$, %\qquad  
$\varsigma_{pq}(f) := \alpha_{f_p} \kappa_{f_q}$,  and %\qquad \text{and} \qquad
 $ \varrho_{jpq}(f) := \mathcal{I}_{f_j} \alpha_{f_p} \alpha_{f_q},
$ %\end{equation*}
are bounded for every~$j, p, q\in \left\{1, \ldots, k \right\}$.
The information matrix for the ULAN result, in Proposition \ref{prop:ParametricULAN} below, depends on these quantities through\vspace{-2mm}
\begin{eqnarray}
{\bf G}_f &:=& \sum_{j=1}^k \left( \mathcal{J}_{f_j} - 1 \right) \left( \mathbf{e}_j \mathbf{e}_j^{\prime} \otimes  \mathbf{e}_j \mathbf{e}_j^{\prime} \right) +  \sum_{ \substack{p, q=1 \\ p \ne q} }^k \left\{ \gamma_{qp}(f)  \big( \mathbf{e}_p \mathbf{e}_p^{\prime} \otimes  \mathbf{e}_q \mathbf{e}_q^{\prime} \big)  +  \big( \mathbf{e}_p \mathbf{e}_q^{\prime} \otimes  \mathbf{e}_q \mathbf{e}_p^{\prime} \big) \right\} \nonumber \\
&&~~ + \sum_{ \substack{p, q=1 \\ p \ne q} }^k \mathbf{e}_p \mathbf{e}_q^{\prime} \otimes  \big( \varsigma_{pq}(f) \mathbf{e}_q \mathbf{e}_q^{\prime} + \varsigma_{qp}(f) \mathbf{e}_p \mathbf{e}_p^{\prime} \big)  +\!\!\!\!  \sum_{\substack{j, p,q =1 \\ j\ne p, j \ne q, p \ne q}}^k   \varrho_{jpq}(f)  \left( \mathbf{e}_p \mathbf{e}_q^{\prime} \otimes  \mathbf{e}_j \mathbf{e}_j^{\prime} \right)\! , \label{eq:Gf}
\end{eqnarray}
where $\mathbf{e}_j$ is the $j$th canonical basis vector of $\mathbb{R}^k$ and $\otimes$ denotes the Kronecker product.
% ------- ------- ------- ------- ------- ------- ------- -------

Writing $\mathbf{I}_k$  for the  $k \times k$ identity matrix,  define 
$\mathbf{C} := \sum_{p=1}^k \sum_{q=1}^{k-1} \mathbf{e}_p \mathbf{e}_p\pr \otimes \mathbf{u}_q \mathbf{e}\pr_{q + \delta_{q \geq p}} ,$ 
 where~$\mathbf{u}_q$ is the $q$th canonical basis vector of $\mathbb{R}^{k-1}$ and $\mathbf{e}_{q + \delta_{q \geq p}} := \delta_{q \geq p} \mathbf{e}_{q+1} + (1 - \delta_{q \geq p}) \mathbf{e}_q$, with $ \delta_{q \geq p}$   the indicator for $q\geq p$.
Then, let $\text{odiag}( {\bf M} )$ replace the diagonal entries of a matrix ${\bf M}$ with zeros.
Finally, for any ${\bf m} \in \mathbb{R}^{k(k-1)}$, define $\text{matd}^{\circ}( {\bf m})$ as the unique  $k \times k$ matrix with a diagonal of  zeroes such that $\vecd ( \text{matd}^{\circ}( {\bf m})) = {\bf m}$.

% ------- ------- ------- ------- ------- ------- ------- -------
\begin{proposition} \label{prop:ParametricULAN} 
Let $f \in \mathcal{F}_0$ satisfy (A3) and (A4). 
Then, $f \in \fulan$, and, for any fixed~$\pmb\mu \in \mathbb{R}^k$, the sequence of subfamilies $\mathcal{P}\n_{\pmb\mu ; f}$, with parameter ${\bf L} \in \mathcal{M}_k^1$, is ULAN  with central sequence\vspace{-4mm}
% Central Sequence
\begin{equation}
\label{eq:OPTcenSeq}
\pmb{\Delta}_{{\bf L};\ \pmb{\mu}, f}^{(n)} =  \mathbf{C} \big( \mathbf{I}_k  \otimes \mathbf{L}^{-1} \big)^{\prime} \mathrm{vec}\big[  {\bf T}\n_{{\bf L};\ {\pmb \mu}, f}  \big],~~\text{where}~~{\bf T}\n_{{\bf L};\ {\pmb \mu}, f} := n^{-\half} \sum_{i=1}^n \big( \pmb{\varphi}_f \big( {\bf Z}_i\n \big) {\bf Z}_i\npr - {\bf I}_k \big)
\vspace{-2mm}\end{equation}
where~${\bf Z}_i\n := \mathbf{Z}_i^{(n)}\big(\pmb \mu ,  \mathbf{L} \big)$ is~defined in (\ref{eq:residuals}), and full-rank information matrix\vspace{-2mm}
\begin{equation}
\label{eq:InfMatrix}
\mathbf{\Gamma}_{\mathbf{L}; f} := \mathbf{C} \big( \mathbf{I}_k \otimes \mathbf{L}^{-1} \big)^{\prime} {\bf G}_f  \big( \mathbf{I}_k \otimes \mathbf{L}^{-1} \big) \mathbf{C}^{\prime},
\vspace{-2mm}\end{equation}
with ${\bf G}_f$ defined in (\ref{eq:Gf}).
Specifically, for any sequence  ${\bf L}\n =  {\bf L}  + O(n^{-\half}) \in \mathcal{M}_k^1$ and any bounded sequence $\pmb \tau\n \in \mathbb{R}^{k(k-1)}$,  \vspace{-3mm}
\begin{equation}
\label{eq:ParametricULANRepresentation}
\log   \frac{ \mathrm{d}\mathrm{P}\n_{  \pmb{\mu}, {\bf L}\n  + n^{-\half} \matd ( {\pmb \tau}\n ); f} }{  \mathrm{d}\mathrm{P}\n_{ \pmb{\mu}, {\bf L}\n\!;  f} }  =  \pmb{\tau}^{(n) \prime} \pmb{\Delta}_{\mathbf{L}\n\! ;\ {\pmb \mu}, f}^{(n)} - \frac{1}{2}  \pmb \tau^{(n) \prime} \mathbf{\Gamma}_{\mathbf{L}; f}  \pmb \tau^{(n)} + o_{\mathrm{P}}(1)
\vspace{-2mm}\end{equation}
and  
$\displaystyle{\pmb{\Delta}_{{\mathbf{L}}^{(n)}\!;\  {\pmb \mu}, f}^{(n)} \overset{\mathcal{L}}{\longrightarrow} \mathcal{N}_{k(k-1)} \big( \mathbf{0}, \mathbf{\Gamma}_{\mathbf{L}; f} \big)}$,
 as $n \to \infty$ under $\mathrm{P}_{\pmb \mu, {\mathbf{L}}^{(n)};  f}^{(n)}$.
\end{proposition}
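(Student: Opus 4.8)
The plan is to obtain ULAN as a consequence of differentiability in quadratic mean (DQM) of the parametric subfamily $\{\mathrm{P}\n_{\pmb\mu,\mathbf{L};f}\colon \mathbf{L}\in\mathcal{M}_k^1\}$, fed into a now-standard ULAN lemma for i.i.d.\ observations in the spirit of \citet{bkrw1993} and \citet{hw2003} (the argument runs parallel to the one used for $R_+$-estimation in \citet{ip2011}). Fix $\pmb\mu$. Since $\Pi(\mathbf{L})=\mathbf{L}$ on $\mathcal{M}_k^1$, one may take $\pmb\Lambda=\mathbf{L}$, so that a single observation has density $\mathbf{x}\mapsto|\det\mathbf{L}|^{-1}\prod_{j=1}^k f_j\big((\mathbf{L}^{-1}(\mathbf{x}-\pmb\mu))_j\big)$. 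Under the perturbation $\mathbf{L}\mapsto\mathbf{L}+n^{-\half}\matd(\pmb\tau)$ (which keeps the unit diagonal, and hence stays in $\mathcal{M}_k^1$ for $n$ large, the sign and ordering constraints being open), the residuals transform as $\mathbf{Z}_i(\pmb\mu,\mathbf{L}+n^{-\half}\matd(\pmb\tau))=(\mathbf{I}_k+n^{-\half}\mathbf{L}^{-1}\matd(\pmb\tau))^{-1}\mathbf{Z}_i=\mathbf{Z}_i-n^{-\half}\mathbf{L}^{-1}\matd(\pmb\tau)\mathbf{Z}_i+O_{\mathrm P}(n^{-1})$ under $\mathrm{P}\n_{\pmb\mu,\mathbf{L};f}$. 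The whole problem thus reduces to a DQM statement in the linear-in-$\pmb\tau$ reparametrization $\mathbf{V}:=\mathbf{L}^{-1}\matd(\pmb\tau)$.

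Second, I would establish DQM by exploiting the product form (\ref{eq:indf}). Writing $b_{ij}=(\mathbf{V}\mathbf{Z}_i)_j=\sum_{\ell}V_{j\ell}(\mathbf{Z}_i)_\ell$ and splitting off the diagonal term $\ell=j$, the contribution of the factor $f_j$ to $\sum_i\log\{f_j((\mathbf{Z}_i)_j-n^{-\half}b_{ij})/f_j((\mathbf{Z}_i)_j)\}$ is, up to $o_{\mathrm P}(1)$, the sum of (a) a \emph{scale}-type perturbation of $f_j$ with multiplier $V_{jj}$, controlled by finiteness of $\mathcal{J}_{f_j}$, and (b) $k-1$ \emph{location}-type perturbations of $f_j$ by the random amounts $n^{-\half}V_{j\ell}(\mathbf{Z}_i)_\ell$, $\ell\ne j$, controlled by finiteness of $\mathcal{I}_{f_j}$ together with $s^2_{f_\ell}<\infty$. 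Absolute continuity (A3) and the finite informations (A4) are exactly the hypotheses of the classical one-dimensional location/scale DQM lemmas, so no second-order differentiability of the $f_j$ is needed; the cross-terms require a small additional argument, since the location shift is by a variable \emph{independent} of the $j$-th component rather than by a constant, but this is handled by conditioning on the shifting component and applying the dominated-convergence form of DQM. The Jacobian term contributes $-\sqrt{n}\,\mathrm{tr}(\mathbf{V})+o_{\mathrm P}(1)$ at first order, which cancels the diverging mean of the $V_{jj}$-scale terms precisely because $\mathrm{E}[Z\varphi_{f_j}(Z)]=1$ (integration by parts, valid under (A3)--(A4)).

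Third, I would read off the central sequence and information. Collecting the (now centered) first-order terms gives the linear part $\sum_{j,\ell}V_{j\ell}\big(\mathbf{T}\n_{\mathbf{L};\pmb\mu,f}\big)_{j\ell}=\mathrm{tr}\big(\mathbf{V}'\mathbf{T}\n_{\mathbf{L};\pmb\mu,f}\big)$, and since $\pmb\tau'\mathbf{C}=\vecf(\matd(\pmb\tau))'$ and $(\mathbf{I}_k\otimes\mathbf{L}^{-1})'\vecf(\matd(\pmb\tau))=\vecf(\mathbf{L}^{-1}\matd(\pmb\tau))=\vecf(\mathbf{V})$, this equals $\pmb\tau'\pmb\Delta\n_{\mathbf{L};\pmb\mu,f}$ with $\pmb\Delta\n$ as in (\ref{eq:OPTcenSeq}). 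The quadratic part converges in probability, by the law of large numbers for i.i.d.\ sums (second moments supplied by (A4)), to $\half$ times the variance of the linear part, i.e.\ to $\half\pmb\tau'\mathbf{\Gamma}_{\mathbf{L};f}\pmb\tau$; computing $\mathrm{Var}\big(\pmb\varphi_f(\mathbf{Z})\mathbf{Z}'-\mathbf{I}_k\big)$ entrywise and using independence of the components produces exactly the four groups of terms in (\ref{eq:Gf}) — the $(\mathcal{J}_{f_j}-1)$ coefficients from the diagonal (scale) entries, and the $\gamma_{pq}$, $\varsigma_{pq}$, $\varrho_{jpq}$ coefficients from the various diagonal/off-diagonal covariances, together with the identities $\mathrm{E}[\varphi_{f_j}(Z)]=0$, $\mathrm{E}[Z\varphi_{f_j}(Z)]=1$, $\mathrm{E}[\varphi_{f_j}^2(Z)]=\mathcal{I}_{f_j}$. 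Asymptotic normality of $\pmb\Delta\n_{\mathbf{L}\n;\pmb\mu,f}$ under $\mathrm{P}\n_{\pmb\mu,\mathbf{L}\n;f}$ is then the multivariate CLT for the i.i.d.\ array $\pmb\varphi_f(\mathbf{Z}_i)\mathbf{Z}_i'-\mathbf{I}_k$ (genuinely i.i.d.\ with density $f^{\mathbf{Z}}$ under that measure) composed with the linear map $\mathbf{C}(\mathbf{I}_k\otimes\mathbf{L}^{-1})'$; and full rank of $\mathbf{\Gamma}_{\mathbf{L};f}$ reduces, $\mathbf{C}(\mathbf{I}_k\otimes\mathbf{L}^{-1})'$ being injective on the pertinent $k(k-1)$-dimensional subspace, to positive definiteness of $\mathbf{G}_f$ there, which follows from the strict Cauchy--Schwarz inequalities behind (A4) (no $\varphi_{f_j}$ being a.e.\ affine) plus the block structure.

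Finally, the genuinely \emph{uniform} part — replacing the fixed $\mathbf{L}$ by $\mathbf{L}\n=\mathbf{L}+O(n^{-\half})$ and allowing bounded $\pmb\tau\n$ — is where I expect the main work to lie. The standard device is to combine the fixed-$\mathbf{L}$ LAN expansion with (i) continuity of $\mathbf{L}\mapsto\mathbf{\Gamma}_{\mathbf{L};f}$ (immediate from (\ref{eq:InfMatrix})) and (ii) asymptotic linearity of the central sequence, $\pmb\Delta\n_{\mathbf{L}\n;\pmb\mu,f}=\pmb\Delta\n_{\mathbf{L};\pmb\mu,f}-\mathbf{\Gamma}_{\mathbf{L};f}\sqrt{n}\,\vecd(\mathbf{L}\n-\mathbf{L})+o_{\mathrm P}(1)$ under $\mathrm{P}\n_{\pmb\mu,\mathbf{L};f}$ and under the contiguous $\mathrm{P}\n_{\pmb\mu,\mathbf{L}\n;f}$, where (ii) comes from running the DQM argument locally uniformly over an $O(n^{-\half})$-neighbourhood of $\mathbf{L}$ and then invoking Le Cam's first and third lemmas, exactly as in the ULAN proofs behind \citet{hw2003} and \citet{hp2006}. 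The bookkeeping of the cross-component location shifts in the DQM step, and the verification that its remainder is $o_{\mathrm P}(1)$ locally uniformly in $\mathbf{L}$, is the delicate point; everything else is routine — if lengthy — linear algebra and moment computation.
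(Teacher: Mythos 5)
Your proposal is correct and follows essentially the same route as the paper: the paper's proof simply invokes the ULAN result of Oja, Paindaveine and Taskinen (2010), itself obtained by verifying the quadratic-mean-differentiability conditions of Swensen's (1985) Lemma 1 --- exactly the componentwise location/scale DQM verification you sketch --- and observes that dropping the symmetry assumption leaves the central sequence unchanged and only alters the information matrix. The one computation the paper actually carries out, namely evaluating ${\bf G}_f$ entrywise as the covariance matrix of $\mathrm{vec}\big({\bf T}^{(n)}_{{\bf L};\,\pmb\mu,f}\big)$ using independence of the components and the identities $\mathrm{E}[\varphi_{f_j}(Z)]=0$ and $\mathrm{E}[Z\varphi_{f_j}(Z)]=1$, is precisely your third step.
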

% ------- ------- ------- ------- ------- ------- ------- -------

This ULAN property extends that established by  Oja et al.~(2010) under the additional assumption that each component density $f_j$ is symmetric. 
Symmetry for every $f_j$ implies that the quantities $\alpha_{f_j}$ and $\kappa_{f_j}$, hence also the quantities $\varsigma_{jp}$ and $\varrho_{jpq}$, all take value zero for~$j, p, q \in \{1, \ldots, k\}$; therefore, dropping this assumption of symmetry affects the information matrix~(\ref{eq:InfMatrix}) through ${\bf G}_f$ in (\ref{eq:Gf}), which explains why our $\mathbf{\Gamma}_{\mathbf{L}; f}$ differs from theirs.\vspace{-1mm}

%
% ----- ----- ----- ----- ----- ----- ----- ----- ----- ----- ----- ----- ----- ----- ----- ----- ----- ----- ----- ----- ----- ----- ----- -----

\subsection{Rank-based versions of central sequences} 
  The ULAN result from Proposition~\ref{prop:ParametricULAN} allows the construction of parametrically efficient inference procedures  for  $\mathbf{L} \in \mathcal{M}_k^1$ at any given $f$ and $\pmb\mu$. 
  In practice, these are unspecified nuisances;  misspecifying either or both of them, in general,  leads to invalid inference---tests that fail to reach the nominal asymptotic level and estimators that do not achieve root-$n$ consistency. 
Therefore, the semiparametric approach under which both $f$ and $\pmb\mu$ are unspecified is the most sensible one. Instead of the standard semiparametric approach %to the problem is the tangent space projection  method described in the monograph by \citet{bkrw1993}.
%That approach has been taken by
  of  \citet{cb2006}, which requires estimating the $k$ component density scores, we consider the result of \citet{hw2003} who show that, under very general conditions, the parametric central sequence conditioned  on the maximal invariant mentioned in (A2) is a version (central sequences are always defined up to $o_{\rm P}(1)$ quantities) of  the corresponding semiparametrically efficient central sequence based on the tangent space projection.  
  %Our maximal invariants, however, are not the  same.  
  
  Let $\mathbf{F}: \mathbb{R}^k \longrightarrow [0,1]^k$ and  $\mathbf{J}_f: [0,1]^k \longrightarrow \mathbb{R}^k$  be defined so that, for $\mathbf{z} = \left(z_1, \ldots, z_k\right)\pr \in~\mathbb{R}^k$, ~$ \mathbf{F}(\mathbf{z}) := \left( F_1(z_1), \ldots, F_k(z_k) \right)\pr$, with 
 $ F_j(z_j) := \int_{-\infty}^{z_j} f_j(z) \mathrm{d}z$ for  
  $j=1, \ldots , k$,  
and,   for  \linebreak $ \mathbf{u} = (u_1, \ldots, u_k)\pr \in [0,1]^k$,
 $
\mathbf{J}_f (\mathbf{u}) := \pmb{\varphi}_f \left( \mathbf{F}^{-1} \left( \mathbf{u} \right) \right) = \left( \varphi_{f_1} \left( F_1\inv \left( u_1\right) \right), \ldots,  \varphi_{f_k} \left( F_k\inv \left( u_k\right) \right) \right)\pr,
$ 
with  $J_{f_j}(u_j) := \varphi_{f_j}\left( F_j\inv (u_j ) \right)$ for $j=1, \ldots, k$. 
Writing 
$\mathbf{U}_i\n := (U\n_{i,1},\ldots , U\n_{i,k})\pr$ \linebreak  for~$ \mathbf{U}_i\n \left(\pmb \mu, \mathbf{L}\right)   := \mathbf{F} \left( \mathbf{Z}_i^{(n)}\left(\pmb \mu, \mathbf{L}\right) \right)$,  $i=1, \ldots, n$, the parametric statistic ${\bf T}\n_{{\bf L}; \ {\pmb \mu}, f}$ defined in~(\ref{eq:OPTcenSeq}) takes the form %\vspace{-3mm}
$%$
{\bf T}\n_{{\bf L}; \ {\pmb \mu}, f}  =  n^{-\half} \sum_{i=1}^n  \left( \mathbf{J}_f  ( \mathbf{U}_i\n )  \mathbf{F}^{-1\prime}( \mathbf{U}_i\n )  - \mathbf{I}_k \right).
$%$

Assume moreover that 
\begin{enumerate}
\item[(A5)] for all $j=1,\ldots , k$, $z\mapsto  \varphi_{f_j}(z)$ is the difference of two monotone increasing functions.\vspace{-1mm}
\end{enumerate}
Assumption (A5) will be required whenever rank-based statistics with scores $\varphi_{f_j}\circ F_j^{-1}$ are considered. 
Conditioning $\pmb{\Delta}_{\mathbf{L}; \ {\pmb \mu}, f}\n$  on the sigma-field $\mathcal{B}(\mathbf{L})$ generated by the marginal  ranks of the  $\mathbf{Z}_i^{(n)}(\pmb \mu, \mathbf{L})$'s  yields \vspace{-4mm}
\begin{eqnarray}
\utDelta\n_ {\mathbf{L}; f: \mathrm{ex}}  &:=& \mathrm{E} \left[ \pmb{\Delta}_{\mathbf{L}; \ {\pmb \mu}, f}\n  | \mathcal{B}(\mathbf{L}) \right] \nonumber \\ 
&=& \mathbf{C} \left( \mathbf{I}_k  \otimes \mathbf{L}^{-1} \right)^{\prime} \mathrm{vec} \big[ \utT\n_{ {\bf L}; f:\mathrm{ex}} \big] \quad \text{where} \quad \utT\n_{ {\bf L}; f:\mathrm{ex}}  := \mathrm{E} \big[ {\bf T}\n_{{\bf L}; \ {\pmb \mu}, f} \big| \mathcal{B}(\mathbf{L}) \big]; \label{exdelta}
\end{eqnarray}
 \vspace{-10mm}
 
\noindent clearly, $\utDelta\n_ {\mathbf{L}; f: \mathrm{ex}}$ does not depend on $\pmb\mu$. 
Computing this conditional expectation requires evaluating, for each $j\in\{1,\ldots k\}$ and $r \in \{1, \ldots, n \}$,  \vspace{-2mm}
\begin{equation}
\mathrm{E} \left[ J_{f_{j}} ( U\n_{i,j}   ) F_{j}^{-1}(  U\n_{i,j} )  |  R\n_{i,j}(\mathbf{L}) =r \right]=\mathrm{E} \left[ J_{f_{j}} ( U_{(r)}\n  ) F_{j}^{-1}(  U_{(r)}\n )   \right]
\vspace{-3mm}\label{EXSCOR1}\end{equation}
 and, for each $j\pr \neq j^{\prime\prime}\in\{1,\ldots k\}$ and $r, s \in \{1, \ldots, n \}$, \vspace{-2mm}
\begin{equation}
\mathrm{E} \left[ J_{f_{j\pr}} ( U\n_{i,j\pr}   ) F_{j^{\prime\prime}}^{-1}(  U\n_{i,j^{\prime\prime}} )  |  R\n_{i,j^{\prime} }(\mathbf{L}) =r,\ R\n_{i,j^{\prime\prime} }(\mathbf{L}) =s \right]
=\mathrm{E} \left[ J_{f_{j\pr}} ( U_{(r)}\n  ) \right] \mathrm{E} \left[ F_{j^{\prime\prime}}^{-1}(  U_{(s)}\n )   \right],
\label{EXSCOR2}\vspace{-3mm}\end{equation}
 where $U_{(r)}\n $ and $U_{(s)}\n $ respectively denote,  in a sample~$U_1,\ldots , U_n$ of i.i.d.\  random variables uniform over $(0,1)$, the $r$th and $s$th order statistics.  
 As a function of $r$ and $s$, such  quantities are  called  \emph{exact} scores;   they  depend on $n$, and computing them via  numerical integration is somewhat tedious. 
 
 The so-called \textit{approximate scores},  in general,  are preferable: denoting~by \vspace{-1mm}
\begin{eqnarray*}
\utRRel _i^{(n)} \left( \mathbf{L} \right) :=  
  \Big(  
\utR \n_{i,1}\left( \mathbf{L}\right) , \ldots , \utR \n_{i,k}\left( \mathbf{L} \right)    \Big)\pr  :=
\bigg(  
\frac{ R_{i,1}^{(n)} \left( \mathbf{L} \right)}{n+1} , \ldots , \frac{ R_{i,k}^{(n)}\left( \mathbf{L} \right)  }{n+1}  \bigg)\pr 
\vspace{-3mm}
\end{eqnarray*}\vspace{-10mm}

\noindent the (marginal) \textit{normalized ranks}, 
  the approximate scores corresponding to (\ref{EXSCOR1}) and (\ref{EXSCOR2})~are \vspace{-10mm}
 \begin{eqnarray}
 J_{f_{j}} \big( \utR \n_{i,j}\left( \mathbf{L}\right) \big) F_{j}^{-1} \big(  \utR \n_{i,j}\left( \mathbf{L}\right) \big) - \frac{1}{n} \sum_{i=1}^n  J_{f_{j}} \Big(  \frac{i}{n+1} \Big) F_{j}^{-1}\Big(  \frac{i}{n+1} \Big)   \quad\text{and}  \nonumber \\
 J_{f_{j\pr}} \big( \utR \n_{i,j\pr}\left( \mathbf{L}\right) \big)  F_{j^{\prime\prime}}^{-1} \big(  \utR \n_{i,j^{\prime\prime}}\left( \mathbf{L}\right) \big) - \frac{1}{n} \sum_{i=1}^n  J_{f_{j\pr}} \Big(  \frac{i}{n+1} \Big) \frac{1}{n} \sum_{i=1}^n F_{j\prime\prime}^{-1}\Big(  \frac{i}{n+1} \Big),
 \label{APPRSCOR}\vspace{-2mm}
 \end{eqnarray}
 respectively. 
 Letting ${\bf 1}_k \in \mathbb{R}^k$ be the $k$-dimensional vector of ones, the approximate-score version of the central sequence is thus
\vspace{-6mm}
\begin{equation}\label{apprdelta}
\utDelta\n_ {\mathbf{L}; f} := \mathbf{C} \left( \mathbf{I}_k  \otimes \mathbf{L}^{-1} \right)^{\prime} \mathrm{vec} \Big[ \utT_{{\bf L}; f}\n \Big], \qquad \text{where} 
\end{equation}
\vspace{-10mm}
\begin{equation}
\utT_{{\bf L}; f}\n :=  \mathrm{odiag}\Big[ n^{-\half} \sum_{i=1}^n  \Big( \mathbf{J}_f \big(   \utRRel_i^{(n)} ( \mathbf{L} )  \big) \mathbf{F}^{-1\prime} \big( \utRRel_i^{(n)} ( \mathbf{L} ) \big) -  \overline{\mathbf{J}}_f\n \overline{{\bf F}\inv}\npr \Big) \Big]  \label{rankScoreMatrix}
\end{equation}
with ~$\overline{\mathbf{J}}_f\n := \frac{1}{n} \sum_{i=1}^n \mathbf{J}_f \Big( \frac{i}{n+1} {\bf 1}_k \Big)$ and $\overline{{\bf F}\inv}\n := \frac{1}{n} \sum_{i=1}^n {\bf F}\inv \Big( \frac{i}{n+1} {\bf 1}_k \Big)$.

The following  proposition, by establishing the asymptotic equivalence between the exact- and approximate-score forms~(\ref{exdelta}) and~(\ref{apprdelta}),   shows that (\ref{apprdelta}) indeed is a version of the corresponding semiparametrically efficient central sequence for the problem. \vspace{-1mm}

% ---------------------------------------------------
\begin{proposition} \label{prop:RankCenSeqEquivalence}
Fix $\pmb \mu \in \mathbb{R}^k$, $\mathbf{L} \in \mathcal{M}_{k}^1$, and $f \in \fulan$ satisfying (A5). 
Then,  under $\mathrm{P}^{(n)}_{\pmb \mu, \mathbf{L}; f}$, 
\vspace{-7mm}
\begin{equation*}
(i) \utDelta\n_ {\mathbf{L}; f} = \utDelta\n_ {\mathbf{L}; f: \mathrm{ex}} + o_{L^2}(1) \ \ \ \qquad \text{and} \ \ \ \qquad (ii) \utDelta\n_ {\mathbf{L}; f} = \mathbf{\Delta}^{(n) \ast}_{\mathbf{L}, \pmb \mu; f} + o_{L^2}(1),  \vspace{-2mm}
\end{equation*}
as $n \to \infty$, where $\mathbf{\Delta}^{(n) \ast}_{\mathbf{L}, \pmb \mu; f}$ is a semiparametrically efficient (at  $\mathbf{L}$, $\pmb \mu$, and $f$) central sequence. 
\end{proposition}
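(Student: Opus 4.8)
The plan is to prove (i) first and deduce (ii) from it at once. Since $\mathbf{C}\big(\mathbf{I}_k\otimes\mathbf{L}\inv\big)\pr$ is a fixed bounded linear map, (i) reduces to showing that the matrices $\utT\n_{\mathbf{L};f}$ of (\ref{rankScoreMatrix}) and $\utT\n_{\mathbf{L};f:\mathrm{ex}}$ of (\ref{exdelta}) coincide entrywise up to $o_{L^2}(1)$ under $\mathrm{P}\n_{\pmb\mu,\mathbf{L};f}$. Both have vanishing diagonal: $\utT\n_{\mathbf{L};f}$ by the $\mathrm{odiag}[\cdot]$ in (\ref{rankScoreMatrix}), and $\utT\n_{\mathbf{L};f:\mathrm{ex}}$ because conditioning the $(j,j)$ entry of $\mathbf{T}\n_{\mathbf{L};\pmb\mu,f}$ on the ranks and re-indexing by rank turns it into $n^{-\half}\big(\sum_{r=1}^n\mathrm{E}[J_{f_j}(U_{(r)}\n)F_j\inv(U_{(r)}\n)]-n\big)=n^{-\half}\big(n\int_0^1 J_{f_j}(u)F_j\inv(u)\,\mathrm{d}u-n\big)=0$, using $\int_0^1 J_{f_j}(u)F_j\inv(u)\,\mathrm{d}u=\mathrm{E}[\varphi_{f_j}(Z_j)Z_j]=1$. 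So only the off-diagonal entries are at stake. Fix $j\pr\ne j^{\prime\prime}$; under $\mathrm{P}\n_{\pmb\mu,\mathbf{L};f}$ the residuals $\mathbf{Z}_i\n(\pmb\mu,\mathbf{L})$ are i.i.d.\ with density $\prod_\ell f_\ell$, so the rank vectors $R\n_{\cdot,\ell}(\mathbf{L})$, $\ell=1,\ldots,k$, are mutually independent, each uniform over the $n!$ permutations of $\{1,\ldots,n\}$.

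Write $a_n(r):=J_{f_{j\pr}}(r/(n+1))$, $b_n(r):=F_{j^{\prime\prime}}\inv(r/(n+1))$, $\tilde a_n(r):=\mathrm{E}[J_{f_{j\pr}}(U_{(r)}\n)]$, $\tilde b_n(r):=\mathrm{E}[F_{j^{\prime\prime}}\inv(U_{(r)}\n)]$, and $\bar a_n:=n\inv\sum_r a_n(r)$, likewise $\bar b_n$. By (\ref{EXSCOR2}) the $(j\pr,j^{\prime\prime})$ entries to be compared are $n^{-\half}\sum_{i=1}^n\big(a_n(R\n_{i,j\pr})b_n(R\n_{i,j^{\prime\prime}})-\bar a_n\bar b_n\big)$ and $n^{-\half}\sum_{i=1}^n\tilde a_n(R\n_{i,j\pr})\tilde b_n(R\n_{i,j^{\prime\prime}})$; both are centered --- the first by construction (independence of the two rankings gives $\mathrm{E}[a_n(R\n_{i,j\pr})b_n(R\n_{i,j^{\prime\prime}})]=\bar a_n\bar b_n$), the second because $n\inv\sum_r\tilde a_n(r)=\int_0^1 J_{f_{j\pr}}(u)\,\mathrm{d}u=0$. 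The key ingredients are the classical H\'ajek score-approximation facts: for a square-integrable function on $(0,1)$ that is the difference of two monotone functions, $n\inv\sum_{r=1}^n(\text{approx.}-\text{exact score})^2\to0$ and $\sup_n n\inv\sum_r(\text{approx. score})^2<\infty$, and the exact scores then also obey $\sup_n n\inv\sum_r(\text{exact score})^2<\infty$. Assumption (A5) is exactly what makes $J_{f_j}=\varphi_{f_j}\circ F_j\inv$ a difference of two monotone functions, and (A4) gives $\int_0^1 J_{f_j}^2=\mathcal{I}_{f_j}<\infty$ and $\int_0^1(F_j\inv)^2=s_{f_j}^2<\infty$ (with $F_j\inv$ itself monotone). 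Granting these, I would split $a_nb_n-\tilde a_n\tilde b_n=(a_n-\tilde a_n)b_n+\tilde a_n(b_n-\tilde b_n)$ (first factors evaluated at $R\n_{i,j\pr}$, second at $R\n_{i,j^{\prime\prime}}$), condition in turn on each of the two independent rankings, and use the textbook variance formula for a linear rank statistic under a uniform random permutation; a short computation then bounds the $L^2$-distance between the two entries by a constant times $\big(n\inv\sum_r(a_n-\tilde a_n)^2\big)^{1/2}\big(n\inv\sum_r b_n^2\big)^{1/2}+\big(n\inv\sum_r\tilde a_n^2\big)^{1/2}\big(n\inv\sum_r(b_n-\tilde b_n)^2\big)^{1/2}\to0$, which proves (i).

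For (ii), recall from Section~\ref{subsec:groupInvariance} that (A1) holds by Proposition~\ref{prop:ParametricULAN}, that (A2) holds with maximal invariant $\mathbf{R}\n(\mathbf{L})$, and that the $\sigma$-field on which $\utDelta\n_{\mathbf{L};f:\mathrm{ex}}$ is conditioned is $\mathcal{B}(\mathbf{L})=\sigma(\mathbf{R}\n(\mathbf{L}))$. By the theorem of \citet{hw2003}, $\mathrm{E}[\pmb\Delta\n_{\mathbf{L};\pmb\mu,f}\mid\mathbf{R}\n(\mathbf{L})]=\utDelta\n_{\mathbf{L};f:\mathrm{ex}}$ is a version of the semiparametrically efficient central sequence (for $\mathbf{L}$, nuisance $(\pmb\mu,f)$) at $(\mathbf{L},\pmb\mu,f)$; taking $\mathbf{\Delta}^{(n)\ast}_{\mathbf{L},\pmb\mu;f}:=\utDelta\n_{\mathbf{L};f:\mathrm{ex}}$, part (ii) is then immediate from part (i).

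The genuine obstacle is (i), and within it the only delicate point is that $J_{f_j}$ may be unbounded near $0$ and $1$: this is precisely where (A5) is indispensable, since being a difference of two monotone functions is what brings $J_{f_j}$ into the scope of the H\'ajek score-approximation lemma. The remaining steps --- the reduction to an entrywise comparison, the vanishing of the diagonals, the two-term linearization of the product, and the permutation-variance estimates --- are routine; the only mild novelty relative to the signed-rank analysis of \citet{ip2011} is that the two score-bearing indices $j\pr$ and $j^{\prime\prime}$ belong to two \emph{independent} rankings, so the Hoeffding-type leading terms decouple and the exact centering becomes the product $\bar a_n\bar b_n$.
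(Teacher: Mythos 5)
Your proposal is correct and follows essentially the same route as the paper: reduce to an entrywise comparison, establish the exact-versus-approximate score equivalence for the off-diagonal bilinear rank statistics via the classical H\'ajek--\v{S}id\'ak score-approximation facts (exploiting the independence of the two marginal rankings, which is exactly what the paper's Lemma A.1 packages), and invoke \citet{hw2003} to identify the conditioned central sequence with the semiparametrically efficient one for part (ii). The only noticeable divergence is the diagonal: you show the exact-score diagonal entries vanish identically through the permutation-sum identity $\sum_{r=1}^n \mathrm{E}[J_{f_j}(U_{(r)}\n)F_j\inv(U_{(r)}\n)]=n\int_0^1 J_{f_j}F_j\inv=n$, whereas the paper only establishes $o_{L^2}(1)$ via the H\'ajek projection theorem and Riemann-sum rates --- your version is slightly cleaner, but the substance is the same.
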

% ---------------------------------------------------

  Consequently, $\utDelta\n_ {\mathbf{L}; f}$ can be used to construct semiparametrically efficient (at  $f$, irrespective of $\pmb \mu$) estimation procedures for $\mathbf{L}$. 
Contrary to those based on $\mathbf{\Delta}^{(n) \ast}_{\mathbf{L}, \pmb \mu; f}$, the $R$-estimators derived from $\utDelta\n_ {\mathbf{L}; f}$ remain root-$n$ consistent, though, under most component densities $g \in \mathcal{F}_{\text{ULAN}}$, $g\neq f$. 
And, unlike those proposed by~Ilmonen and Paindaveine~(2011), they do not require $f$ nor $g$ to be symmetric. 

  The asymptotic representation for the rank-based central sequence $\utDelta\n_ {\mathbf{L}; f}$~ under  $\mathrm{P}\n_{\pmb \mu, \mathbf{L}; g}$   where~$g \in \mathcal{F}_0$ is not necessarily equal to $f \in \fulan$ is provided in the next proposition. 
  If, additionally, $g \in \fulan$, the asymptotic distribution for $\utDelta\n_ {\mathbf{L}; f}$ can be made explicit. For every $p \ne q \in \left\{1, \ldots, k \right\}$, let \vspace{-2mm}
  \begin{eqnarray*}
\gamma^{\ast}_{pq}(f, g) &:=&   \int_0^1 \varphi_{f_p}\big( F_p^{-1}(u) \big) \varphi_{g_p}\big( G_p^{-1}(u) \big) \mathrm{d}u \Big( \smallint_0^1 F_q^{-1}(u) G_q^{-1}(u) \mathrm{d}u - \alpha_{f_q} \alpha_{g_q} \Big) \\
\text{and} \quad \rho^{\ast}_{pq}(f, g) &:=&  \int_0^1 F_p^{-1}(u) \varphi_{f_p}\big( G_p^{-1}(u) \big) \mathrm{d}u  \smallint_0^1 \varphi_{f_q} \big( F_q^{-1}(u) \big) G_q^{-1}(u) \mathrm{d}u.  
\end{eqnarray*}
The quantities $\gamma^{\ast}_{pq}(f, g) $ and $\rho^{\ast}_{pq}(f, g)$ are referred to as \emph{cross-information} quantities; note that $\gamma^{\ast}_{pq} (f, f) = \gamma_{pq}(f) - \varrho_{pqq}(f)$ and $\rho^{\ast}_{pq}(f, f) = 1$.  
 Then, define\vspace{-2mm}
   \begin{equation}
\label{eq:CrossInfMatrix}
\mathbf{\Gamma}^{\ast}_{ \mathbf{L}; f, g} :=  \mathbf{C} \left( \mathbf{I}_k \otimes \mathbf{L}^{-1}\right)^{\prime} \tilde{\mathbf{G}}_{f, g} \left( \mathbf{I}_k \otimes \mathbf{L}^{-1}\right) \mathbf{C}^{\prime}\vspace{-3mm}
\end{equation}
where $ \tilde{\bf G}_{f,g} :=  \sum_{ {p\neq q=1 } }^k  \gamma^{\ast}_{sr} (f, g) \big(  {\bf e}_p {\bf e}_p\pr \otimes {\bf e}_q {\bf e}_q\pr  \big)+ \rho^{\ast}_{pq}(f,g) \big( {\bf e}_p {\bf e}_q\pr \otimes {\bf e}_q {\bf e}_p\pr \big)$, 
and write $\mathbf{\Gamma}^{\ast}_{ \mathbf{L}, f}$ for~$\mathbf{\Gamma}^{\ast}_{ \mathbf{L}, f, f}$.
Remark that $\mathbf{\Gamma}^{\ast}_{ \mathbf{L}, f, g} $ depends on $g$ only through % the cross-information quantities
  $\gamma^{\ast}_{pq}(f, g) $ and $\rho^{\ast}_{pq}(f, g)$.

%
% ----- ----- ----- ----- ----- ----- ----- ----- ----- ----- ----- ----- ----- ----- ------
\begin{proposition} 
\label{prop:AsymptoticRepresentation}
Fix $f \in \fulan$,  $\pmb{\mu} \in \mathbb{R}^k$, and ~$\mathbf{L} \in \mathcal{M}_{k}^1$; with ${\bf Z}_i\n\! := \mathbf{Z}_i^{(n)}\big( {\pmb \mu}, \mathbf{L} \big)$ defined in~(\ref{eq:residuals}), let $\widetilde{\mathbf{J}}_f\n\! := \frac{1}{n}  \sum_{i=1}^n  \mathbf{J}_f \big(  {\bf G} \big( {\bf Z}_i\n \big)  \big)$ and $\widetilde{{\bf F}}^{-1 (n)} \! := \frac{1}{n} \sum_{i=1}^n  \mathbf{F}\inv \big(  {\bf G} \big( {\bf Z}_i\n \big) \big)$. Then, 
\begin{enumerate}
\item[(i)]  If $g \in \mathcal{F}_0$,  
 $%\begin{equation}
%\label{eq:prop3eq}
\utDelta\n_ {\mathbf{L}; f} = \pmb{\Delta}^{\diamond (n)}_{ {\bf L}, {\pmb \mu}; f, g}  + o_{L^2}(1)
$ %\end{equation}
 as $n \to \infty$, under $\mathrm{P}^{(n)}_{\pmb \mu, {\bf L}, g}$, where 
  \vspace{-4mm}
  $$\pmb{\Delta}^{\diamond (n)}_{ {\bf L}, {\pmb \mu}; f, g} := \mathbf{C} \left( \mathbf{I}_k  \otimes \mathbf{L}^{-1} \right)^{\prime} \mathrm{vec} \Big[ {\bf T}^{\diamond (n)}_{ {\bf L}, {\pmb \mu}; f, g} \Big], \quad\text{and}
\vspace{-9mm}  $$%and \vspace{-2mm}
  \begin{equation}
\label{AsymptoticRankScoreMatrix}
{\bf T}^{\diamond (n)}_{ {\bf L}, {\pmb \mu}; f, g} := \mathrm{odiag}\big[ n^{-\half} \sum_{i=1}^n  \big( \mathbf{J}_f (  {\bf G}( {\bf Z}_i\n )  ) \mathbf{F}^{-1\prime}(  {\bf G}( {\bf Z}_i\n ) ) -  \widetilde{\mathbf{J}}_f\n \widetilde{{\bf F}}^{-1 (n) \prime} \big) \big].
  \vspace{-4mm}\end{equation}
 \item[(ii)] Suppose furthermore that $g \in \fulan$, and fix ~$\pmb{\tau} \in \mathbb{R}^{k(k-1)}$ so that ~$\mathbf{L}  + n^{-\half} \matd ( \pmb{\tau}) \in~\mathcal{M}_k^1$. Then,
$\utDelta\n_ { \mathbf{L}; f} \overset{\mathcal{L}}{\longrightarrow} \mathcal{N}_{k(k-1)} \left(  \mathbf{\Gamma}_{\mathbf{L}; f, g}^{\ast}  \pmb{ \tau}, \mathbf{\Gamma_{ \mathbf{L}; f}^{\ast}} \right)$ as $n \to \infty$, under $\mathrm{P}_{ \pmb{\mu},  \mathbf{L}  + n^{-\half} \matd ( \pmb{\tau} ); g}^{(n)}$ with~$\mathbf{\Gamma}_{\mathbf{L}; f, g}^{\ast}$ defined in (\ref{eq:CrossInfMatrix}). 
If ${\pmb \tau} = {\bf 0}_{k(k-1)}$,  $g \in \mathcal{F}_0$  is sufficient  for this convergence to hold. \vspace{-1mm}
\item[(iii)] If, again, $g \in \fulan$ and  $\pmb{\tau} \in \mathbb{R}^{k(k-1)}$ is as defined in (ii), then, as $n \to \infty$, under~$\mathrm{P}_{ \pmb{\mu}, \mathbf{L}; g}^{(n)}$, \vspace{-4mm}
\begin{equation}
\label{eq:AsymptoticLinearity}
\utDelta\n_ {  \mathbf{L}  + n^{-\half} \matd ( \pmb{\tau}) ; f} - \utDelta\n_ { \mathbf{L}; f} = - \mathbf{\Gamma}_{\mathbf{L}; f, g}^{\ast}  \pmb{ \tau} + o_{\mathrm{P}}(1) .
\end{equation}
\end{enumerate}
\end{proposition}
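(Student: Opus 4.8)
The plan is to establish part~(i) by a H\'ajek-type asymptotic representation for (bi)linear rank statistics, then deduce part~(ii) from (i) together with Le~Cam's third lemma and the ULAN expansion of Proposition~\ref{prop:ParametricULAN} applied with $g$ in place of $f$, and finally obtain part~(iii) from (i), (ii) and the contiguity provided by ULAN at $g$.

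For part~(i): since the prefactor $\mathbf{C}(\mathbf{I}_k\otimes\mathbf{L}\inv)\pr$ is deterministic, it suffices to show $\mathrm{vec}\big[\utT_{{\bf L}; f}\n\big]=\mathrm{vec}\big[{\bf T}^{\diamond (n)}_{ {\bf L}, {\pmb \mu}; f, g}\big]+o_{L^2}(1)$ under $\mathrm{P}\n_{\pmb{\mu},\mathbf{L};g}$, and, because of the $\mathrm{odiag}[\cdot]$ operator, only the off-diagonal entries $(p,q)$, $p\ne q$, need attention. Under $\mathrm{P}\n_{\pmb{\mu},\mathbf{L};g}$ the residuals $\mathbf{Z}_i\n := \mathbf{Z}_i^{(n)}(\pmb{\mu},\mathbf{L})$ are i.i.d.\ with independent marginals, so $\utR \n_{i,p}(\mathbf{L})$ and $\utR \n_{i,q}(\mathbf{L})$ are $\tfrac{n}{n+1}$ times the empirical-c.d.f.\ values at two \emph{independent} i.i.d.\ uniform samples $U\n_{i,p}:=G_p\big((\mathbf{Z}_i\n)_p\big)$ and $U\n_{i,q}:=G_q\big((\mathbf{Z}_i\n)_q\big)$. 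The rank-to-oracle substitution is then carried out one coordinate at a time: conditioning on the $q$-sample turns the $(p,q)$ entry into a linear rank statistic in the $p$-sample, with score-generating function $J_{f_p}$ and (random, but summing to a deterministic constant) regression constants $F_q\inv\big(\utR \n_{\cdot,q}(\mathbf{L})\big)$, and H\'ajek's classical representation then replaces $J_{f_p}\big(\utR \n_{i,p}(\mathbf{L})\big)$ by $J_{f_p}(U\n_{i,p})$ up to an $o_{L^2}(1)$ remainder --- assumption~(A5), which makes $\varphi_{f_j}$, hence $J_{f_j}$, a difference of two monotone functions, is exactly what licenses this step. Symmetrising the roles of $p$ and $q$ disposes of the (monotone, hence locally of bounded variation) factor $F_q\inv$. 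The H\'ajek projection terms either telescope with the empirical centerings $\overline{\mathbf{J}}_f\n,\overline{{\bf F}\inv}\n$ and their oracle analogues $\widetilde{\mathbf{J}}_f\n,\widetilde{{\bf F}}^{-1 (n)}$, or are $o_{L^2}(1)$ by the finite-variance bounds of (A4) ($\mathcal{I}_{f_p}$ and $s^2_{f_q}$ finite). This rank-to-oracle replacement --- with scores only (A5)-regular and possibly unbounded, and a product (bilinear) structure in two independent samples --- is the core technical step and the main obstacle; it is the asymmetric, plain-rank (rather than signed-rank) analogue of the computation in \citet{ip2011}, the new feature being that the empirical centering terms no longer vanish and must be tracked.

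For part~(ii): take first $\pmb{\tau}=\mathbf{0}$, i.e.\ work under $\mathrm{P}\n_{\pmb{\mu},\mathbf{L};g}$. Multiplied by $n^{\half}$, the centering term $\widetilde{\mathbf{J}}_f\n \widetilde{{\bf F}}^{-1 (n) \prime}$ effectively de-means the $F_q\inv$ factor, so that $\pmb{\Delta}^{\diamond(n)}_{\mathbf{L},\pmb{\mu};f,g}$ is, up to $o_{\mathrm{P}}(1)$, an $n^{-\half}$-sum of i.i.d.\ mean-zero $k(k-1)$-vectors; the multivariate CLT (Lindeberg being immediate from (A4)) together with (i) then gives $\utDelta\n_{\mathbf{L};f}\overset{\mathcal{L}}{\longrightarrow}\mathcal{N}_{k(k-1)}(\mathbf{0},\mathbf{\Gamma}^{\ast}_{\mathbf{L};f})$, the limiting covariance being, across the independent coordinates, a product of expectations involving $J_{f_p}$ and the de-meaned $F_q\inv$ at independent uniforms, which reproduces the coefficients $\gamma^{\ast}_{pq}(f,f)$ and $\rho^{\ast}_{pq}(f,f)$ of $\tilde{\mathbf{G}}_{f,f}$, hence $\mathbf{\Gamma}^{\ast}_{\mathbf{L};f}$ after pre- and post-multiplication by $\mathbf{C}(\mathbf{I}_k\otimes\mathbf{L}\inv)\pr$ and its transpose; note that only $g\in\mathcal{F}_0$ is used here. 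When $g\in\fulan$ and $\pmb{\tau}\ne\mathbf{0}$, Proposition~\ref{prop:ParametricULAN} applied with $g$ in place of $f$ expands $\log\big(\mathrm{d}\mathrm{P}\n_{\pmb{\mu},\mathbf{L}+n^{-\half}\matd(\pmb{\tau});g}/\mathrm{d}\mathrm{P}\n_{\pmb{\mu},\mathbf{L};g}\big)$ as $\pmb{\tau}\pr\pmb{\Delta}\n_{\mathbf{L};\pmb{\mu},g}-\half\pmb{\tau}\pr\mathbf{\Gamma}_{\mathbf{L};g}\pmb{\tau}+o_{\mathrm{P}}(1)$, and the pair $\big(\pmb{\Delta}^{\diamond(n)}_{\mathbf{L},\pmb{\mu};f,g},\pmb{\Delta}\n_{\mathbf{L};\pmb{\mu},g}\big)$ is jointly asymptotically normal under the null (both being i.i.d.\ sums), with cross-covariance equal to $\mathbf{\Gamma}^{\ast}_{\mathbf{L};f,g}$ --- this is precisely the computation turning $\int_0^1\varphi_{f_p}(F_p\inv)\varphi_{g_p}(G_p\inv)$ and $\int_0^1F_q\inv G_q\inv-\alpha_{f_q}\alpha_{g_q}$ into $\gamma^{\ast}_{pq}(f,g)$, and similarly for $\rho^{\ast}_{pq}(f,g)$. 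Le~Cam's third lemma then delivers $\utDelta\n_{\mathbf{L};f}\overset{\mathcal{L}}{\longrightarrow}\mathcal{N}_{k(k-1)}\big(\mathbf{\Gamma}^{\ast}_{\mathbf{L};f,g}\pmb{\tau},\mathbf{\Gamma}^{\ast}_{\mathbf{L};f}\big)$ under $\mathrm{P}\n_{\pmb{\mu},\mathbf{L}+n^{-\half}\matd(\pmb{\tau});g}$.

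For part~(iii): apply (i) at both $\mathbf{L}$ and $\mathbf{L}+n^{-\half}\matd(\pmb{\tau})$; the application at the perturbed value, valid under $\mathrm{P}\n_{\pmb{\mu},\mathbf{L}+n^{-\half}\matd(\pmb{\tau});g}$, transfers to an $o_{\mathrm{P}}(1)$ statement under $\mathrm{P}\n_{\pmb{\mu},\mathbf{L};g}$ by the mutual contiguity of those two sequences (a consequence of ULAN at $g$). Hence, under $\mathrm{P}\n_{\pmb{\mu},\mathbf{L};g}$,
\[
\utDelta\n_{\mathbf{L}+n^{-\half}\matd(\pmb{\tau});f}-\utDelta\n_{\mathbf{L};f}=\pmb{\Delta}^{\diamond(n)}_{\mathbf{L}+n^{-\half}\matd(\pmb{\tau}),\pmb{\mu};f,g}-\pmb{\Delta}^{\diamond(n)}_{\mathbf{L},\pmb{\mu};f,g}+o_{\mathrm{P}}(1),
\]
and, since $\mathbf{Z}_i\n(\pmb{\mu},\mathbf{L}+n^{-\half}\matd(\pmb{\tau}))=\big(\mathbf{I}_k+n^{-\half}\mathbf{L}\inv\matd(\pmb{\tau})\big)\inv\mathbf{Z}_i\n(\pmb{\mu},\mathbf{L})$ exactly, while the change in the prefactor is $O(n^{-\half})$ --- hence $o_{\mathrm{P}}(1)$ once multiplied by the $O_{\mathrm{P}}(1)$ statistic --- the right-hand side reduces to a non-rank average whose drift is obtained by a first-order expansion, in $\mathbf{L}$, of the integrand $\mathbf{z}\mapsto\mathbf{J}_f(\mathbf{G}(\mathbf{z}))\mathbf{F}^{-1}(\mathbf{G}(\mathbf{z}))\pr$ --- legitimate under the joint regularity of $f,g\in\fulan$, with (A5) serving to bypass differentiability of $\mathbf{J}_f$ by writing it as a difference of monotone functions --- together with a law of large numbers for the resulting derivative averages and the usual integration-by-parts identities ($\int\varphi_{f_j}f_j=0$, $\int z\,\varphi_{f_j}f_j=1$), which identify the drift as $-\mathbf{\Gamma}^{\ast}_{\mathbf{L};f,g}\pmb{\tau}=-\mathbf{C}(\mathbf{I}_k\otimes\mathbf{L}\inv)\pr\tilde{\mathbf{G}}_{f,g}(\mathbf{I}_k\otimes\mathbf{L}\inv)\mathbf{C}\pr\pmb{\tau}$. (Alternatively, the asymptotic linearity (\ref{eq:AsymptoticLinearity}) can be read off directly from~(ii) by the standard argument for rank-based central sequences, since $\utDelta\n_{\mathbf{L};f}$ is, by~(i), asymptotically equivalent to an i.i.d.\ sum and, by~(ii), asymptotically normal under every contiguous alternative with fixed variance and mean linear in $\pmb{\tau}$.) The agreement of the two expressions obtained for $\mathbf{\Gamma}^{\ast}_{\mathbf{L};f,g}$ --- as a cross-covariance in~(ii) and as a derivative drift in~(iii) --- is the familiar score/information identity of LAN theory.
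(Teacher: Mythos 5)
Your proposal follows essentially the same route as the paper: part (i) rests on the H\'ajek--\v{S}id\'ak $L^2$ representation for bilinear rank statistics built from two independent samples (the paper packages this as its Lemma A.1, proved via antiranks and Theorems V.1.5a/V.1.6a of H\'ajek and \v{S}id\'ak, which is the same device as your coordinate-wise conditioning with the $F_q^{-1}$ values playing the role of regression constants, and (A5) plus the Noether condition doing the same work), while parts (ii)--(iii) are obtained, exactly as in the paper, from the joint asymptotic normality of $\pmb{\Delta}^{\diamond(n)}_{\mathbf{L},\pmb{\mu};f,g}$ with the ULAN log-likelihood ratio at $g$, Le~Cam's third lemma, and the cross-covariance computation that produces $\tilde{\mathbf{G}}_{f,g}$. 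The only divergence is that your primary argument for (iii) is a direct first-order expansion of the oracle statistic (which needs some care since (A5) does not literally give differentiability of the scores), whereas the paper defers to the Ilmonen--Paindaveine linearity argument driven by the same covariance evaluation --- but your parenthetical alternative is precisely that route, so nothing essential is missing.
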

% ----- ----- ----- ----- ----- ----- ----- ----- ----- ----- ----- ----- ----- ----- ------
In Section \ref{sec:OneStepREstimation}, our $R$-estimation procedures require evaluating the $f$-score rank-based central sequence, for $f \in \fulan$, at a preliminary root-$n$ consistent estimator $\tilde{\bf L}\n$ of $\bf L$.
The asymptotic impact of substituting $\tilde{\bf L}\n$ for ${\bf L}$ does not directly follow from Proposition~\ref{prop:AsymptoticRepresentation}(iii) because the perturbation $\pmb{\tau}$   in (\ref{eq:AsymptoticLinearity}) is a deterministic quantity.   Lemma 4.4 in  \citet{k1987} provides sufficient conditions for Proposition \ref{prop:AsymptoticRepresentation}(iii) to hold when replacing $\pmb{\tau}$ with a sequence of random vectors, $\tilde{\pmb{\tau}}\n$, $n \in \mathbb{N}$. More precisely, if 
  \begin{enumerate}
  \item[(C1a)]    $\tilde{\pmb{\tau}}\n =  O_{\mathrm{P}}(1)$, as $n \to \infty$, and  
  \item[(C1b)] 
there exists an integer $N < \infty$ so that, for all $n \geq N$, $\tilde{\pmb{\tau}}\n$ can take, at most, a finite number of values within any  bounded ball    centered at the origin in $\mathbb{R}^{k(k-1)}$, \vspace{-2mm}
  \end{enumerate}
 hold, then (\ref{eq:AsymptoticLinearity}) is still valid with  $\pmb{\tau}$  replaced by $\tilde{\pmb{\tau}}\n$. 

Let $\tilde{\bf L}\n \in \mathcal{M}_k^1$ be an estimator for ${\bf L}$. We say that it is \emph{root-$n$ consistent} under $\mathrm{P}\n_{{\pmb \mu}, {\bf L}; g}$ and \emph{locally asymptotically discrete}  if  $n^{\half} \vecd \big( \tilde{\bf L}\n - {\bf L} \big)$ satisfies (C1a) under $\mathrm{P}\n_{{\pmb \mu}, {\bf L}; g}$ and (C1b). Proposition \ref{prop:AsymptoticRepresentation}(iii) and Lemma 4.4 from \citet{k1987}  then yield the following corollary. 
 
\begin{corollary} 
\label{cor:kreissresult}
Fix $\pmb{\mu} \in \mathbb{R}^k$, ${\bf L} \in \mathcal{M}_k^1$ and $f, g \in \fulan$. 
Suppose that $\tilde{\bf L}\n$ is root-$n$ consistent under $\mathrm{P}\n_{\pmb{\mu}, {\bf L}; g}$ and locally asymptotically discrete. 
Then, under $\mathrm{P}\n_{\pmb{\mu}, {\bf L}; g}$, as $n \to \infty$,\vspace{-3mm}
\begin{equation}
\label{eq:Kreiss}
\utDelta\n_ { \tilde{\bf L}\n, f} - \utDelta\n_ { \mathbf{L}, f} = - \mathbf{\Gamma}_{\mathbf{L}, f, g}^{\ast}  \vecd \big( \tilde{\bf L}\n  - {\bf L} \big) + o_{\mathrm{P}}(1).
\vspace{-2mm}\end{equation}
\end{corollary}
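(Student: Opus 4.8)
\textbf{Proof proposal for Corollary \ref{cor:kreissresult}.}
The plan is to invoke Proposition~\ref{prop:AsymptoticRepresentation}(iii) together with the discretization argument of Lemma~4.4 in \citet{k1987}, applied along the two conditions (C1a)--(C1b) that have just been isolated. First I would write $\tilde{\pmb{\tau}}\n := n^{\half} \vecd ( \tilde{\bf L}\n - {\bf L} )$, so that $\tilde{\bf L}\n = {\bf L} + n^{-\half} \matd ( \tilde{\pmb{\tau}}\n )$, and observe that the asserted root-$n$ consistency of $\tilde{\bf L}\n$ under $\mathrm{P}\n_{\pmb{\mu}, {\bf L}; g}$ is exactly condition (C1a), while local asymptotic discreteness is exactly (C1b). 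The identity to be proven then reads, in the notation of (\ref{eq:AsymptoticLinearity}), $\utDelta\n_{\mathbf{L} + n^{-\half}\matd(\tilde{\pmb{\tau}}\n); f} - \utDelta\n_{\mathbf{L}; f} = - \mathbf{\Gamma}^{\ast}_{\mathbf{L}; f, g} \tilde{\pmb{\tau}}\n + o_{\mathrm{P}}(1)$, i.e.\ it is (\ref{eq:AsymptoticLinearity}) with the deterministic $\pmb{\tau}$ replaced by the random $\tilde{\pmb{\tau}}\n$.

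The key steps are as follows. (1) Fix an arbitrary $\eps > 0$ and, using (C1a), choose $M < \infty$ with $\mathrm{P}\n_{\pmb{\mu},\mathbf{L};g}(\|\tilde{\pmb{\tau}}\n\| > M) < \eps$ for all large $n$; this lets us restrict attention to the bounded ball $B_M := \{\pmb{\tau} : \|\pmb{\tau}\| \le M\}$. (2) By (C1b), for $n \ge N$ the variable $\tilde{\pmb{\tau}}\n$ takes at most finitely many values $\pmb{\tau}\n_1, \ldots, \pmb{\tau}\n_{m_n}$ in $B_M$ (with $m_n$ bounded); on the event $\{\tilde{\pmb{\tau}}\n = \pmb{\tau}\n_\ell\}$ the left-hand side of the claimed identity coincides with the left-hand side of (\ref{eq:AsymptoticLinearity}) evaluated at the \emph{deterministic} perturbation $\pmb{\tau}\n_\ell$, to which Proposition~\ref{prop:AsymptoticRepresentation}(iii) applies and gives a remainder that is $o_{\mathrm{P}}(1)$. (3) Assemble a uniform bound: the remainder $\mathbf{r}\n(\tilde{\pmb{\tau}}\n) := \utDelta\n_{\mathbf{L}+n^{-\half}\matd(\tilde{\pmb{\tau}}\n);f} - \utDelta\n_{\mathbf{L};f} + \mathbf{\Gamma}^{\ast}_{\mathbf{L};f,g}\tilde{\pmb{\tau}}\n$ satisfies, on $\{\|\tilde{\pmb{\tau}}\n\|\le M\}$, $\|\mathbf{r}\n(\tilde{\pmb{\tau}}\n)\| \le \max_{1\le \ell \le m_n} \|\mathbf{r}\n(\pmb{\tau}\n_\ell)\|$, and a finite maximum of $o_{\mathrm{P}}(1)$ terms (with the number of terms controlled uniformly by the finiteness in (C1b)) is again $o_{\mathrm{P}}(1)$. (4) Combining (1) and (3), for every $\delta,\eps>0$ we get $\mathrm{P}\n_{\pmb{\mu},\mathbf{L};g}(\|\mathbf{r}\n(\tilde{\pmb{\tau}}\n)\| > \delta) < \eps + o(1)$, and letting $n\to\infty$ then $\eps\downarrow 0$ yields $\mathbf{r}\n(\tilde{\pmb{\tau}}\n) = o_{\mathrm{P}}(1)$, which is precisely (\ref{eq:Kreiss}). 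Finally I would note that applying Proposition~\ref{prop:AsymptoticRepresentation}(iii) requires $g \in \fulan$ (so that its hypotheses, in particular the validity of (\ref{eq:AsymptoticLinearity}), are met) and $f \in \fulan$ satisfying (A5); both are guaranteed by the hypotheses of the corollary (with (A5) implicit in $\fulan$-membership in the way $\utDelta\n_{\mathbf{L};f}$ is used in Proposition~\ref{prop:RankCenSeqEquivalence}).

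The main obstacle — really the only nontrivial point — is step (3), the passage from the pointwise (in $\pmb{\tau}$) asymptotic linearity of Proposition~\ref{prop:AsymptoticRepresentation}(iii) to a statement uniform over the random, discretely-valued $\tilde{\pmb{\tau}}\n$. This is exactly what (C1b) is designed to handle: discreteness converts a potential supremum over an uncountable set (which would require a genuine uniform-in-$\pmb{\tau}$ asymptotic linearity result, considerably harder to establish directly) into a maximum over finitely many values, for which the elementary fact ``a maximum of a bounded number of $o_{\mathrm{P}}(1)$ sequences is $o_{\mathrm{P}}(1)$'' suffices. This is precisely the content of Lemma~4.4 in \citet{k1987}, so in the write-up I would state the reduction explicitly and then simply cite that lemma rather than reprove the discretization device. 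The rest — the rewriting of $\tilde{\bf L}\n$ in terms of $\tilde{\pmb{\tau}}\n$, the identification of (C1a)/(C1b) with the corollary's hypotheses, and the $\eps$-truncation to a bounded ball — is routine bookkeeping. \cqfd
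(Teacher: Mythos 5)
Your proposal is correct and follows exactly the route the paper intends: the paper gives no separate proof of Corollary~\ref{cor:kreissresult}, stating only that it follows from Proposition~\ref{prop:AsymptoticRepresentation}(iii) together with Lemma~4.4 of \citet{k1987} once root-$n$ consistency and local asymptotic discreteness are identified with (C1a) and (C1b). Your explicit unpacking of the discretization device (truncation to a bounded ball, reduction to a finite maximum of $o_{\mathrm{P}}(1)$ remainders) is precisely the content of Kreiss's lemma, so citing it as you propose is exactly what the paper does.
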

The asymptotic discreteness requirement for the preliminary estimator is not overly restrictive. 
Any root-$n$ consistent sequence %of preliminary estimators
  $\tilde{\bf L}\n := (\tilde{L}_{rs}\n ) \in \mathcal{M}_k^1$ indeed can be discretized as $\tilde{\bf L}\n_{\#} := (\tilde{L}_{rs; \#}\n )$, with 
 $
\tilde{L}\n_{rs; \#} := \big(c n^{\half} \big)^{-1} \mathrm{sign} \big( \tilde{L}_{rs}\n \big)  \Big \lceil c n^{\half} \big| \tilde{L}_{rs}\n \big| \Big \rceil $, for   $r \ne s \in \{1, \ldots, k\}$, where~$c > 0$ is an arbitrary constant and $\lceil x \rceil$ denotes the smallest integer greater than or equal to $x$.  
The root-$n$ consistency properties of $\tilde{\bf L}\n$ carry over to $\tilde{\bf L}\n_{\#}$ which by construction is locally asymptotically discrete and, because  $\mathcal{M}_k^1$ is a compact subset of $\mathbb{R}^{k(k-1)}$, still takes values in  $\mathcal{M}_k^1$. \vspace{-2mm}

\section{$R$-estimation of the mixing matrix}\label{sec:OneStepREstimation}
% ----- ----- ----- ----- ----- ----- ----- ----- ----- ----- ----- ----- ----- ----- ------
Assume that a rank test rejects $\text{H}_0: \pmb{\theta} =  \pmb{\theta}_0$ against the alternative $\text{H}_1: \pmb{\theta} \ne \pmb{\theta}_0$ for large values of  some test statistic $Q_{{\pmb \theta}_0} \big(  {\bf R}\n \big( \pmb{\theta}_0 \big) \big)$ measurable with respect to the ranks $ {\bf R}\n \big( \pmb{\theta}_0)$ of residuals ${\bf Z}\n (\pmb{\theta}_0 ) := \big( {\bf Z}\n_1 (\pmb{\theta}_0), \ldots , {\bf Z}\n_n (\pmb{\theta}_0) \big)\pr$, which are~i.i.d. if and only if ${\pmb \theta} = {\pmb \theta}_0$.  The original $R$-estimator  for   ${\pmb \theta} \in \Theta$, as proposed by~Hodges and Lehmann~(1963), is  defined as~$\hat{\pmb{\theta} }\n_{\text{\tiny{HL}}} := \arg \min_{ \pmb{\theta} \in \Theta } Q_{{\pmb \theta}}\n \left(  {\bf R}\n \left( \pmb{\theta} \right) \right) $.

Even for   simple problems such as location, regression, etc. involving a low-dimensional parameter ${\pmb \theta}$, minimizing   $Q\n_{{\pmb \theta}} \big(  {\bf R}\n \big( \pmb{\theta} \big) \big)$ is wrought with difficulty---as a function of $\pmb{\theta}$, it is piecewise constant, discontinuous, and   non-convex. 
In the present case of a~$k(k-1)$-dimensional parameter space $\mathcal{M}_k^1$, solving this problem typically would require an infeasible grid-search in relatively high dimension. 

As an alternative, we consider the one-step $R$-estimators described in Hallin et al.~(2006) and \citet{hp2013}, that lead to expedient computation, and provide a consistent estimator for the asymptotic covariance matrix as a by-product. 
Those one-step estimators are computed from a preliminary root-$n$ consistent estimator $\tilde{\bf L}\n$ and the resulting value $\utDelta\n_ {\tilde{\bf L}\n; f}$ of the rank-based central sequence  associated  with some   reference density~$f \in \fulan$ satisfying~(A5). \vspace{-2mm}

% ----- ----- ----- ----- ----- ----- ----- ----- ----- ----- ----- ----- ----- ----- ------
\subsection{One-step $R$-estimation}\label{subsec:onestepRest}
For fixed $f \in \fulan$, assume that\vspace{-1mm}
\begin{enumerate}
\item[(C1)] there exists a sequence of estimators $\tilde{\bf L}\n \in \mathcal{M}_k^1$ of the parameter ${\bf L} \in \mathcal{M}_k^1$ that are both root-$n$ consistent and locally asymptotically discrete, under $\mathrm{P}\n_{{\pmb \mu}, {\bf L}; g}$ for any ${\pmb \mu} \in \mathbb{R}^k$, ${\bf L} \in \mathcal{M}_k^1$, and $g \in \fulan$, and, furthermore,\vspace{-2mm}
\item[(C2)] 
for all $p \ne q \in \left\{ 1, \ldots, k \right\}$, there exist consistent (under   $\mathcal{P}_g\n$ for every $g \in \mathcal{F}_{\text{\tiny{ULAN}}}$) and locally asymptotically discrete sequences $\hat{\gamma}^{\ast}_{pq }(f) $ and $\hat{\rho}^{\ast}_{pq }(f) $ of estimators for the cross-information quantities $\gamma^{\ast}_{pq}(f,g)$ and $\rho^{\ast}_{pq}(f,g)$.\vspace{-1mm}
\end{enumerate}

For any $f \in \mathcal{F}_{\text{\tiny{ULAN}}}$,  the one-step $R$-estimator for $\mathbf{L} \in \mathcal{M}_k^1$ based on $f$-scores is the $k \times k$ matrix $\utLbf_f\n \in \mathcal{M}_k^1$ defined by\vspace{-3mm}
\begin{equation}
\label{MM:onestep}
\text{vecd}^{\circ} \big( \utLbf_f\n  \big) = \text{vecd}^{\circ} \big( \tilde{\mathbf{L}}\n \big) + n^{-\half} \big( \hat{\mathbf{\Gamma}}_{\tilde{\mathbf{L}}\n; f}^{\ast} \big)^{-1}  \utDelta_{ \tilde{\mathbf{L}}^{(n)}; f},\vspace{-4mm}
\end{equation}
where $\hat{\mathbf{\Gamma}}_{\tilde{\mathbf{L}}\n; f}^{\ast}$ is a consistent estimate of $\mathbf{\Gamma}_{ \mathbf{L}; f, g}^{\ast}$. 
This estimator is constructed by plugging~$\hat{\gamma}^{\ast}_{pq}(f)$ and $\hat{\rho}^{\ast}_{pq}(f)$ into     (\ref{eq:CrossInfMatrix}). 
Under Assumptions (C1) and (C2),  $\hat{\mathbf{\Gamma}}_{\tilde{\mathbf{L}}\n; f}^{\ast}$ is a consistent estimate of $\mathbf{\Gamma}_{ \mathbf{L}; f, g}^{\ast}$. 
The procedure for obtaining each estimate $\hat{\gamma}^{\ast}_{pq}(f)$ and~$\hat{\rho}^{\ast}_{pq}(f)$ satisfying~(C2) is discussed in Section~\ref{sec:CIQ}.
The next proposition establishes the asymptotic distribution of~$\,\utLbf\n_f$; its proof parallels that of Theorem 5.1 in \citet{ip2011}.
% Proposition: Statement of Asymptotic Distribution of R-Estimator
\begin{proposition}
\label{prop:REstAsymptoticDistribution}
Fix a reference density $f \in \mathcal{F}_{\text{\tiny{ULAN}}}$. Then,\vspace{-2mm}
\begin{enumerate} 
\item[(i)] for any $\pmb{\mu} \in \mathbb{R}^k$, $\mathbf{L} \in \mathcal{M}_k^1$, and $g \in \mathcal{F}_{\text{\tiny{ULAN}}} $, the one-step $R$-estimator (\ref{MM:onestep}) is such that\vspace{-3mm}
\begin{equation}
\label{eq:REstAsymptoticDistribution}
n^{\half} \mathrm{ vecd}^{\circ}\left( \utLbf\n_f - \mathbf{L} \right)  \overset{\mathcal{L}}{\longrightarrow} \mathcal{N}_{k(k-1)} \left( \mathbf{0}, \left( \mathbf{\Gamma}_{{\bf L}; f,g }^{\ast} \right)\inv \mathbf{\Gamma}_{\mathbf{L}; f}^* \left( \mathbf{\Gamma}_{{\bf L}; f,g }^{\ast} \right)\inv \right)
\vspace{-3mm}\end{equation}
as $n \to \infty$, under $\mathrm{P}^{(n)}_{\pmb{\mu}, \mathbf{L}, g}$;
\item[(ii)] if, moreover, $f = g$, then $\left( \mathbf{\Gamma}_{{\bf L}; f,g }^{\ast} \right)\inv \mathbf{\Gamma}_{\mathbf{L}; f}^* \left( \mathbf{\Gamma}_{{\bf L}; f,g }^{\ast} \right)\inv = \big( \mathbf{\Gamma}^*_{\mathbf{L}; f} \big)\inv$, and $\, \utLbf\n_f $ is a semiparametrically efficient (at $f$) estimate of ${\bf L}$. 
\end{enumerate}
\end{proposition}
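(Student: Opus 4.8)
\noindent\emph{Proof strategy.} The plan is to reduce the statement to the Bahadur-type asymptotic linear representation
$$ n^{\half}\vecd\big(\utLbf\n_f-{\bf L}\big)=\big(\mathbf{\Gamma}^{\ast}_{{\bf L};f,g}\big)\inv\,\utDelta\n_{{\bf L};f}+o_{\mathrm P}(1)\qquad\text{under }\mathrm{P}\n_{\pmb\mu,{\bf L};g}, $$
after which both claims follow from the asymptotic distribution of the rank-based central sequence established in Proposition~\ref{prop:AsymptoticRepresentation}. The first step is to unwind the definition~(\ref{MM:onestep}): subtracting $\vecd({\bf L})$ and multiplying by $n^{\half}$ gives
$$ n^{\half}\vecd\big(\utLbf\n_f-{\bf L}\big)=n^{\half}\vecd\big(\tilde{\bf L}\n-{\bf L}\big)+\big(\hat{\mathbf{\Gamma}}^{\ast}_{\tilde{\bf L}\n;f}\big)\inv\utDelta\n_{\tilde{\bf L}\n;f}, $$
the $n^{\pm\half}$ factors cancelling in the last term; by Assumption~(C1) the first summand on the right is $O_{\mathrm P}(1)$.

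The heart of the argument is to linearize the rank-based central sequence at $\tilde{\bf L}\n$ and to show that the preliminary contribution cancels. Because $\tilde{\bf L}\n$ is root-$n$ consistent and locally asymptotically discrete, Corollary~\ref{cor:kreissresult} applies with the random perturbation $\tilde{\pmb{\tau}}\n:=n^{\half}\vecd(\tilde{\bf L}\n-{\bf L})=O_{\mathrm P}(1)$ (this is the content of Proposition~\ref{prop:AsymptoticRepresentation}(iii) transported from a deterministic $\pmb\tau$ to a discrete random one via Lemma~4.4 of \citet{k1987}), and yields
$$ \utDelta\n_{\tilde{\bf L}\n;f}=\utDelta\n_{{\bf L};f}-\mathbf{\Gamma}^{\ast}_{{\bf L};f,g}\,n^{\half}\vecd\big(\tilde{\bf L}\n-{\bf L}\big)+o_{\mathrm P}(1), $$
where, by Proposition~\ref{prop:AsymptoticRepresentation}(ii) with $\pmb\tau=\mathbf 0$, $\utDelta\n_{{\bf L};f}=O_{\mathrm P}(1)$ under $\mathrm{P}\n_{\pmb\mu,{\bf L};g}$. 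By Assumptions~(C1)--(C2), $\hat{\mathbf{\Gamma}}^{\ast}_{\tilde{\bf L}\n;f}$ is consistent for $\mathbf{\Gamma}^{\ast}_{{\bf L};f,g}$; assuming the latter is nonsingular, $(\hat{\mathbf{\Gamma}}^{\ast}_{\tilde{\bf L}\n;f})\inv=(\mathbf{\Gamma}^{\ast}_{{\bf L};f,g})\inv+o_{\mathrm P}(1)$ and $(\hat{\mathbf{\Gamma}}^{\ast}_{\tilde{\bf L}\n;f})\inv\mathbf{\Gamma}^{\ast}_{{\bf L};f,g}=\mathbf{I}_{k(k-1)}+o_{\mathrm P}(1)$. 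Substituting the linearization into the unwound one-step identity and using Slutsky's lemma---each $o_{\mathrm P}(1)$ matrix multiplies an $O_{\mathrm P}(1)$ vector, namely $\utDelta\n_{{\bf L};f}$ or $n^{\half}\vecd(\tilde{\bf L}\n-{\bf L})$---the two copies of $n^{\half}\vecd(\tilde{\bf L}\n-{\bf L})$ cancel, leaving the representation stated above.

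Part~(i) then follows at once: by Proposition~\ref{prop:AsymptoticRepresentation}(ii) (with $\pmb\tau=\mathbf 0$), $\utDelta\n_{{\bf L};f}\overset{\mathcal L}{\longrightarrow}\mathcal N_{k(k-1)}(\mathbf 0,\mathbf{\Gamma}^{\ast}_{{\bf L};f})$ under $\mathrm{P}\n_{\pmb\mu,{\bf L};g}$, and a linear image of a Gaussian vector is Gaussian, whence $n^{\half}\vecd(\utLbf\n_f-{\bf L})\overset{\mathcal L}{\longrightarrow}\mathcal N_{k(k-1)}\big(\mathbf 0,(\mathbf{\Gamma}^{\ast}_{{\bf L};f,g})\inv\mathbf{\Gamma}^{\ast}_{{\bf L};f}(\mathbf{\Gamma}^{\ast}_{{\bf L};f,g})\inv\big)$. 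For part~(ii), when $f=g$ the identities $\gamma^{\ast}_{pq}(f,f)=\gamma_{pq}(f)-\varrho_{pqq}(f)$ and $\rho^{\ast}_{pq}(f,f)=1$ show $\mathbf{\Gamma}^{\ast}_{{\bf L};f,g}=\mathbf{\Gamma}^{\ast}_{{\bf L};f}$, so the sandwich in~(i) collapses to $(\mathbf{\Gamma}^{\ast}_{{\bf L};f})\inv$; moreover, by Proposition~\ref{prop:RankCenSeqEquivalence}(ii) $\utDelta\n_{{\bf L};f}$ is a version of the semiparametrically efficient central sequence at $(\pmb\mu,{\bf L},f)$ and $\mathbf{\Gamma}^{\ast}_{{\bf L};f}$ is the corresponding efficient information, so the representation exhibits $\utLbf\n_f$ as asymptotically linear in that efficient central sequence with coefficient $(\mathbf{\Gamma}^{\ast}_{{\bf L};f})\inv$; by the convolution/tangent-space argument of \citet{hw2003} and \citet{bkrw1993} this means $\utLbf\n_f$ attains the semiparametric efficiency bound at $f$.

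The main obstacle is the passage from Proposition~\ref{prop:AsymptoticRepresentation}(iii), which is stated for a fixed deterministic $\pmb\tau$, to its validity at the data-dependent preliminary $\tilde{\bf L}\n$; this is exactly what local asymptotic discreteness~(C1b) buys through Kreiss's Lemma~4.4, and is already packaged in Corollary~\ref{cor:kreissresult}, so the remaining care is purely to check that every vanishing factor multiplies a bounded-in-probability quantity (hence the Slutsky step is legitimate). One also needs nonsingularity of $\mathbf{\Gamma}^{\ast}_{{\bf L};f,g}$; at $f=g$ this is the positive-definiteness of the efficient information, which follows from the ULAN regularity of Proposition~\ref{prop:ParametricULAN}, and for $f\ne g$ it is implicit in the requirement that the estimators in~(C2) target a well-defined, invertible cross-information matrix. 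Everything else---the cancellation of the preliminary estimator and the passage to the Gaussian limit---is routine Slutsky and continuous-mapping bookkeeping, and the overall scheme mirrors the proof of Theorem~5.1 in \citet{ip2011}.
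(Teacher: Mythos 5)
Your proposal is correct and follows exactly the route the paper intends: the paper gives no detailed proof but states that it ``parallels that of Theorem 5.1 in \citet{ip2011}'', which is precisely the standard one-step argument you spell out --- unwind (\ref{MM:onestep}), linearize $\utDelta\n_{\tilde{\bf L}\n;f}$ via Corollary~\ref{cor:kreissresult} so the preliminary cancels, replace $\hat{\mathbf{\Gamma}}^{\ast}_{\tilde{\bf L}\n;f}$ by its probability limit, and invoke Proposition~\ref{prop:AsymptoticRepresentation}(ii); part~(ii) is then immediate since $\mathbf{\Gamma}^{\ast}_{{\bf L};f}$ is by definition $\mathbf{\Gamma}^{\ast}_{{\bf L};f,f}$. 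Your attention to the $n^{\half}$ normalization in the linearization step and to the nonsingularity of $\mathbf{\Gamma}^{\ast}_{{\bf L};f,g}$ is appropriate and consistent with the paper's framework.
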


The $R$-estimator $\utLbf\n_f$ can be written in a form that avoids inverting $\hat{\mathbf{\Gamma}}_{\tilde{\mathbf{L}}\n; f}^{\ast}$, which can be numerically singular when estimated in practice.
Define therefore the $k \times k$ matrices~$\hat{\mathcal{A}}\n_{\tilde{\mathbf{L}}\n, f} := ( \hat{\alpha}_{pq}(f) )_{p,q=1}^k$ and $\hat{\mathcal{B}}\n_{\tilde{\mathbf{L}}\n, f} := ( \hat{\beta}_{pq}(f) )_{p,q=1}^k$ with zeroes on the diagonal 
  and, for every $p \ne q \in \{1, \ldots, k\}$,\vspace{-2mm}
$$
\hat{\alpha}\n_{pq} (f)  := \frac{ \hat{\gamma}^{\ast}_{pq}(f) }{ \hat{\gamma}^{\ast}_{pq}(f)  \hat{\gamma}^{\ast}_{qp}(f) - \hat{\rho}^{\ast}_{pq}(f) \hat{\rho}^{\ast}_{qp}(f) } \quad \text{and} \quad \hat{\beta}\n_{pq} (f)  := \frac{ -\hat{\rho}^{\ast}_{pq}(f)  }{ \hat{\gamma}^{\ast}_{pq}(f)  \hat{\gamma}^{\ast}_{qp}(f) - \hat{\rho}^{\ast}_{pq}(f) \hat{\rho}^{\ast}_{qp}(f)  }. 
$$
Letting ${\bf A} \odot {\bf B} = (a_{pq} b_{pq})$ denote the Hadamard product between two matrices ${\bf A} = (a_{pq})$ and ${\bf B} = (b_{pq})$ of the same size, define\vspace{-3mm}
\begin{equation}
\label{eq:onestepv2-intermediary}
\hat{ {\bf N}}\n_{\tilde{\mathbf{L}}\n, f} := \big( \hat{ \mathcal{A}}_{\tilde{\mathbf{L}}\n, f}^{(n) \prime}   \odot \utT_{\tilde{\bf L}\n; f}\n \big) + \big(  \hat{ \mathcal{B}}_{\tilde{\mathbf{L}}\n, f}^{(n) \prime} \odot \utT_{\tilde{\bf L}\n; f}\npr \big),
\end{equation} 
with $\utT_{{\bf L}; f}\n$ defined in (\ref{rankScoreMatrix}).
Theorem 5.2 in \citet{ip2011} then implies that $\utLbf\n_f$ can be expressed as \vspace{-3mm}
\begin{equation}
\label{MM:onestepv2}
\utLbf\n_f  = \tilde{\bf L}\n + n^{-\half} \tilde{\bf L}\n \Big[ \hat{\mathbf{N}}_{\tilde{\mathbf{L}}\n, f}\n - \text{diag}\big( \tilde{\bf L}\n \hat{\mathbf{N}}_{\tilde{\mathbf{L}}\n, f}\n \big) \Big].\vspace{-4mm}
\end{equation}

% ----- ----- ----- ----- ----- ----- ----- ----- ----- ----- ----- ----- ----- ----- ------
% ----- ----- ----- ----- ----- ----- ----- ----- ----- ----- ----- ----- ----- ----- ------
% ----- ----- ----- ----- ----- ----- ----- ----- ----- ----- ----- ----- ----- ----- ------
% ----- ----- ----- ----- ----- ----- ----- ----- ----- ----- ----- ----- ----- ----- ------
\subsection{Consistent estimation of cross-information quantities}\label{sec:CIQ}
A critical point in computing $\utLbf\n_f$ (\ref{MM:onestepv2}) is the consistent estimation of the cross-information quantities   in $\mathbf{\Gamma}^{\ast}_{\mathbf{L}, f; g}$. % that satisfy (C2).  
To tackle this issue, we exploit the asymptotic linearity  (\ref{eq:AsymptoticLinearity})  of $\utDelta_{ \tilde{\mathbf{L}}\n; f}$  using a method first proposed by Hallin et al.~(2006) in the context of the $R$-estimation of a scatter matrix in an elliptical model, and further developed by Cassart et al.~(2010) and \citet{hp2013}.  
In the present case, we have to consistently estimate a total of $2k(k-1)$ cross-information quantities appearing in  $\mathbf{\Gamma^{\ast}}_{ \mathbf{L}; f, g}$. 

Fixing $f \in \mathcal{F}_{\text{\tiny{ULAN}}}$, define, for  $\lambda \in {\mathbb R}$ and $r \ne s \in \{1, \ldots, k\}$,   the mappings\vspace{-3mm}
\begin{eqnarray}
\lambda \mapsto h^{\gamma^{\ast}_{rs}}(\lambda) := \big(   \utT\n_{\tilde{\mathbf{L}}^{(n)}; f} \big)_{rs} \big(  \utT\n_{ \tilde{\mathbf{L}}^{\gamma^{\ast}_{rs}}_{\lambda}; f}  \big)_{rs}  \quad \text{and} \quad  \lambda \mapsto h^{\rho^{\ast}_{rs}}(\lambda) := \big(   \utT\n_{\tilde{\mathbf{L}}^{(n)}; f} \big)_{sr} \big(  \utT\n_{ \tilde{\mathbf{L}}^{\rho_{rs}}_{\lambda}; f}  \big)_{sr} \label{eq:LALinearity} \vspace{-3mm}
\end{eqnarray}\vspace{-12mm}

\noindent 
(from $\mathbb{R} _+$ to $\mathbb{R}$), where \vspace{-3mm}
\begin{eqnarray*} 
\tilde{\mathbf{L}}^{\gamma^{\ast}_{rs}}_{\lambda} &:=& \tilde{\bf L}\n+ n^{-\half} \lambda \big(  \utT\n_{\tilde{\bf L}\n , f}  \big)_{rs} \tilde{\mathbf{L}}^{(n)} \big( \mathbf{e}_r \mathbf{e}_s^{\prime} - \mathrm{diag}\big( \tilde{\bf L}\n \mathbf{e}_r \mathbf{e}_s^{\prime} \big) \big) \quad \text{and}  \\
\tilde{\mathbf{L}}^{\rho^{\ast}_{rs}}_{\lambda} &:=& \tilde{\bf L}\n + n^{-\half} \lambda \big( \utT\n_{\tilde{\bf L}\n , f}   \big)_{sr} \tilde{\bf L}\n  \big( \mathbf{e}_r \mathbf{e}_s^{\prime} - \mathrm{diag}\big( \tilde{\bf L}\n \mathbf{e}_r \mathbf{e}_s^{\prime} \big) \big), 
\end{eqnarray*}
\vspace{-12mm}

\noindent with $\utT\n_{{\bf L}, f} $ defined in (\ref{rankScoreMatrix}).\vspace{-1mm}
Assume, additionally, that 
\begin{enumerate}
\item[(C3)] for fixed $f, g \in \mathcal{F}_{\text{\tiny{ULAN}}}$, $\pmb{\mu} \in \mathbb{R}^k$, and ${\bf L} \in \mathcal{M}_k^1$, the sequence $\tilde{\mathbf{L}}^{(n)}$ of preliminary estimators (satisfying (C1)) is such that each element in $\utT\n_{\tilde{\mathbf{L}}^{(n)}; f}$ is bounded from below by a positive constant with probability tending to one under $\mathrm{P}\n_{\pmb{\mu}, \mathbf{L}; g}$.
More precisely, for all~$\epsilon > 0$, there exist $\delta_{\epsilon} > 0$ and an integer $N_{\epsilon}$ such that 
$\mathrm{P}_{\pmb{\mu}, {\bf L}, g}\n \left[ \left(   \mathbf{T}_{\tilde{\mathbf{L}}\n; f} \right)_{rs} > \delta_{\epsilon} \right] \geq 1 - \epsilon $ for all~$n \geq N_{\epsilon}$ and $r \ne s \in \{1, \ldots, k \}$.\vspace{-1mm}

\end{enumerate}
This assumption is satisfied by most root-$n$ consistent estimators for the mixing matrix; see Section 4 for a discussion.   \vspace{-1mm}

The following lemma is adapted from 
  \citet{hp2013}.\vspace{-1mm}
% Lemma on Asymptotic Linearity of f-scores
\begin{lemma} 
\label{fScoreAsymLinearity}
Fix $f,g \in \mathcal{F}_{\text{\tiny{ULAN}}}$, $\pmb{\mu} \in \mathbb{R}^k$, and $\mathbf{L} \in \mathcal{M}_k^1$. Let $\tilde{\mathbf{L}}^{(n)}$ be a sequence of preliminary estimators for ${\bf L}$ satisfying   (C1) and (C3). 
For every $r \ne s \in \{1, \ldots, k\}$, the mappings $h^{\gamma^{\ast}_{rs}}$ and $h^{\rho^{\ast}_{rs}}$ defined in (\ref{eq:LALinearity}) satisfy, for all  $\lambda \in {\mathbb R}$,\vspace{-4mm}
$$
h^{\gamma^{\ast}_{rs}}(\lambda) = (1 - \lambda \gamma^{\ast}_{rs}(f, g) ) \left(   \mathbf{T}_{\tilde{\mathbf{L}}^{(n)}; f} \right)_{rs}^2 \! + o_{\mathrm{P}}(1) ~~ \text{and} ~~ h^{\rho^{\ast}_{rs}}(\lambda) = (1 - \lambda \rho^{\ast}_{rs}(f, g) ) \left(   \mathbf{T}_{\tilde{\mathbf{L}}^{(n)}; f} \right)_{sr}^2 \! + o_{\mathrm{P}}(1)
\vspace{-3mm}
$$
as $n \to \infty$, under $\mathrm{P}\n_{\pmb{\mu}, \mathbf{L}; g}$. Furthermore, each mapping is almost surely positive for $\lambda = 0$. \vspace{-1mm}
\end{lemma}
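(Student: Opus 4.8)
The plan is to reduce the lemma to an application of the asymptotic linearity result already available in Proposition~\ref{prop:AsymptoticRepresentation}(iii), upgraded to random perturbations via Corollary~\ref{cor:kreissresult}. First I would observe that each of the two defining expressions in (\ref{eq:LALinearity}) is a product of two coordinates of rank-score matrices: $\big(\utT\n_{\tilde{\bf L}\n; f}\big)_{rs}$, which is held fixed, and $\big(\utT\n_{\tilde{\bf L}^{\gamma^{\ast}_{rs}}_{\lambda}; f}\big)_{rs}$ (respectively the $(s,r)$ entries), which is evaluated at the perturbed preliminary estimator $\tilde{\bf L}^{\gamma^{\ast}_{rs}}_{\lambda} = \tilde{\bf L}\n + n^{-\half}\lambda\big(\utT\n_{\tilde{\bf L}\n,f}\big)_{rs}\tilde{\bf L}\n\big(\mathbf{e}_r\mathbf{e}_s\pr - \mathrm{diag}(\tilde{\bf L}\n\mathbf{e}_r\mathbf{e}_s\pr)\big)$. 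The key point is that this perturbation is exactly of the form $\tilde{\bf L}\n + n^{-\half}\tilde{\bf L}\n\big[\pmb{\Upsilon}\n - \mathrm{diag}(\tilde{\bf L}\n\pmb{\Upsilon}\n)\big]$ with $\pmb{\Upsilon}\n = \lambda\big(\utT\n_{\tilde{\bf L}\n,f}\big)_{rs}\mathbf{e}_r\mathbf{e}_s\pr$, i.e.\ precisely the parametrization under which (\ref{MM:onestepv2}) and Theorem~5.2 of \citet{ip2011} express increments of $\mathbf{L}$ in $\mathcal{M}_k^1$; hence $\tilde{\bf L}^{\gamma^{\ast}_{rs}}_{\lambda}$ corresponds, in the $\matd$-chart of (\ref{eq:ParametricULANRepresentation}), to the random direction $\tilde{\pmb{\tau}}\n = \lambda\big(\utT\n_{\tilde{\bf L}\n,f}\big)_{rs}\,\vecd(\mathbf{e}_r\mathbf{e}_s\pr)$ plus the root-$n$ displacement of $\tilde{\bf L}\n$ itself.

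Second, I would verify that $\tilde{\pmb{\tau}}\n$ satisfies (C1a)--(C1b). Boundedness in probability follows because $\tilde{\bf L}\n$ is root-$n$ consistent and $\utT\n_{\tilde{\bf L}\n; f}$ is $O_{\mathrm P}(1)$ (a consequence of Proposition~\ref{prop:AsymptoticRepresentation}(ii) with $\pmb\tau = \mathbf{0}$, giving asymptotic normality of $\utDelta\n_{\mathbf{L};f}$ hence of each entry of $\utT$); local asymptotic discreteness is inherited from the discreteness of $\tilde{\bf L}\n$ assumed in (C1), since $\tilde{\pmb\tau}\n$ is a fixed continuous function of $\tilde{\bf L}\n$ evaluated at finitely many lattice points. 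Then Corollary~\ref{cor:kreissresult} (equivalently, Proposition~\ref{prop:AsymptoticRepresentation}(iii) combined with Lemma~4.4 of \citet{k1987}) gives $\utDelta\n_{\tilde{\bf L}^{\gamma^{\ast}_{rs}}_{\lambda}; f} - \utDelta\n_{\tilde{\bf L}\n; f} = -\mathbf{\Gamma}^{\ast}_{\mathbf{L};f,g}\,\tilde{\pmb\tau}\n + o_{\mathrm P}(1)$. Translating back from the central-sequence coordinates to the $\utT$-matrix coordinates via $\utDelta\n_{\mathbf{L};f} = \mathbf{C}(\mathbf{I}_k\otimes\mathbf{L}^{-1})\pr\vecf[\utT\n_{\mathbf{L};f}]$ and the block structure of $\tilde{\mathbf{G}}_{f,g}$ (which couples the $(p,q)$ entry only to $\gamma^{\ast}_{pq}$ on the $\mathbf{e}_p\mathbf{e}_p\pr\otimes\mathbf{e}_q\mathbf{e}_q\pr$ block and to $\rho^{\ast}_{pq}$ on the $\mathbf{e}_p\mathbf{e}_q\pr\otimes\mathbf{e}_q\mathbf{e}_p\pr$ block), the $(r,s)$ entry picks up exactly a $-\lambda\gamma^{\ast}_{rs}(f,g)\big(\utT\n_{\tilde{\bf L}\n;f}\big)_{rs}$ correction: that is, $\big(\utT\n_{\tilde{\bf L}^{\gamma^{\ast}_{rs}}_{\lambda};f}\big)_{rs} = \big(\utT\n_{\tilde{\bf L}\n;f}\big)_{rs} - \lambda\gamma^{\ast}_{rs}(f,g)\big(\utT\n_{\tilde{\bf L}\n;f}\big)_{rs} + o_{\mathrm P}(1)$, and multiplying by the fixed factor $\big(\utT\n_{\tilde{\bf L}\n;f}\big)_{rs}$ (itself $O_{\mathrm P}(1)$) yields the first displayed identity; the $\rho^{\ast}_{rs}$ statement is identical with $\mathbf{e}_r\mathbf{e}_s\pr$ replaced and the $(s,r)$ entry read off. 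Finally, the almost-sure positivity at $\lambda=0$ is immediate since $h^{\gamma^{\ast}_{rs}}(0) = \big(\utT\n_{\tilde{\bf L}\n;f}\big)_{rs}^2 \ge 0$, and strict positivity holds on the event of (C3).

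The main obstacle I anticipate is purely bookkeeping rather than conceptual: checking that the specific rank-one perturbations $\tilde{\bf L}^{\gamma^{\ast}_{rs}}_{\lambda}$ and $\tilde{\bf L}^{\rho^{\ast}_{rs}}_{\lambda}$ genuinely land in $\mathcal{M}_k^1$ (so that all the preceding propositions apply) and that the "diagonal-correction" term $\mathrm{diag}(\tilde{\bf L}\n\mathbf{e}_r\mathbf{e}_s\pr)$ is precisely what is needed to keep the diagonal of the perturbed matrix equal to $\mathbf{1}$, matching the $\vecd/\matd$ parametrization of Proposition~\ref{prop:ParametricULAN}. One must also be careful that the factor $\big(\utT\n_{\tilde{\bf L}\n,f}\big)_{rs}$ appearing inside the perturbation is the same random scalar multiplying the outside, so the net effect on the $(r,s)$ entry is quadratic in $\big(\utT\n_{\tilde{\bf L}\n;f}\big)_{rs}$ as claimed; the role of (C3) is exactly to guarantee this scalar is eventually bounded away from zero, which is needed in Section~\ref{sec:CIQ} so that the estimating equations $h^{\gamma^{\ast}_{rs}}(\lambda)=0$ and $h^{\rho^{\ast}_{rs}}(\lambda)=0$ identify $1/\gamma^{\ast}_{rs}(f,g)$ and $1/\rho^{\ast}_{rs}(f,g)$. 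Since all the heavy analytic work (the $L^2$ asymptotic linearity, the discretization argument of \citet{k1987}) is already established upstream, the proof is essentially a verification that the hypotheses of Corollary~\ref{cor:kreissresult} hold for these particular $\tilde{\pmb\tau}\n$ together with the algebraic unpacking of $\mathbf{\Gamma}^{\ast}_{\mathbf{L};f,g}$ into its $\gamma^{\ast}$- and $\rho^{\ast}$-blocks.
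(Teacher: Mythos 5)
Your proposal is correct and follows the route the paper intends: the paper gives no written proof of this lemma (it is simply stated as ``adapted from \citet{hp2013}''), but the argument it points to is exactly yours --- apply the asymptotic linearity of Proposition~\ref{prop:AsymptoticRepresentation}(iii), upgraded to random, locally asymptotically discrete perturbations via Lemma~4.4 of \citet{k1987} as in Corollary~\ref{cor:kreissresult}, to the rank-one perturbations $\tilde{\mathbf{L}}^{\gamma^{\ast}_{rs}}_{\lambda}$ and $\tilde{\mathbf{L}}^{\rho^{\ast}_{rs}}_{\lambda}$ (which indeed preserve the unit diagonal), and read off the $(r,s)$ and $(s,r)$ entries through the block structure of $\tilde{\mathbf{G}}_{f,g}$. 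The one point deserving slightly more care than you give it is that the linearity must be invoked entrywise for $\utT\n_{\cdot;f}$ rather than for the aggregated central sequence $\utDelta\n_{\cdot;f}=\mathbf{C}(\mathbf{I}_k\otimes\mathbf{L}^{-1})\pr\vecf\big[\utT\n_{\cdot;f}\big]$, whose coordinates mix the entries of $\utT\n_{\cdot;f}$ through $\mathbf{L}^{-1}$; this is available from the entrywise $o_{L^2}(1)$ representation established in the proof of Proposition~\ref{prop:AsymptoticRepresentation}(i) and falls under the bookkeeping you already flag.
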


By Lemma \ref{fScoreAsymLinearity}, the mappings $h^{\gamma^{\ast}_{rs}}$ and $h^{\rho_{rs}}$ are both positive at $\lambda = 0$ and, up to $o_{\mathrm{P}}(1)$'s under $\mathrm{P}\n_{\pmb{\mu}, \mathbf{L}; g}$, are linear with a negative slope. 
Therefore, intuitively appealing estimators for $\gamma^{\ast}_{rs}(f, g)$ and $\rho^{\ast}_{rs}(f, g)$ would be, respectively, $\left( \hat{\gamma^{\ast}}_{rs}(f,g) \right)^{-1} :=  \inf_{\lambda} \left\{ \lambda \in \mathbb{R}: h^{\gamma^{\ast}_{rs}}(\lambda) < 0\right\}$ and  $\left( \hat{\rho^{\ast}}_{rs}(f,g) \right)^{-1} :=  \inf_{\lambda} \left\{ \lambda \in \mathbb{R}: h^{\rho^{\ast}_{rs}}(\lambda) < 0\right\}$; estimators for  $\rho_{rs}(f, g)$ would be defined in an analogous manner.
However, these estimators are not asymptotically discrete.
Instead, taking $\lambda_j = j/c$ for some large $c >0$ and $j \in \mathbb{Z}$, let\vspace{-4mm}
\begin{equation}
\label{crossInfoGamma}
\left( \hat \gamma^{\ast}_{rs}(f) \right)^{-1} := \lambda^-_{\gamma^{\ast}_{rs}}  +  { c^{-1} h^{\gamma^{\ast}_{rs}}(\lambda^-_{\gamma^{\ast}_{rs}} )  }/\big({  h^{\gamma^{\ast}_{rs}}(\lambda^-_{\gamma^{\ast}_{rs}} ) - h^{\gamma^{\ast}_{rs}}(\lambda^+_{\gamma^{\ast}_{rs}} ) \big)} , \vspace{-3mm}
\end{equation}
with $\lambda^-_{\gamma^{\ast}_{rs}}\! := \max_{j \in \mathbb{Z}} \!\left\{ \lambda_j\! : h^{\gamma^{\ast}_{rs}}(\lambda_j ) > 0 \right\}$ and $\lambda^+_{\gamma^{\ast}_{rs}(f)}\! := \min_{j \in \mathbb{Z}}\! \left\{ \lambda_j\! : h^{\gamma^{\ast}_{rs}}(\lambda_j ) < 0 \right\}$. 
Similarly~put\vspace{-4mm}
\begin{equation}
\label{crossInfoRho}
\left( \hat{\rho}^{\ast}_{rs}(f) \right)^{-1} := \lambda^-_{\rho^{\ast}_{rs}}  + { c^{-1} h^{\rho^{\ast}_{rs}}(\lambda^-_{\rho^{\ast}_{rs}} )  }/{\big(  h^{\rho^{\ast}_{rs}}(\lambda^-_{\rho^{\ast}_{rs}} ) - h^{\rho^{\ast}_{rs}}(\lambda^+_{\rho^{\ast}_{rs}} ) \big)}, \vspace{-2mm}
\end{equation}
with $\lambda^-_{\rho^{\ast}_{rs}} := \max_{j \in \mathbb{Z}} \left\{ \lambda_j\! : h^{\rho^{\ast}_{rs}}(\lambda_j ) > 0 \right\}$ and $\lambda^+_{\rho^{\ast}_{rs}} := \min_{j \in \mathbb{Z}} \left\{ \lambda_j\! : h^{\rho^{\ast}_{rs}}(\lambda_j ) < 0 \right\}$.
The estimators (\ref{crossInfoGamma}) and (\ref{crossInfoRho}) can be shown, under assumptions (C1) and (C3), to satisfy (C2) along the same lines as in Theorem 5.3 of \citet{ip2011}. \vspace{-2mm}

% ----- ----- ----- ----- ----- ----- ----- ----- ----- ----- ----- ----- ----- ----- ------
% Data-driven specification of reference density
% ----- ----- ----- ----- ----- ----- ----- ----- ----- ----- ----- ----- ----- ----- ------
\subsection{Data-driven specification of reference density}\label{Sec:DataDriven}
While the choice of the reference density $f$ has no impact on the consistency properties of the corresponding $R$-estimator $\utLbf\n_f$, it has a direct influence on its performances for both finite $n$ and as $n \to \infty$;
the ``closer" $f$ is to the actual density $g$, the better the performance for $\utLbf\n_f$.
The efficiency loss due to a misspecified reference density $f$ is revealed though an inspection of the cross-information quantities.

Many mixing matrix estimators of ${\bf L}$, including those proposed by \citet{cb2006} and \citet{bj2003}, rely on nonparametric estimates of the underlying component densities or scores. 
However, such nonparametric estimates require large sample sizes to be effective, and are  sensitive to tuning parameters such as bandwidth or the choice of a basis functions. 
For instance, \citet{cb2006}  propose estimating score functions using a basis of $t$ $B$-spline functions; the exact choice of $t$ has a significant impact on the resulting estimator. 
Furthermore, nonparametric methods tend to be sensitive to outliers.%, especially in the case of small to moderate-sized samples.

The purpose of using the $R$-estimators based on $f$-scores is precisely to increase robustness against outliers while avoiding nonparametric density estimation.  
A distinctive feature of ranks is that they are independent, under the null hypothesis and hence also under contiguous alternatives, of the corresponding order statistics. 
That property can be exploited, in the spirit of \citet{dj2000}, to select a reference density $f$ that accounts for features (skewness, kurtosis, etc.) of the actual underlying $g$: as long as such a selection is based on order statistics, it has no impact on the validity of $R$-estimation procedures.

We propose selecting $f := (f_1, \ldots, f_k)$ by fitting, componentwise,  a   parametric density to the (order statistic of the) residuals associated with the preliminary estimator $\tilde{\bf L}\n$.  
If skewness and kurtosis are to be accounted for, a convenient family of densities is  the family of skew-$t$ distribution (Azzalini and Capitanio,~2003) with densities of the form\vspace{-2mm}
\begin{equation}
\label{eq:skewt}
h_{\pmb{\omega} }( x ) = \frac{2}{\sigma} t_{\nu} (z) T_{\nu + 1}\Big( \alpha z \Big( \frac{\nu + 1}{ \nu + z^2 } \Big)^{1/2} \Big) \qquad \text{for~$x \in \mathbb{R}$~and~$z := \sigma^{-1}\left( x - \mu \right)$, }\vspace{-2mm}
\end{equation}
indexed by $\pmb{\omega} := (\mu, \sigma, \alpha, \nu)$, where $\mu \in \mathbb{R}$ is a location, $\sigma \in \mathbb{R}_0^+$ a scale, $\alpha \in \mathbb{R}$ a skewness parameter, and $\nu > 0$  the number of degrees of freedom governing the tails; $t_{\nu}(z)$ and~$T_{\nu}( z )$ 
 are the density and cumulative distribution functions, respectively,  of  the Student-$t$ distribution with $\nu$ degrees of freedom. 
For each $j=1, \ldots, k$, an estimator $(\hat{\mu}_j, \hat{\sigma}_j, \hat{\alpha}_j, \hat{\nu}_j)$    is obtained from the   residuals  $Z_{1,j}\n( \tilde{\bf L}\n), \ldots, Z_{n,j}\n (\tilde{\bf L}\n )$ using a routine maximum likelihood method.  
Then, the $f$-score functions used in the $R$-estimation procedure are those associated with the skew-$t$ density $h_{\hat{\pmb{\omega}}_j}$, with $\hat{\pmb{\omega}}_j = (\hat{\mu}_j, \hat{\sigma}_j, \hat{\alpha}_j, \hat{\nu}_j)$, thus taking  into account the skewness, kurtosis and tails of the residuals (for $\nu\in (4,\infty)$, those  kurtoses range between 3 and $\infty$). 

  Data-driven scores, however, clearly need not be restricted to the family of skew-$t$ densities, and can be selected from other univariate parametric families as well; in Section~\ref{Sec:Simulations}, we also consider, for instance, the family of stable distributions, indexed by $\pmb{\omega} := (\mu, \sigma, \beta, \gamma)$, where $\mu$ and $\sigma$ are location and scale, $\beta$ is a skewness parameter ($\beta=0$ means symmetry), and $\gamma\in(0,2]$ characterizes  tail behavior ($\gamma=2$ means Gaussian tails,~$\gamma =1$ Cauchy tails).  \vspace{-3mm}
  
%% ------- ------- ------- ------- ------- ------- ------- -------
%% ------- ------- ------- ------- ------- ------- ------- -------
\section{Simulations}\label{Sec:Simulations} \vspace{-1mm}
%% ------- ------- ------- ------- ------- ------- ------- -------
%% ------- ------- ------- ------- ------- ------- ------- -------
Simulation experiments are conducted to examine finite-sample performances of the proposed $R$-estimation procedure.
In the simulations, we evaluate $R$-estimators $\utLbf\n_f$ based on various  preliminary estimators from the literature and a data-driven reference density $f$, as described in Section~\ref{Sec:DataDriven}. 
In this section, we describe the precise construction of  the four preliminary estimators to be used,  the $R$-estimator $\utLbf\n_f$,  and, for the sake of comparison,  the $R_+$-estimator of Ilmonen and Paindaveine (2011). 
Then we describe the simulation experiment setups and conclude with a discussion of the simulation results. \vspace{-2mm}

% ----- ----- ----- ----- ----- ----- ----- ----- ----- ----- ----- ----- ----- ----- ------
% Estimators
% ----- ----- ----- ----- ----- ----- ----- ----- ----- ----- ----- ----- ----- ----- ------
\subsection{Preliminary, $R$-, and $R_+$-estimators}\label{Sec:EstimationMethod}
% ----- ----- ----- ----- ----- ----- ----- ----- ----- ----- ----- ----- ----- ----- ------
% Preliminary Estimators
% ----- ----- ----- ----- ----- ----- ----- ----- ----- ----- ----- ----- ----- ----- ------
\subsubsection{The preliminary estimators}\label{Sec:PreliminaryEstimators}
 \citet{ose2006} propose estimating a mixing matrix using two distinct \emph{scatter matrices} with the \emph{independent components property}.
A \emph{scatter matrix} is a $k \times k$ symmetric positive definite and affine-equivariant   function of a sample of $n$ random $k$-vectors; it  is said to possess the \emph{independent components property} if, when the sample of random $k$-vectors 
  ${\bf X}_1\n, \ldots, {\bf X}_n\n$  
at which it is evaluated is i.i.d.\ with mutually independent components, all of its off-diagonal elements are $o_{\mathrm{P}}(1)$ as $n\to\infty$.  Examples include the sample covariance matrix \vspace{-2mm}
 $$\mathbf{S}_{\text{\tiny{COV}}} := \frac{1}{n} \sum_{i=1}^n \big( {\bf X}_i\n - \bar{\bf X}\n \big) \big( {\bf X}_i\n - \bar{\bf X}\n \big)\pr \qquad \text{where} \qquad \bar{\bf X}\n := \frac{1}{n} \sum_{i=1}^n {\bf X}_i\n;\vspace{-3mm} $$ 
and the fourth-order scatter matrix\vspace{-3mm}
$$
\mathbf{S}_{\text{\tiny{COV4}}} := \frac{1}{n} \sum_{i=1}^n \big(\mathbf{X}_i^{(n)} - \bar{\mathbf{X}}^{(n)} \big)^{\prime} \big( \mathbf{S}_{\text{\tiny{COV}}} \big)\inv \big(\mathbf{X}_i^{(n)} - \bar{\mathbf{X}}^{(n)} \big) \big(\mathbf{X}_i^{(n)} - \bar{\mathbf{X}}^{(n)} \big) \big(\mathbf{X}_i^{(n)} - \bar{\mathbf{X}}^{(n)} \big)^{\prime},
\vspace{-3mm}$$
leading to the popular FOBI estimator (Cardoso~(1989)).

Not all scatter matrices possess the independent components property---certainly in the presence of asymmetric densities.   
As a remedy, \citet{noo2008} propose using \emph{symmetrized} versions of the scatter matrices involved, which  entails their evaluation  at the~$n(n-1)/2$ distinct pairwise differences of the original observations, which  for large~$n$ is computationally heavy. 
% from a given sample of size $n$, denoted $\tilde{\bf X}_{(i,j)}\n := {\bf X}_i\n - {\bf X}_j\n$ for each $\{ (i,j):~1 \leq i < j \leq n \}$ with
%\vspace{-1mm}
%\begin{equation}
%\label{eq:SymmetrizatedScatterMatrix}
%{\bf S}^{\ast}(  {\bf X}_1\n, \ldots, {\bf X}_n\n ) := {\bf S} \big( \tilde{\bf X}_{(1,2)}\n, \ldots, \tilde{\bf X}_{(1,n)}\n, \tilde{\bf X}_{(2,3)}\n, \ldots, \tilde{\bf X}_{(2,n)}\n, \tilde{\bf X}_{(3,4)}\n, \ldots, \tilde{\bf X}_{(n-1,n)}\n \big)\vspace{-1mm}
%\end{equation}
%denoting the symmetrized version of ${\bf S}( \cdot )$.
%If ${\bf X}_i\n$ ($i=1,\ldots,n$) is i.i.d. with mutually independent components, then $\tilde{\bf X}_{(i,j)}\n$ ($1 \leq i < j \leq n$) is also i.i.d. with mutually independent components each having, by construction, symmetric distributions.
%Consequently, the symmetrized version of any scatter matrix  ${\bf S}( \cdot )$ has the independent components property and can be used as a scatter matrix in evaluating a solution to (\ref{eq:TwoScatterMatrices}).
%\Black 

The asymptotic properties and robustness of the estimator $\hat{\bf \Lambda} \big({\bf S}_{\text{\tiny{A}}}, {\bf S}_{\text{\tiny{B}}} \big)$ associated with the scatter matrices ${\bf S}_{\text{\tiny{A}}}$ and $ {\bf S}_{\text{\tiny{B}}}$ follow  from those of   ${\bf S}_{\text{\tiny{A}}}$ and ${\bf S}_{\text{\tiny{B}}}$ themselves (see Ilmonen et al.~(2012) for details).  
Since root-$n$  consistency of %${\bf S}_{\text{\tiny{COV}}}$ or
  ${\bf S}_{\text{\tiny{COV4}}}$ requires    finite  eight moments, so does FOBI. More robust estimates of scatter such as the van der Waerden rank-based estimator~$\mathbf{S}_{\text{\tiny{HOP}}}$ (Hallin et al.~2006) or Tyler's  estimator of shape  $\mathbf{S}_{\text{\tiny{Tyl}}}$ \citep{t1987} maintain root-$n$ consistency without any moment assumptions. Irrespective of moments, though,~$\hat{\bf \Lambda} \big({\bf S}_{\text{\tiny{A}}}, {\bf S}_{\text{\tiny{B}}} \big)$  loses consistency as soon as two component densities yield identical {\it ``generalized kurtoses"} (defined as the diagonal elements of ${\rm p}\! \lim _{n\to\infty}  {\bf S}_{\text{\tiny{B}}}^{-1} {\bf S}_{\text{\tiny{A}}}$).

 In the simulations below, we consider three preliminary estimators based on  the  two-scatter  method:  the {FOBI} estimator $\tilde{\bf \Lambda}_{\text{\tiny{Fobi}}}$, the estimator $\tilde{\bf \Lambda}_{\text{\tiny{HOPCov}}} := \hat{\bf \Lambda} \big({\bf S}^{\ast}_{\text{\tiny{HOP}}}, {\bf S}_{\text{\tiny{COV}}} \big)$ based on the symmetrized version ${\bf S}^{\ast}_{\text{\tiny{HOP}}}$ of  $\mathbf{S}_{\text{\tiny{HOP}}}$  and the sample covariance~${\bf S}_{\text{\tiny{COV}}}$, and~$\tilde{\bf \Lambda}_{\text{\tiny{TylHub}}} := \hat{\bf \Lambda} \big({\bf S}^{\ast}_{\text{\tiny{Tyl}}}, {\bf S}^{\ast}_{\text{\tiny{Hub}}} \big)$ based on  the symmetrized versions $ {\bf S}^{\ast}_{\text{\tiny{Tyl}}}$ and~$ {\bf S}^{\ast}_{\text{\tiny{Hub}}}$ of the Tyler estimator~$ {\bf S}_{\text{\tiny{Tyl}}}$   \citep{d1998} and Huber's~$M$-estimator of scatter (with   weights~$W(t)=\min(1,\chi^2_{k,0.9}/t^2)$, where~$\chi^2_{k,0.9}$ is the upper $0.1$ quantile of the $\chi^2_k$-distribution with $k$ degrees of freedom: see page 194 of Maronna et al.~(2006)).

The asymptotic properties of the {FastICA} estimator $\tilde{\bf \Lambda}_{\text{\tiny{FIca}}}$  \citep{ho1997} have been studied by \citet{o2010}, Nordhausen et al.~(2011), and Ilmonen et al.~(2012), who give sufficient conditions for  root-$n$ consistency.  
 In the simulations, we used the symmetric fastICA  R package by Marchini et al.~(2012)   with $\log \cosh$ scores and initial demixing matrix set to  identity.

Finally, the  {Kernel-ICA} algorithm \citep{bj2003} seeks a demixing matrix that minimizes the mutual information between %the implied
  independent components via a  \emph{generalized variance}, a construction implicitly measuring  non-Gaussianity. 
Of all preliminary estimators we considered, $\tilde{\bf \Lambda}_{\text{\tiny{KIca}}}$ (computed from the kernel-ica Matlab package \citep{kicaCode} with default~settings) yields the best performances in the simulations;  its asymptotic properties so far have not been well studied, though, and, to the best of our knowledge,   root-$n$ consistency conditions have not been obtained yet.

After evaluating each preliminary estimator  ($\tilde{\bf \Lambda}_{\text{\tiny{PE}}} $,   for {PE} = {Fobi}, {HOPCov}, {TylHub}, {FIca}, and {KIca})
from each replication, one-step $R$-estimators are computed from the observationally equivalent\vspace{-2mm}
\begin{equation}
\label{eq:PreliminaryEstimators}
\tilde{\bf L}_{\text{\tiny{PE}}}\! := \Pi ( \tilde{\bf \Lambda}_{\text{\tiny{PE}}})  %,  ~~  \tilde{\bf L}_{\text{\tiny{HOPCov}}}\! := \Pi \big( \tilde{\bf \Lambda}_{\text{\tiny{HOPCov}}} \big), ~~ \tilde{\bf L}_{\text{\tiny{FIca}}}\! := \Pi \big( \tilde{\bf \Lambda}_{\text{\tiny{FIca}}} \big), ~~ \text{and} ~~ \tilde{\bf L}_{\text{\tiny{KIca}}}\! := \Pi \big( \tilde{\bf \Lambda}_{\text{\tiny{KIca}}} \big),
\vspace{-2mm}
\end{equation}
which belong to  $\mathcal{M}_k^1$  (see~(\ref{eq:PiMapping}) for the definition of the mapping $\Pi$).

% ----- ----- ----- ----- ----- ----- ----- ----- ----- ----- ----- ----- ----- ----- ------
% R estimator
% ----- ----- ----- ----- ----- ----- ----- ----- ----- ----- ----- ----- ----- ----- ------
\subsubsection{The $R$-estimators}\label{Sec:ImplementationREstimator}
As described in Section \ref{Sec:DataDriven}, we used data-driven scores from the skew-$t$ family in the construction of our $R$-estimators. 
For each replication of  ${\bf X}_1\n, \ldots, {\bf X}_n\n$ and preliminary estimator~$\tilde{\mathbf{L}} \in \mathcal{M}_k^1$,  we compute the  residuals     $\hat{\bf Z}_i\n \big(  \tilde{\bf L}  \big) := \tilde{\bf L}\inv {\bf X}_i\n$ for $i=1, \ldots, n$. For  \linebreak each~$j=1,\ldots, k$, a skew-$t$ density $h_{\,\hat{\pmb{\omega}}_j}$ (see (\ref{eq:skewt}))  is fit to the $n$-tuple~$\hat{ Z}_{1,j}\n \big(  \tilde{\bf L}  \big), \ldots, \hat{ Z}_{n,j}\n \big(  \tilde{\bf L}  \big)$     of~$j$th components via maximum likelihood (MLE). For numerical stability reasons, the estimator $\omega _j$  was limited to the interval $\hat{\alpha}_j \in [-15,15]$ and~$\hat{\nu}_j \in [3,\infty)$.

The resulting one-step $R$-estimate then is, with $f := ( h_{\hat{\pmb{\omega}}_1}, \ldots, h_{\hat{\pmb{\omega}}_k} )$, \vspace{-2mm}
\begin{equation}
\label{eq:OneStepRDataDriven}
\utLbf^{\ast}\big( \tilde{\bf L} \big) := \tilde{\bf L} + n^{-\half}~\tilde{\bf L} \Big[ \hat{\mathbf{N}}_{\tilde{\bf L}, f }\n - \text{diag}\Big( \tilde{\bf L} \ \hat{\mathbf{N}}_{\tilde{\bf L}, f}\n \Big) \Big],\vspace{-3mm}
\end{equation} 
where $\hat{\mathbf{N}}_{\tilde{\bf L}, f}\n $ is defined in (\ref{eq:onestepv2-intermediary}) 
(because $\utLbf^{\ast}\big( \tilde{\bf L} \big)$ is based on data-driven scores, no reference density is used in the notation).

In the simulations, we also explore the performance of a multistep version of the same $R$-estimator. Taking  $\utLbf^{\ast}\big( \tilde{\bf L} \big)$ as a preliminary, (\ref{eq:OneStepRDataDriven})  indeed is easily iterated, letting   \vspace{-3mm}
\begin{equation}
\label{eq:MultiStepRDataDriven}
\utLbf^{\ast}_{\text{\tiny{(t)}}}\big( \tilde{\bf L} \big) :=  \utLbf^{\ast}\Big( \utLbf^{\ast}_{\text{\tiny{(t-1)}}}\big( \tilde{\bf L} \big)  \Big)\quad\text{with }\ \ \utLbf^{\ast}_{(0)} \big( \tilde{\bf L} \big) := \tilde{\bf L},\quad t=1,\ldots,T.\vspace{-2mm}
\end{equation} 
% ----- ----- ----- ----- ----- ----- ----- ----- ----- ----- ----- ----- ----- ----- ------
% Signed-Rank Estimator
% ----- ----- ----- ----- ----- ----- ----- ----- ----- ----- ----- ----- ----- ----- ------
\subsubsection{The Ilmonen-Paindaveine $R_+$-estimators}\label{Sec:SignedRankEstimator}
We also computed the   Ilmonen and Paindaveine~(2011) signed-rank $R_+$-estimators, the validity of which only holds under  symmetric component densities. 
 This   not only requires a root-$n$ consistent preliminary estimator $\tilde{\bf L}\n \in \mathcal{M}_k^1$, but also an estimate for the location~$\pmb{\mu} \in \mathbb{R}^k$. The preliminary estimators we used are those described in Section~\ref{Sec:EstimationMethod}; for location, we adopted the same componentwise median estimator  as in Ilmonen and Paindaveine~(2011). To make the comparison a fair one, however, we also implemented the signed-rank procedure on the basis of data-driven scores, as explained in Section~\ref{Sec:ImplementationREstimator}---restricting  the fit, of course, to symmetric Student or stable densities. The resulting $R_+$-estimators are denoted as $\utLbf^{\ast}_+\big( \tilde{\bf L} \big)$.  Finally, parallel to (\ref{eq:MultiStepRDataDriven}), multistep versions of $\utLbf^{\ast}_+\big( \tilde{\bf L} \big)$ are easily constructed; the notation $\utLbf^{\ast}_{+\text{\tiny{(t)}}}\big( \tilde{\bf L} \big)$ is used in an obvious way.

% ----- ----- ----- ----- ----- ----- ----- ----- ----- ----- ----- ----- ----- ----- ------
% ----- ----- ----- ----- ----- ----- ----- ----- ----- ----- ----- ----- ----- ----- ------
% Simulations Set-up
% ----- ----- ----- ----- ----- ----- ----- ----- ----- ----- ----- ----- ----- ----- ------
% ----- ----- ----- ----- ----- ----- ----- ----- ----- ----- ----- ----- ----- ----- ------
\subsection{Simulation settings}\label{Sec:SimulationExperiments}
In each simulation experiment, 
 three-dimensional  observations ($k$=3) 
were generated from various generating processes. Each %of them 
generating process
 is characterized by a sample size~$n$ and a triple $g^{\text{\tiny{(S)}}} := (g_1^{\text{\tiny{(S)}}}, g_2^{\text{\tiny{(S)}}}, g_3^{\text{\tiny{(S)}}})\vspace{-0mm}$ of  component  densities, labeled  $(S)=(A),\ldots , (I)$,  
the list of which is provided  in Table~\ref{table:Setup}. Those densities present   various  skewness levels and tail behaviors, with  (A,~C) skew-$t$ and   stable, but also (B)  skew Laplace densities, of the form  (for location $\mu \in {\mathbb R}$, scale $\sigma > 0$, and shape parameter $\eta \in (0,\infty)$, where $\eta =0$ yields symmetry
$$ h(z; \mu, \sigma, \eta) =\left\{\begin{array}{lc}  \frac{\sqrt{2}}{\sigma}\frac{\eta}{1 + \eta^2} \exp \big( -\frac{\sqrt{2}}{\sigma \eta} | z - \mu| \big) \quad &z \leq \mu\\ 
 \frac{\sqrt{2}}{\sigma}\frac{\eta}{1 + \eta^2} \exp \big( - \frac{\sqrt{2}}{\sigma} \eta | z - \mu| \big) \quad &z > \mu 
\end{array}
\right.
$$
see Kotz et al.~(2001)). 
 We also considered variations of component distributions in (D,~E)  with an asymmetric bimodal mixture distribution ({mix-}${ t}_3$) included in (E), and, for the purpose of a comparison of  $R$- and $R_+$-estimators, two   symmetric triples (F,~G).  Finally,  for the sake of a very rough  robustness investigation, two contaminated settings (H,~I) were  included in the study. There, at each replication, three component densities were selected at random (uniformly, without replacement)  from  skew $t_5\,(\alpha=4)$, skew~Laplace $(\eta=2)$,~$\cal N$,  stable $(\beta=1, \gamma=1.5)$, and {mix-}${ t}_3$; the resulting observation then, with probability 2\% (H) or 5\% (I)  is multiplied by a factor drawn from a uniform distribution over $[-5, 5]$.

\begin{table}[h] 
%\small
\centering
\begin{tabular}{clll} 
\toprule
\midrule 
 & \multicolumn{3}{c}{Component densities } \\ \cmidrule{2-4}
($S$)  & \multicolumn{1}{c}{$g_1^{\text{\tiny{(S)}}}$} & \multicolumn{1}{c}{$g_2^{\text{\tiny{(S)}}}$}  & \multicolumn{1}{c}{$g_3^{\text{\tiny{(S)}}}$} \\ 
\midrule
(A) & skew-$t_5\,(\alpha=-4)$		& skew-$t_5\,(\alpha=2)$			& Student $t_{5}$					\\ [1ex] % Index J
(B) & skew Laplace$(\eta=2)$			& skew Laplace$(\eta=1/3)$		& Laplace								\\ [1ex] 	% Index J
(C) & stable$(\beta=-1, \gamma=1.5)$	& stable$(\beta=1, \gamma=1.5)$	& stable$(\beta=0, \gamma=1.5)$	\\ [1ex] % Index J
& & & \\ [-2ex]
(D) & skew $t_5\,(\alpha=-4)$		& skew Laplace$(\eta=2)$			& stable$(\beta=-1, \gamma=1.5)$	\\ [1ex] % Index J
%(E) & skew $t\,(\alpha=-4, \nu=5)$		& skew Laplace($\kappa=2$)			& {asymMix-}${ t}_3$ 					\\ [1ex] 	% Index J
(E) & stable$(\beta=-1, \gamma=1.5)$	& skew Laplace($\eta=2$)			& {mix-}${ t}_3$					\\ [1ex]
(F) & Cauchy $t_1$						& Student $t_2$						& Student $t_3$	 					\\ [1ex] 	% Index J
(G) & Student $t_{3}$					& Student $t_{5}$					& Normal								\\ [1.5ex] % Index J
\midrule
(H)& contaminated data& (2\% contamination)&\\
(I) & contaminated data& (5\% contamination)&\vspace{2mm}\\
%	\\ [1ex] % Index J
\hline
\end{tabular}
\captionsetup{font=small}
\caption{Component densities used in the simulation experiment, all with median zero and unit scale: (a)   skew-$t_{\nu}$ densities with shape (asymmetry)  parameter $\alpha$ and $\nu$ degrees of freedom;  (b)~{skew Laplace}  densities with shape parameter $\eta$;  (c) {stable}  densities  with skewness parameter $\beta$ and tail index $\gamma$; (d)  a mixture of two non-standardized Student densities  with~$3$ degrees of freedom; and (e)   the classical Student~$t_\nu$ ($\nu$ degrees of freedom) and Laplace (double exponential) densities.  In setups (H) and (I), the source component densities are contaminated, with contamination rate 2\% and 5\%, respectively, according to the mechanism explained above. 
\label{table:Setup} } \vspace{-3mm}
\end{table}

\vspace{-2mm}

\doublespacing

\normalsize

 Each marginal distribution in (A-G) has median  zero  and unit scale (recall that location and scale  here play no role).  
 Throughout, the same $3 \times 3$ mixing matrix \vspace{-2mm}
$$ {\bf L} := \left(\begin{array}{ccc}1&0.5&0.5 \\ 0.5&1&0.5 \\0.5&0.5&1\end{array}\right) \in \mathcal{M}_3^1 \vspace{-2mm}$$ was used.  Small ($n=100$) and moderate ($n=1,000$) sample sizes were considered. \vspace{-2mm}
%\singlespacing

For each generating process (each combination of $n=100$ or $1,000$ and 
 $(S)\in\{(A),\ldots , (I)\}$), the number of replications was set to $M=1,000$, and, for each replication,  the following estimators of $\bf L$ were computed:\vspace{-2mm}
\begin{enumerate}
\item[(a)]the preliminary estimators $\tilde{\bf L}=\tilde{\bf L}_{\text{\tiny{Fobi}}}$,  $\tilde{\bf L}_{\text{\tiny{HOPCov}}}$, $\tilde{\bf L}_{\text{\tiny{FIca}}}$, and $\tilde{\bf L}_{\text{\tiny{KIca}}}$ given in (\ref{eq:PreliminaryEstimators});\vspace{-1mm}
\item[(b)]the one-step $R$-estimators $\utLbf^{\ast}\big( \tilde{\bf L} \big)$ based on the   preliminary ones as listed under (a) and   data-driven skew-$t$ scores;\vspace{-2mm} 
\item[(c)]the one-step $R_+$-estimators $\utLbf^{\ast}_+\big( \tilde{\bf L} \big)$ based on the   preliminary ones as listed under (a) and data-driven  Student $t$ scores.\vspace{-2mm}
\end{enumerate}
For component densities 
 (B, D, F),  we moreover computed, for $n=100$  and~$1,000$,   \vspace{-2mm}
\begin{enumerate}
\item[(d)]the $T$-multistep versions of the $R$-estimators based on the   preliminary $\tilde{\bf L}_{\text{\tiny{HOPCov}}}$ and~$\tilde{\bf L}_{\text{\tiny{KIca}}}$, still with  data-driven skew-$t$ scores,  $T=1,\ldots , 5$.\vspace{-2mm}
\end{enumerate}

Many performance indices have been proposed in the literature to compare the performances of various  ICA algorithms in  simulation studies: see \citet{mm1994}, Theis et al.~(2004), \citet{d2007}, \citet{o2010}, Ilmonen et al.~(2010). The most popular one remains the so-called  \emph{Amari error} (Amari et al.~1996), which we are using here. We also considered (see  the {\it supplemental material section} for  additional tables) the  minimum distance index recently proposed by   Ilmonen et al.~(2010) which, however, essentially leads to the same conclusions. 

 The Amari error  $\text{AE}( {\bf A}, {\bf B})$ of a $k \times k$ matrix $\bf{A}$ with respect to a nonsingular $k \times k$ matrix ${\bf B}$  (it is not a matrix norm)  is defined as  \vspace{-2mm}  
\be
\label{eq:AmariError}
\text{AE}( {\bf A}, {\bf B}) = \frac{1}{2k(k-1)} \bigg( \sum_{i=1}^k \bigg( \frac{ \sum_{j=1}^k | w_{ij} |}{ \max_j | w_{ij} | } - 1\bigg) +  \sum_{j=1}^k \bigg( \frac{ \sum_{i=1}^k | w_{ij} |}{ \max_i | w_{ij} | } - 1\bigg) \bigg),\vspace{-2mm}
\en 
with ${\bf W} := {\bf B}\inv {\bf A} = \left[ w_{ij} \right]$, and takes values between $0$ and $1$,  with $\text{AE}( {\bf A}, {\bf B})$ close to~$0$ indicating higher similarity between ${\bf A}$ and ${\bf B}$ and~$\text{AE}( {\bf A}, {\bf B}) = 0$ when $ {\bf B}\inv {\bf A} = {\bf P}{\bf D}$ for some~$k \times k$ permutation matrix~${\bf P}$ and~$k \times k$ diagonal matrix~${\bf D}$, so that  $\text{AE}( {\bf A}, {\bf B}) = 0$ for   observationally equivalent mixing matrices  (such that $\Pi ({\bf A})= \Pi ({\bf B})$).  
When computed   from matrices in ${\cal M}^1_k$ (where moreover row and column permutations/rescaling are irrelevant), the Amari error  thus defines a natural measure of performance.  

Figures~\ref{fig:SimGroup1}-\ref{fig:SimGroup4} below are providing boxplots for the $M=1,000$ Amari distances associated with the various  simulation setups. Since Amari distances are intrinsically nonnegative, their extreme values  are in the right tail only, and  the relevant boxplots  (showing the first quartile, the median, the third quartile, and  a~0.95 quantile whisker) are ``one-sided".  
Figures~\ref{fig:SimGroup1}-\ref{fig:SimGroup3} are dealing with component densities~(A,~B,~C), (D,~E,~F), and (G,~H,~I), 
respectively. Figure \ref{fig:SimGroup4} shows the results for the $T$-step versions of the $R$-estimators based on~$\tilde{\bf L}_{\text{\tiny{TylHub}}}$  and~$\tilde{\bf L}_{\text{\tiny{KIca}}}$, under components densities~(B,~D,~H), as described in (d) above.

 Inspection of Figures   \ref{fig:SimGroup1}-\ref{fig:SimGroup3}  reveals that Kernel-ICA is, almost uniformly, and sometimes quite substantially, the best preliminary  under asymmetric  (A-E) or symmetric (F,~G)  setups when no contamination is involved. 
Combined with $R$-estimation (data-driven skew-$t$ scores), they yield the typical winners in these setups, even though, in principle, $R_+$-estimators should do better under symmetric densities. 
The best performances of $R$-estimators seem to take place under heavy tails (Cauchy and stable component densities)---thanks, probably, to the data-driven  selection of scores. 
Note that partial symmetry in setups~(A,~B,~C) does not really help $R_+$-estimation much. 

Setups~(H) and~(I) in Figure \ref{fig:SimGroup3} show evidence that contamination can dismantle the correlation structure of the model, with the performances of each preliminary, including Kernel-ICA,  deteriorating dramatically. The two-scatter  preliminary   constructed from the robust Tyler and Huber   estimators, though, resist better than the rest. 
The  $R$-estimators quite  significantly enhance  each preliminary when $n=1000$, and still improve them, albeit less noticeably, when $n=100$, thus partially compensating the impact of contamination on the preliminary estimators.  
Unsurprisingly, increasing the  contamination level  from $2\%$ (H) to $5\%$ (I) deteriorates the quality of the preliminaries and the $R$-estimators based on them---however, $R$-estimation still provides striking gains   when $n=1000$.

\begin{landscape}
\begin{figure}%[H]%[htbp]%
\begin{subfigure}[h]{1\linewidth}
\captionsetup{font=footnotesize}
\caption{Sample size $n=100$}
\begin{center}
\includegraphics[scale=.44]{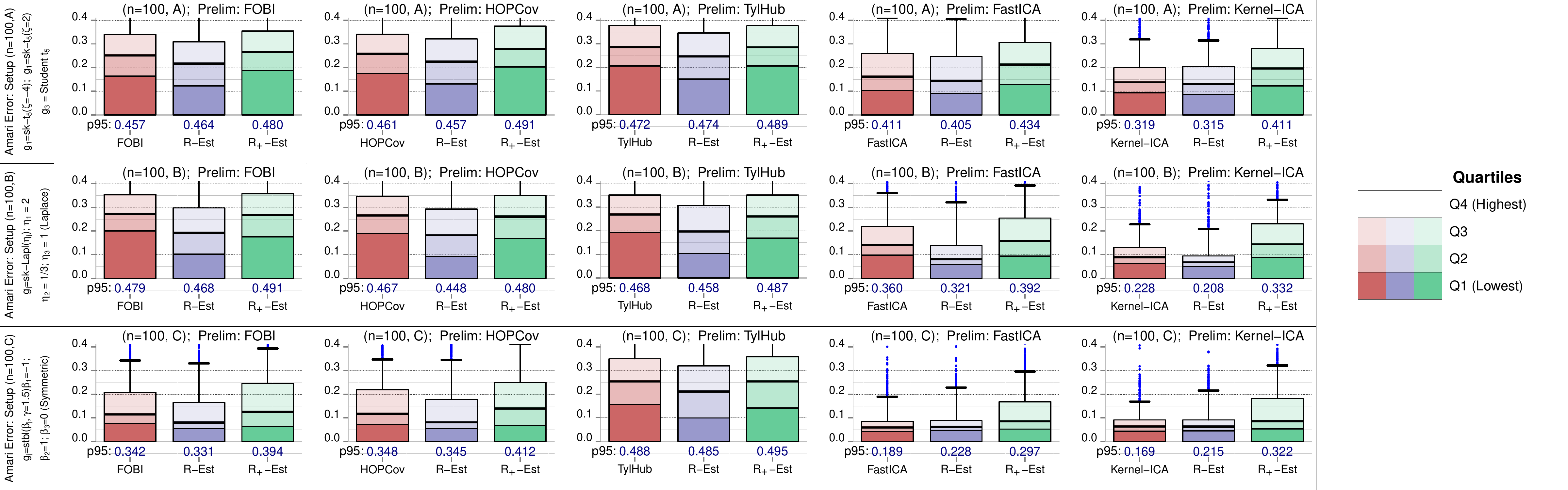}
\end{center}
\label{fig:SimGroup1-n100}
\end{subfigure}%
\\
\begin{subfigure}[h]{1\linewidth}
\captionsetup{font=footnotesize}
\caption{Sample size $n=1,000$}
\begin{center}
\includegraphics[scale=.44]{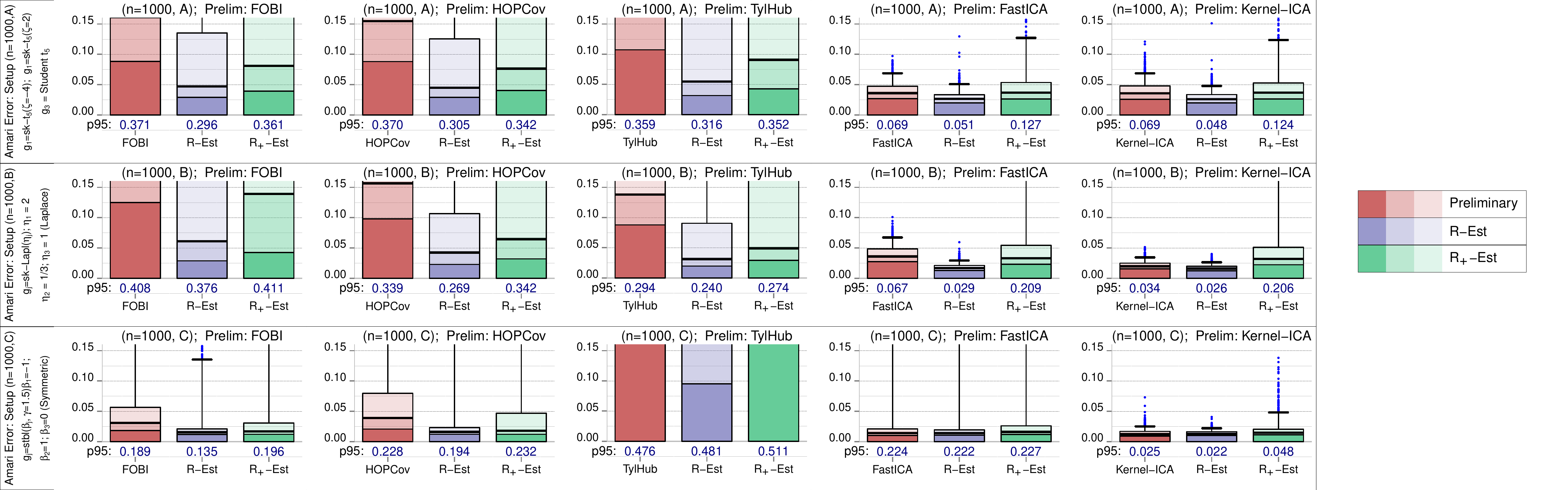}
\end{center}
\label{fig:SimGroup1-n1000}
\end{subfigure}%
\captionsetup{font=small}
\caption{
Boxplots of Amari errors obtained in $M=1,000$ replications of the setup 
 $(n, \text{S})$, $n~\! =~\! 100,\ 1,000$, $\text{S}=A,\, B,\, C,\vspace{1mm}$ for the preliminary 
$\tilde{\bf L}=\tilde{\bf L}_{\text{\tiny{Fobi}}}$,  
$\tilde{\bf L}_{\text{\tiny{HOPCov}}}$, 
$\tilde{\bf L}_{\text{\tiny{TylHub}}}$, 
$\tilde{\bf L}_{\text{\tiny{FIca}}}$, 
$\tilde{\bf L}_{\text{\tiny{KIca}}}$, the one-step $R$-estimator 
$\utLbfCaption^{\ast} ( \tilde{\bf L} )$, and  the one-step $R_+$-estimator 
$\utLbfCaption^{\ast}_+( \tilde{\bf L}) $  based on  the same preliminaries,\vspace{-1mm} with data-driven skew-$t$  and Student-$t$ scores, respectively. 
}\label{fig:SimGroup1}
\end{figure}
%\thispagestyle{empty}

%\newpage

\begin{figure}%[H]%[htbp]%[H]
\begin{subfigure}[h]{1\linewidth}
\captionsetup{font=footnotesize}
\caption{Sample size $n=100$}
\begin{center}
\includegraphics[scale=0.44]{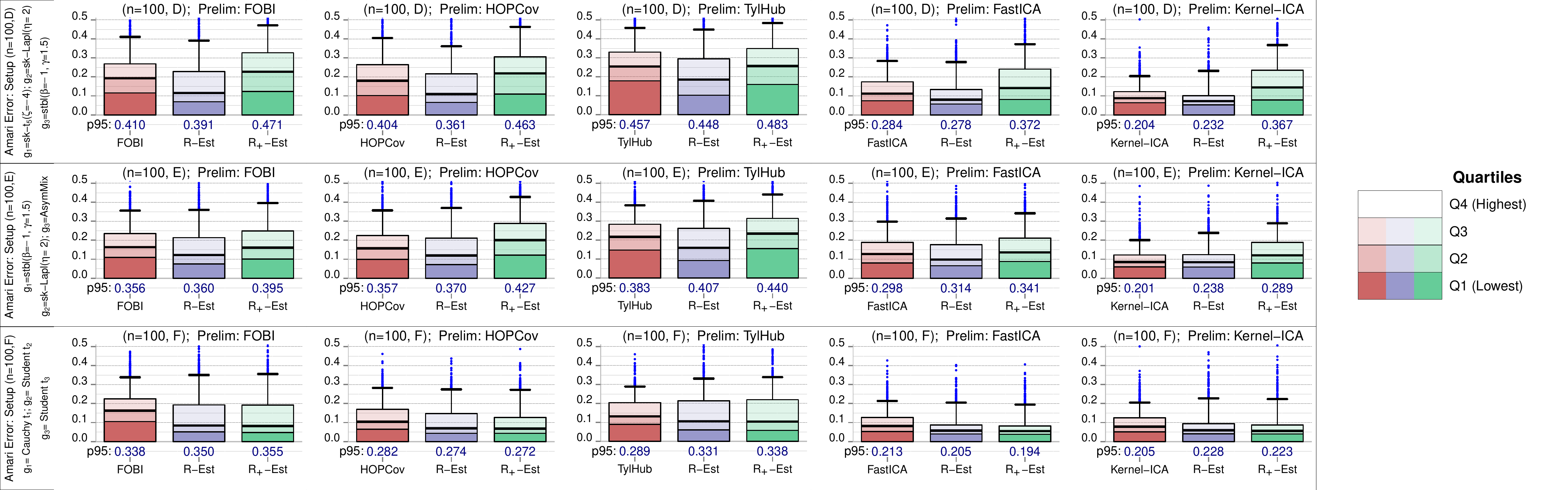}
\end{center}
\label{fig:SimGroup2-n100}
\end{subfigure}%
\\ \\
\begin{subfigure}[h]{1\linewidth}
\captionsetup{font=footnotesize}
\caption{Sample size $n=1,000$}
\begin{center}
\includegraphics[scale=0.44]{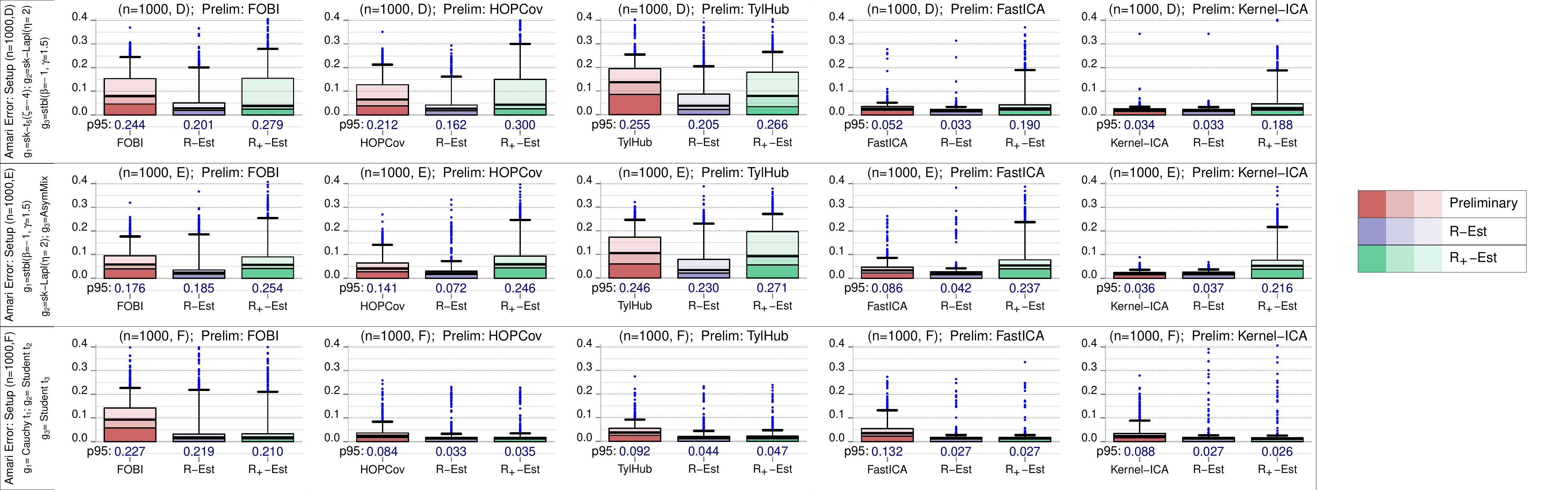}
\end{center}
\label{fig:SimGroup2-n1000}
\end{subfigure}%
\captionsetup{font=small}
\caption{
Boxplots of Amari errors obtained in $M=1,000$ replications of the setup $(n, \text{S})$, $n~\!\! =~\! \!100,\ 1,000$, $\text{S}=D,\, E,\, F,\vspace{1mm}$ for the 
    preliminary $\tilde{\bf L}=\tilde{\bf L}_{\text{\tiny{Fobi}}}$,  $\tilde{\bf L}_{\text{\tiny{HOPCov}}}$, 
    $\tilde{\bf L}_{\text{\tiny{TylHub}}}$, 
    $\tilde{\bf L}_{\text{\tiny{FIca}}}$, $\tilde{\bf L}_{\text{\tiny{KIca}}}$, the one-step $R$-estimator $\utLbfCaption^{\ast} ( \tilde{\bf L} )$, and  the one-step $R_+$-estimator  $\utLbfCaption^{\ast}_+( \tilde{\bf L}) \vspace{-1mm}$  based on  the same preliminaries, with data-driven skew-$t$  and Student-$t$ scores, respectively.} 
\label{fig:SimGroup2}
\end{figure}\vspace{-6mm}

%\newpage

\begin{figure}%[H]
\begin{subfigure}[h]{1\linewidth}
\captionsetup{font=footnotesize}
\caption{Sample size $n=100$}
\begin{center}
\includegraphics[scale=0.44]{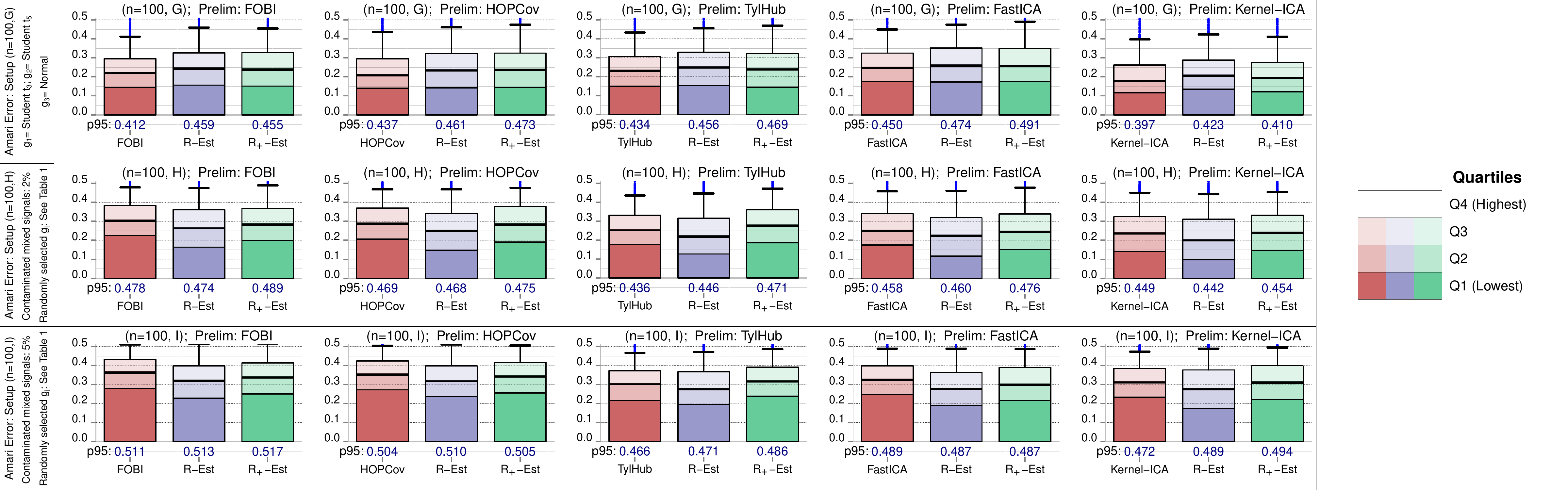}
\end{center}
\label{fig:SimGroup3-n100}
\end{subfigure}%
\\ \\
\begin{subfigure}[h]{1\linewidth}
\captionsetup{font=footnotesize}
\caption{Sample size $n=1,000$}
\begin{center}
\includegraphics[scale=0.44]{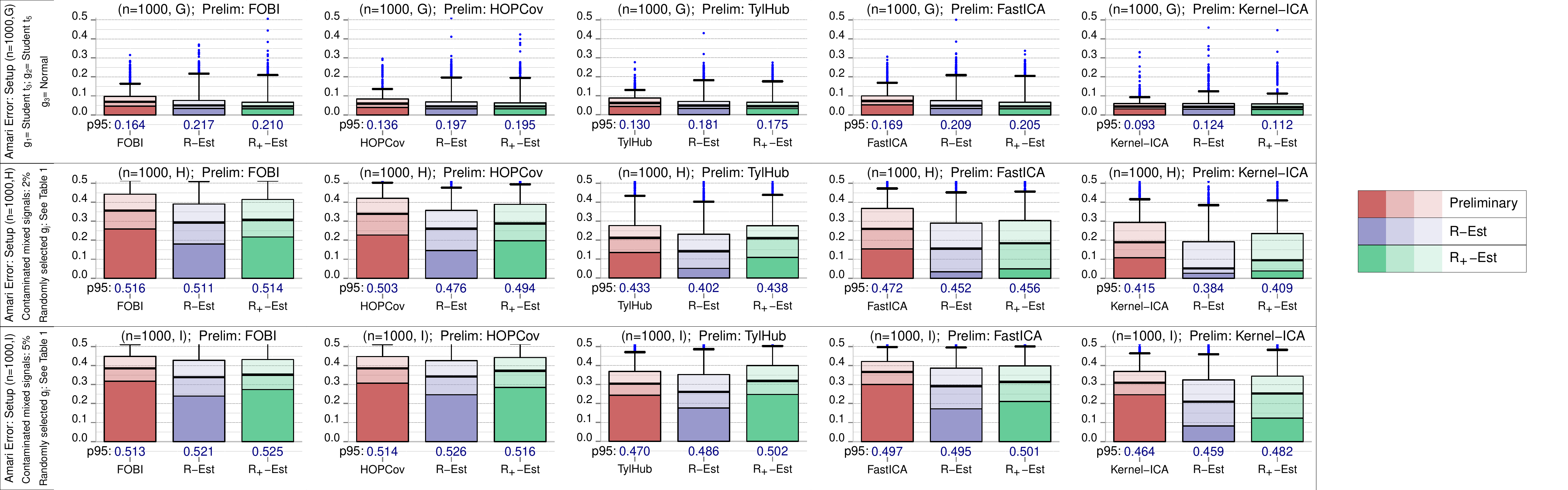}
\end{center}
\label{fig:SimGroup3-n1000}
\end{subfigure}%
\captionsetup{font=small}
\caption{Boxplots of Amari errors  
  obtained in $M=1,000$ replications of the setup $(n, \text{S})$, $n~\!\! =~\! \!100,\ 1,000$, $\text{S}=G,\, H,\, I,\vspace{1mm}$     for the 
    preliminary $\tilde{\bf L}=\tilde{\bf L}_{\text{\tiny{Fobi}}}$,  $\tilde{\bf L}_{\text{\tiny{HOPCov}}}$,
    $\tilde{\bf L}_{\text{\tiny{TylHub}}}$, 
     $\tilde{\bf L}_{\text{\tiny{FIca}}}$, $\tilde{\bf L}_{\text{\tiny{KIca}}}$, the one-step $R$-estimator $\utLbfCaption^{\ast} ( \tilde{\bf L} )$, and  the one-step $R_+$-estimator $\utLbfCaption^{\ast}_+( \tilde{\bf L}) \vspace{-1mm}$     based on  the same preliminaries, with data-driven skew-$t$ and Student-$t$ scores, respectively.} 
 \label{fig:SimGroup3}
   \end{figure}\vspace{-4mm}
  
%  \newpage
  \begin{figure}%[H]%[htbp]%
\begin{subfigure}[h]{0.5\linewidth}
\captionsetup{font=footnotesize}
\caption{Sample size $n=100$}
\begin{center}
\includegraphics[scale=.58]{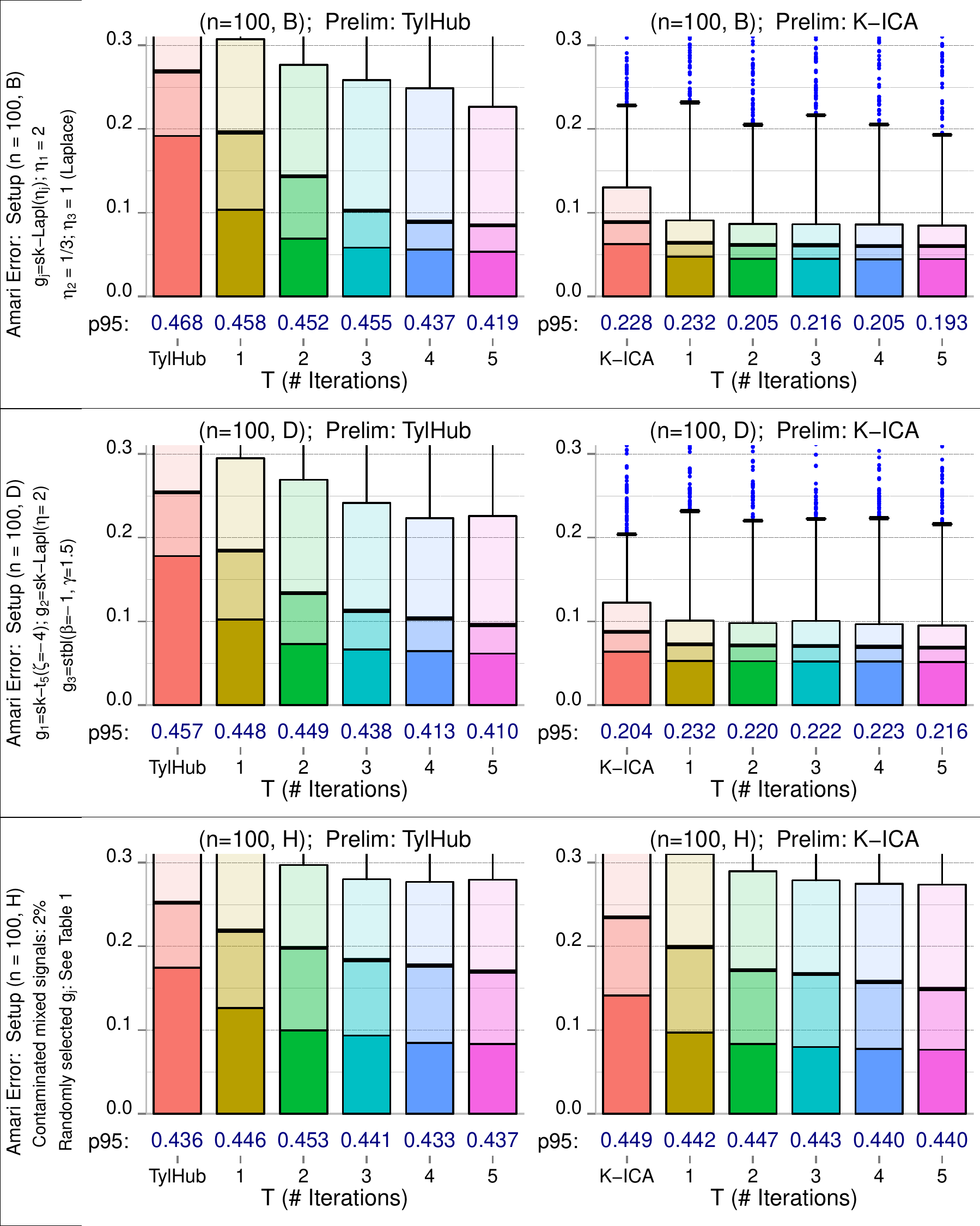}
\end{center}
\label{fig:SimGroup4-n100}
\end{subfigure}%
\begin{subfigure}[h]{0.5\linewidth}
\captionsetup{font=footnotesize}
\caption{Sample size $n=1,000$}
\begin{center}
\includegraphics[scale=.58]{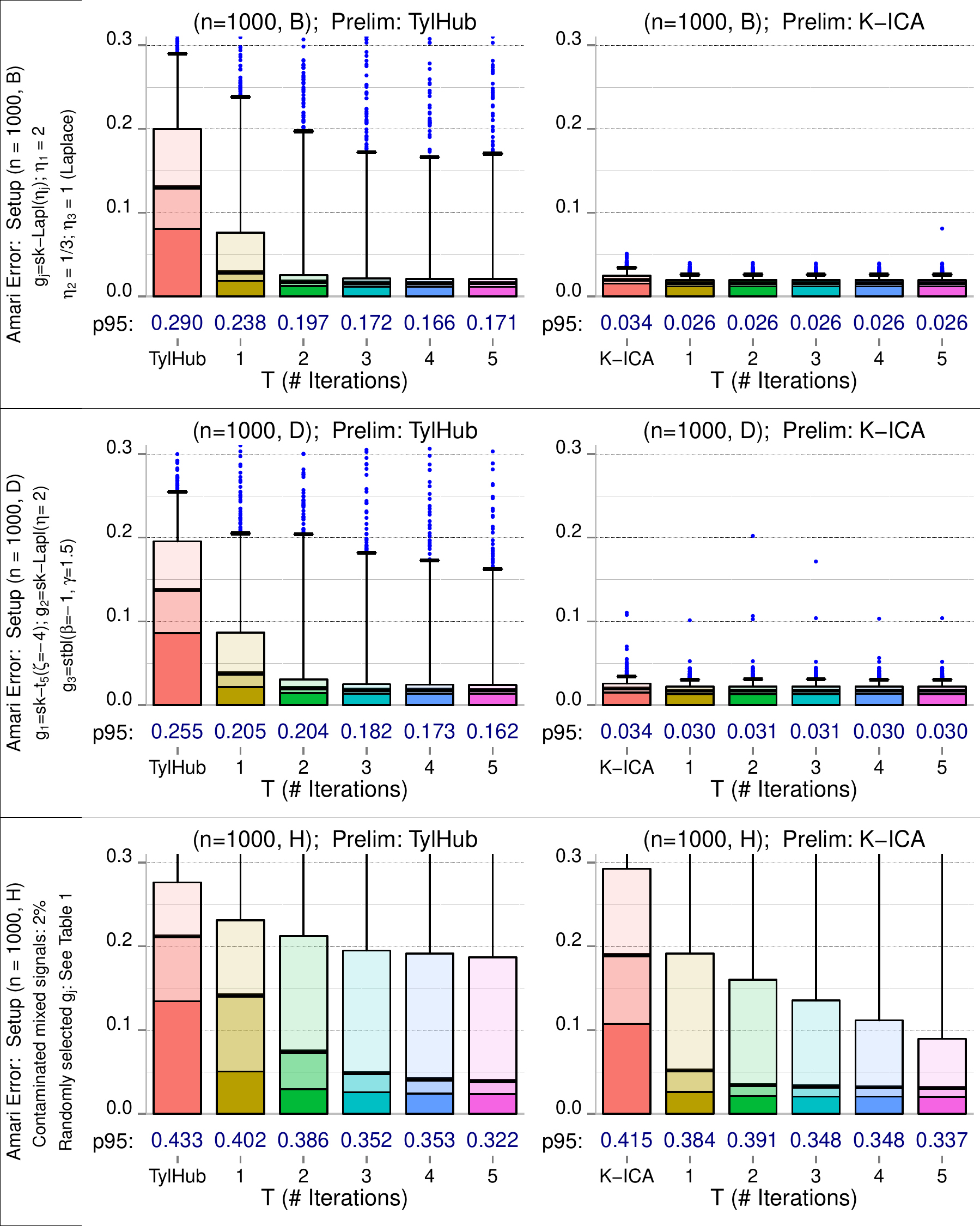}
\end{center}
\label{fig:SimGroup4-n1000}
\end{subfigure}%
\captionsetup{font=small}
\caption{
Boxplots of Amari errors 
  obtained in $M=1,000$ replications of the setup $(n, \text{S})$, $n=100, 1000$, $\text{S}=B,\, D,\, H,\vspace{1mm}$ 
   for the $T$-step $R$-estimator $\utLbfCaption^{\ast} ( \tilde{\bf L} )$   based on  preliminary  
      $\tilde{\bf L}=  \tilde{\bf L}_{\text{\tiny{TylHub}}}$ and     $\tilde{\bf L}_{\text{\tiny{KIca}}}\vspace{-3mm}$, respectively,
and  data-driven skew-$t$ scores, $T=1,\ldots, 10$}\label{fig:SimGroup4}
\end{figure}

\end{landscape}

Finally, Figure~\ref{fig:SimGroup4} shows how iterating the rank-based correction can improve a poor preliminary. The Tyler-Huber two-scatter  estimator is typically outperformed by the Kernel-ICA one,  except in setup~(H) where  contamination   leads to a drastic deterioration of Kernel-ICA. 
 After a few iterations, both the  Tyler-Huber- and Kernel-ICA-based $R$-estimators  perform quite similarly; the latter needs less iterations, though,   to reach its best performance in setups~(B) and~(D). For $n=1,000$, starting from Kernel-ICA in either of those setups, one step is essentially sufficient.
However, $R$-estimators based on either preliminary improve considerably over multiple iterations in setup~(F) with contaminated mixed samples.

\section{An application %to demixing
in  image analysis}\label{Sec:Application}
% ----- ----- ----- ----- ----- ----- ----- ----- ----- ----- ----- ----- ----- ----- ------
% ----- ----- ----- ----- ----- ----- ----- ----- ----- ----- ----- ----- ----- ----- ------
The objective of ICA in applications is typically to recover source signals from a sequence of observed mixed signals. As such, they are widely used in a variety of contexts where the fundamental assumptions (\ref{ICmod})-(\ref{eq:indf}) of ICA are unlikely to hold.  One of the merits of existing ICA such as {FastICA} and {Kernel-ICA} is that they resist reasonably well to such theoretically unwarranted applications. Such statements, of course, remain  unavoidably vague: in the absence of a formal model, indeed, pertinent benchmarks for performance evaluation  are hard to define.  Demixing acoustic signals or images, where ``readability" of the final result appears as an obvious criterion, are an exception.  Therefore, in this section, we apply various  ICA estimation methods, including the   rank-based ones, 
  to the demixing of      images that clearly do not satisfy the assumptions we have been making throughout this paper.  
The results are shown in  Figure~\ref{fig:EstimatedImages}. Their quality is best evaluated by eye-inspection, but a quantitative assessment can be made via the Amari distances provided in Table~\ref{fig:MultiStepErrors}a and b. Although traditional ICA techniques provide reasonable results, our rank-based techniques appear to bring quite significant improvements.  

A black-and-white digital image with resolution~$h \times w$ ($h,w \in \mathbb{N}$) can be represented by a \emph{pixel matrix}~${\bf Z} = (Z_{rs}) \in [0, 1]^{h \times w}$, where  
 $Z_{rs}$   represents the ``greyness" of the pixel located in the $r$th row and $s$th column; 
 if $Z_{rs} = 0 $, the pixel is pure black, and if $Z_{rs} = 1$, the pixel is pure white. In this example, we mix three source images of US currency notes, represented by the pixel matrices~${\bf Z}_j = (Z_{j;rs})$, $j=1,2,3$ ($h:=65$ and $w := 150$). These three source images are turned into three mixed ones, with  pixel matrices  ${\bf X}_j = (X_{j;rs})$, $j=1,2,3$, where~$(X_{1;rs},X_{2;s},X_{3;rs})\pr = {\bf L}^{\star}(Z_{1;rs},Z_{2;s},Z_{3;rs})\pr$, with ${\bf L}^{\star}={\bf I}_3 + 0.95({\pmb 1}_3-{\bf I}_3)\in \mathcal{M}_3^1$  (denoting by ${\pmb 1}_3$   a $3\times 3$ matrix of ones); ${\bf L}^{\star}$ thus has a diagonal of ones, all off-diagonal enties being~0.95. The source and mixed images  are displayed in Figure \ref{fig:ImageMixed}. 
 
 \begin{figure}[htbp]
\small
 \begin{subfigure}[b]{1\textwidth}
\captionsetup{font=footnotesize}
\caption{Top row:  the three source images. Bottom row:  the three mixed images.  \label{fig:ImageMixed} }
\begin{center}
\includegraphics[width=4.5in]{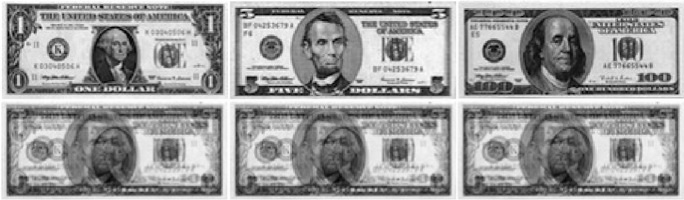}
\end{center}
\end{subfigure}%
\\ \\
\begin{subfigure}[b]{1\textwidth}
\captionsetup{font=footnotesize}
\caption{{FOBI} preliminary. Top row:  the $\tilde{\bf L}_{\text{\tiny{Fobi}}}$-demixed images.  Bottom row:   the $\utLbfCaption^{\ast}_{(20)}( \tilde{\bf L}_{\text{\tiny{Fobi}}} )$-demixed~images.\label{fig:FOBI-Image} }
\begin{center}
\includegraphics[width=4.5in]{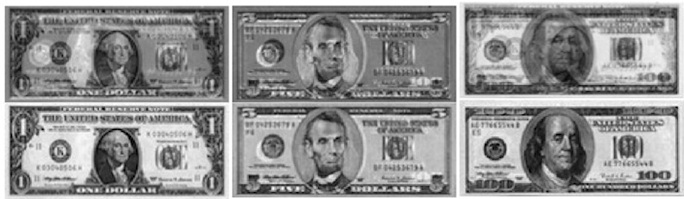}
\end{center}
\end{subfigure}%
\\ \\
\begin{subfigure}[b]{1\textwidth}
\captionsetup{font=footnotesize}
\caption{{FastICA} preliminary.  Top row:  the $\tilde{\bf L}_{\text{\tiny{FIca}}}$-demixed images. Bottom row:  the $\utLbfCaption^{\ast}_{(20)}( \tilde{\bf L}_{\text{\tiny{FIca}}} )$-demixed images. \label{fig:FICA-Image} }
\begin{center}
\includegraphics[width=4.5in]{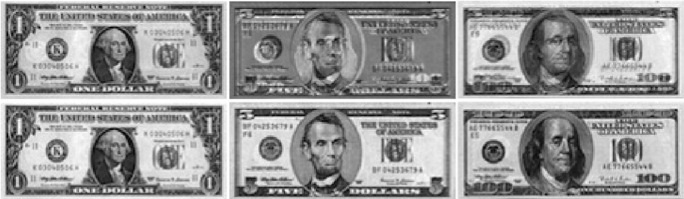}
\end{center}
\end{subfigure}%
\\ \\
\begin{subfigure}[b]{1\textwidth}
\captionsetup{font=footnotesize}
\caption{{Kernel-ICA} preliminary.  Top row:  the $\tilde{\bf L}_{\text{\tiny{KIca}}}$-demixed images. Middle row:  the $\utLbfCaption^{\ast}_{(20)}( \tilde{\bf L}_{\text{\tiny{KIca}}} )\vspace{-2mm}$-demixed~images. Bottom row:  the $\utLbfCaption^{\ast}_{+(20)}( \tilde{\bf L}_{\text{\tiny{KIca}}} )$-demixed~images. \label{fig:KICA-Image} }
\begin{center}
\includegraphics[width=4.5in]{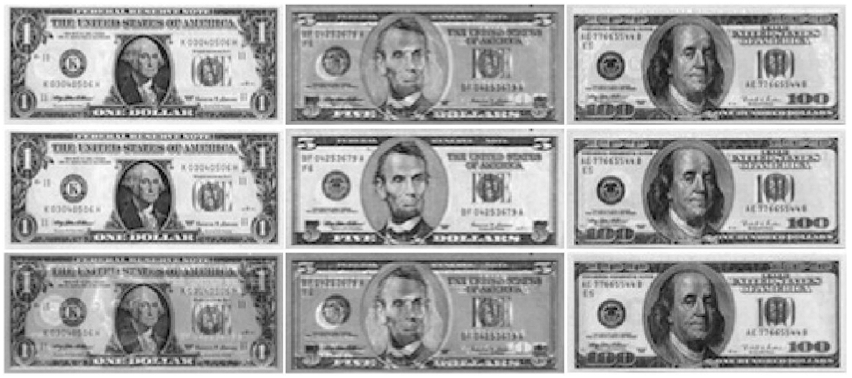}
\end{center}
\end{subfigure}%
\captionsetup{font=small}\caption{   Figure \ref{fig:ImageMixed} contains the three source images %that are mixed by ${\bf L}^{\star} \in \mathcal{M}_3^1$ (with a diagonal of ones and off-diagonals all equal $0.95$) into
 and the  three mixed ones. Figures~\ref{fig:FOBI-Image}, \ref{fig:FICA-Image}, and \ref{fig:KICA-Image} show the demixed images obtained from multistep data-driven skew-$t$ score $R$-estimators, based on FOBI, FastICA, and Kernel-ICA   preliminaries, respectively. In Figure~\ref{fig:KICA-Image}, the result of a Kernel-ICA-based, data-driven Student-$t$ score multistep $R_+$-estimator method are also provided.}\label{fig:EstimatedImages}
\end{figure}

 We then performed ICA  estimation on the $n=65\times 150=9,750$ three-dimensional observations $(X_{1;rs},X_{2;s},X_{3;rs})$ by computing the multistep $R$-estimators $\utLbf^{\ast}_{(T)}( \tilde{\mathbf{L}} )$  with data-driven skew-$t$ scores  (\ref{eq:MultiStepRDataDriven}) and  preliminary estimators  $\tilde{\mathbf{L}} = \tilde{\bf L}_{\text{\tiny{Fobi}}}$, $\tilde{\bf L}_{\text{\tiny{FIca}}}$, and $\tilde{\bf L}_{\text{\tiny{KIca}}}$ as described  in~(\ref{eq:PreliminaryEstimators}), and $T=1,\ldots,20$;   the $\tilde{\bf L}_{\text{\tiny{HOPCov}}}$ and $\tilde{\bf L}_{\text{\tiny{TylHub}}}$ preliminaries were omitted because symmetrizing the \emph{HOP} and {Tyler} scatter matrices (about $10^8$  pairwise differences) was computationally too heavy.  Figures~\ref{fig:FOBI-Image}, \ref{fig:FICA-Image}, and \ref{fig:KICA-Image} contain  the resulting $\tilde{\bf L}$-   and  $\utLbf^{\ast}_{(20)}( \tilde{\mathbf{L}} )$-demixed images. Of all preliminary estimators considered, $\tilde{\bf L}_{\text{\tiny{KIca}}}$ seems to provide the best results. In Figure \ref{fig:KICA-Image}, we therefore also provide the demixed images resulting from the Ilmonen and Paindaveine   estimator~$\utLbf^{\ast}_{+(T) }\big(\tilde{\bf L}_{\text{\tiny{KIca}}} \big)$ with kernel-ICA preliminary.  
Irrespective of the preliminary, there is a clear and quite significant  visual enhancement, attributable to the use of ranks, in the $R$-estimation method. Our R-estimators, moreover, substantially outperform the signed-rank~ones. 

Those eye-inspection conclusions are confirmed and reinforced by the graphs in Figure~\ref{fig:MultiStepErrors}, which  reports  the Amari errors $\text{AE}\big({\bf L}^{\star}, \utLbf^{\ast}_{(T)}( \tilde{\mathbf{L}} )  \big)$  (\ref{eq:AmariError}) for the $R$-  and $R_+$-estimators of~${\bf L}^{\star}$ and $T=0,\ldots,20$. 
As $T$ increases, for all multistep $R$-estimators those errors  appear to  converge to some common limit  independent of the preliminary $\tilde{\mathbf{L}}$. For $\tilde{\bf L}=\tilde{\bf L}_{\text{\tiny{FIca}}}$ or~$\tilde{\bf L}_{\text{\tiny{KIca}}}$, the   decrease   is  quite significant over $T=1,\ldots,5$. The same decrease   is much slower \linebreak  for $\tilde{\bf L}=\tilde{\bf L}_{\text{\tiny{Fobi}}}$, but the final result, as $T$ gets close to 20, is the same, suggesting that  rank-based corrections eventually do compensate for a poorer performance of the preliminary. 
The same Amari errors $\text{AE}\big(  {\bf L}^{\star}, \utLbf^{\ast}_{+(T) }\big( \tilde{\mathbf{L}} \big) )  \big)$ were evaluated for the multistep (and data-driven-score) versions $\utLbf^{\ast}_{+(T) }\big( \tilde{\mathbf{L}} \big)$ of the  Ilmonen and Paindaveine $R_+$-estimators. The results,  in  Figure~\ref{fig:MultiStepErrorsSignedRank}, clearly show that   signed-ranks fail, 
which is hardly surprising, since there is little reason for ``greyness"  in the source images considered here to exhibit any symmetric behavior. \vspace{8mm}

\begin{figure}[htbp]
\begin{center}

\begin{subfigure}[b]{0.45\textwidth}
\begin{center}
\includegraphics[width=3in]{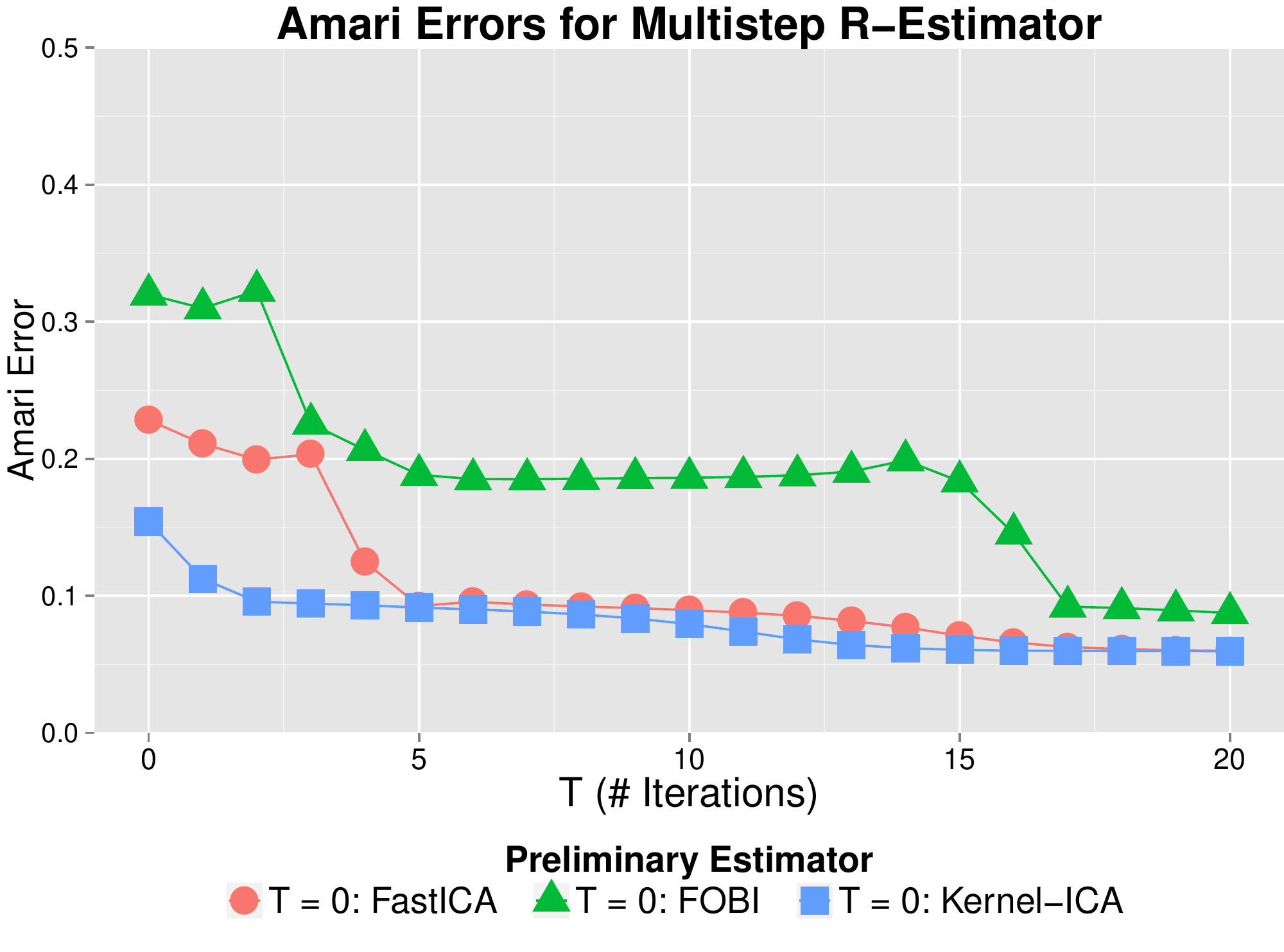}  \vspace{-4mm}
\end{center} \vspace{-4mm}
\captionsetup{font=footnotesize}
\caption{Multistep $R$-estimators. % evaluated with respectively labeled preliminary estimator represented by $T=0$. 
 \label{fig:MultiStepErrorsR} 
}
\end{subfigure}%
\qquad 
\begin{subfigure}[b]{0.45\textwidth}
\begin{center}
\includegraphics[width=3in]{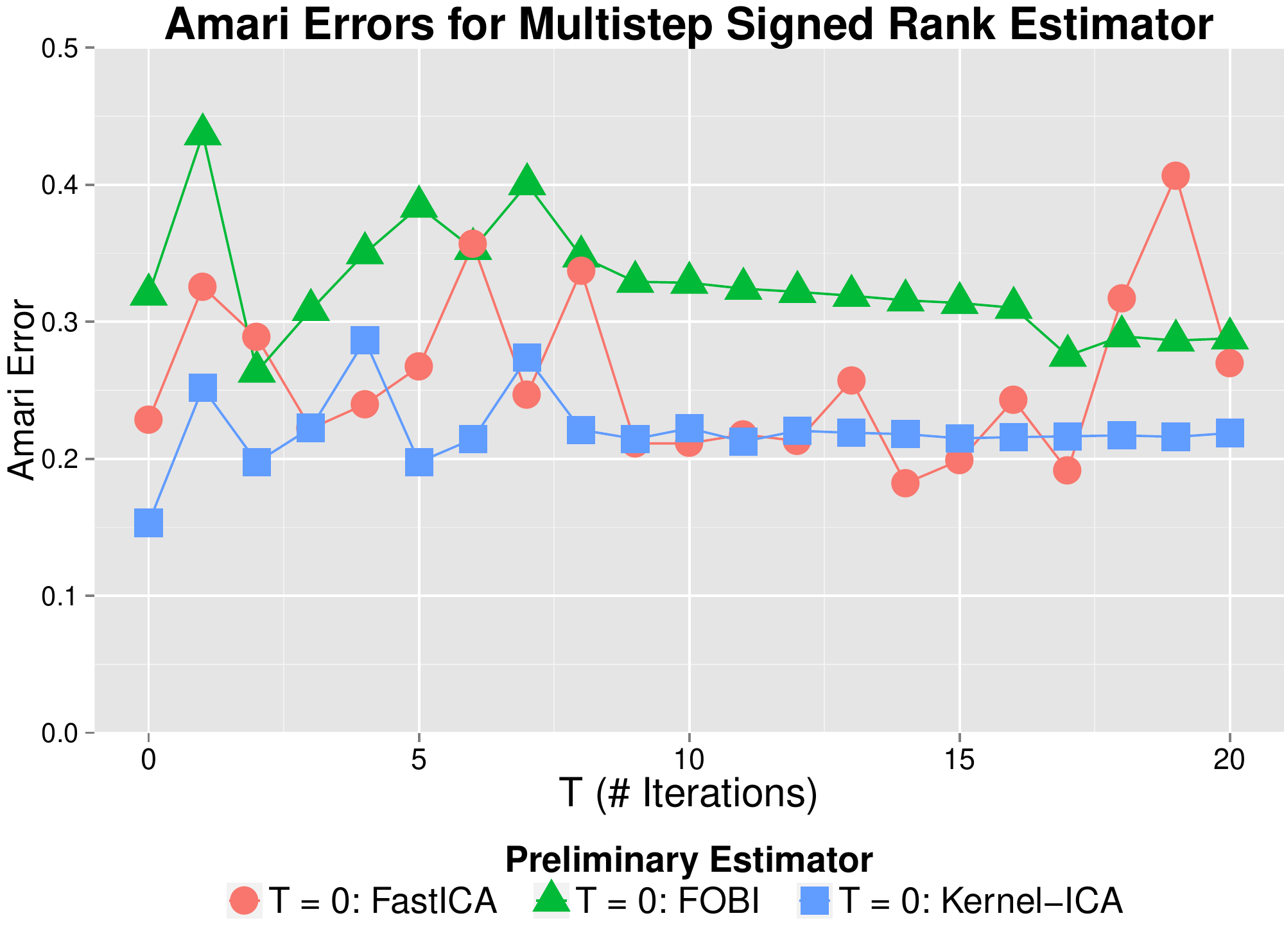} \vspace{-4mm}
\end{center} \vspace{-4mm}
\captionsetup{font=footnotesize}
\caption{ Multistep $R_+$-estimators.}% evaluated with respectively labeled preliminary estimator represented by $T=0$.
  \label{fig:MultiStepErrorsSignedRank}
\end{subfigure}%
\end{center}
\captionsetup{font=small}
\caption{ The Amari errors $\text{AE}\big(  {\bf L}^{\star}, \hat{\bf L} \big)$  % (\ref{eq:AmariError})
  for the multistep $R$-estimators $\utLbfCaption^{\ast}_{(T)}( \tilde{\mathbf{L}} )\vspace{-2.8mm}$ %~(\ref{eq:MultiStepRDataDriven})
  and the multistep $R_+$-estimators  $\utLbfCaption^{\ast}_{+(T) }\big( \tilde{\mathbf{L}} \big)$  
  shown  in Figure~\ref{fig:EstimatedImages} and based on the preliminary  
   estimators $\tilde{\bf L} = \tilde{\bf L}_{\text{\tiny{Fobi}}}\vspace{-3mm}$,  $\tilde{\bf L}_{\text{\tiny{FIca}}}$, and $\tilde{\bf L}_{\text{\tiny{KIca}}}\vspace{-2mm}$, for $T=1, \ldots, 20$. \label{fig:MultiStepErrors}}\end{figure}

\bibliography{S6-bibliography-ICA-REstimation}
\nocite{*}
\newpage 
\appendix
\section{Supplemental material: Proofs}
\subsection{Proof of Proposition \ref{prop:ParametricULAN}}
\citet{opt2010} establish ULAN  for  ICA models under the assumption that each~$f_j$ is symmetric. Their proof consists in showing that the sufficient conditions of  Lemma 1 in Swensen~(1985) are satisfied.  
{\it Mutatis mutandis}, that proof  still goes through in the present case, with the same central sequence; only the information matrix is affected.
That  matrix depends on the covariance matrix of $\text{vec}\big({\bf T}\n_{{\bf L}; {\pmb \mu}, f}  \big)$ under  $\mathrm{P}\n_{{\pmb \mu},{\bf L}; f}$, which takes the form $$
\mathrm{E}\big[ \text{vec}\big( {\bf T}\n_{{\bf L}; {\pmb \mu}, f} \big) \text{vec}\big( {\bf T}\n_{{\bf L}; {\pmb \mu}, f} \big)\pr   \big] = \sum_{ r, s, p, q=1 }^k  \mathrm{E}\big[ \big( {\bf T}\n_{{\bf L}; {\pmb \mu}, f}  \big)_{r,p} \big( {\bf T}\n_{{\bf L}; {\pmb \mu}, f}  \big)_{s,q}   \big]  \mathbf{e}_p \mathbf{e}_q^{\prime} \otimes  \mathbf{e}_r \mathbf{e}_s^{\prime}.
$$
Because $\big({\bf T}\n_{{\pmb \mu}, {\bf L}, f}  \big)_{r,p}$ is a sum of i.i.d.\ random variables with expectation zero, 
$$
\mathrm{E}\big[ \big( {\bf T}\n_{{\bf L}; {\pmb \mu}, f}  \big)_{r,p} \big( {\bf T}\n_{{\bf L}; {\pmb \mu}, f}  \big)_{s,q}   \big] = \mathrm{E}\big[ (\varphi_{f_r}( Z_{1,r} ) Z_{1,p} - \delta_{rp})(\varphi_{f_s}(Z_{1,s}) Z_{1,q} - \delta_{sq}) \big]\quad r, s, p , q\in \{1, \ldots, k \}
$$
where the $Z_{1,j}$'s are i.i.d.\  with density $f_j$ under  $\mathrm{P}\n_{{\pmb \mu},{\bf L}; f}$ and $\delta_{rp} $ is
 the classical Kronecker index.
Evaluating those  expectations 
 yields ${\bf G}_f$ defined in (\ref{eq:Gf}). \cqfd
 % ----- ----- ----- ----- ----- ----- ----- ----- ----- ----- ----- ----- ----- ----- ----- ----- ----- ----- ----- ----- ----- ----- ----- -----

% ----- ----- ----- ----- ----- ----- ----- ----- ----- ----- ----- ----- ----- ----- ----- ----- ----- ----- ----- ----- ----- ----- ----- -----

% ----- ----- ----- ----- ----- ----- ----- ----- ----- ----- ----- ----- ----- ----- ----- ----- ----- ----- ----- ----- ----- ----- ----- -----
\subsection{Proofs for Propositions \ref{prop:RankCenSeqEquivalence} and \ref{prop:AsymptoticRepresentation}  }
Propositions \ref{prop:RankCenSeqEquivalence}(i) and \ref{prop:AsymptoticRepresentation}(i)  follow from Lemma \ref{lemma:rankCorrelationEquivalence} below,  itself adapted from Theorem~V.1.8 in \citet{hs1967}. 
Consider a triangular array $\big( U_1\n, V_1\n \big),\ldots ,\big(U_n\n, V_n\n \big)$, $n\in\mathbb{N}$ and two scores $\varphi_U,\, \varphi_V$ such that   
\begin{enumerate}
\item[(D1)]  $ U_i\n$ and  $V_i\n$,  $i=1, \ldots, n$,  are   uniform over $[0,1]$ and mutually independent, and 
\item[(D2)]  $\varphi_U,\, \varphi_V: (0,1) \to \mathbb{R}$ are  square-integrable and satisfy~(A5).
\end{enumerate}
Denote by  $R_i\n$   the rank of $U_i\n$ amongst $U_i\n, \ldots, U_i\n$, by $Q_i\n$ the rank of $V_i\n$  amongst $V_1\n, \ldots, V_n\n$, and define\vspace{-2mm}
%\small
\begin{align*}
a\n_{\mathrm{ex}}(i) &:= \mathrm{E}\big[ \varphi_U( U_1\n ) | R_1\n = i \big],  & a\n_{\mathrm{appr}}(i) &:= \varphi_U \big( \frac{i}{n+1} \big),  \\
b\n_{\mathrm{ex}}(i) &:= \mathrm{E}\big[ \varphi_V( V_1\n ) | Q_1\n = i \big], \qquad  \text{and} &  b\n_{\mathrm{appr}}(i) &:= \varphi_V \big( \frac{i}{n+1} \big).
\vspace{-2mm}\end{align*}
\normalsize\vspace{-12mm}

\noindent Assumption (D2) implies
\begin{equation}
\label{eq:NoetherCondition}
\lim_{n \to \infty} \frac{ \sum_{i=1}^n \big( a_{\mathrm{appr}}\n ( i ) - \overline{a}\n \big)^2 }{ \max_{1 \leq i \leq n} \big( a_{\mathrm{appr}}\n ( i ) - \overline{a}\n \big)^2 } = \infty \quad \text{and} \quad \lim_{n \to \infty} \frac{ \sum_{i=1}^n \big( b_{\mathrm{ex}}\n( i ) - \overline{\varphi}_V \big)^2 }{ \max_{1 \leq i \leq n} \big( b_{\mathrm{ex}}\n( i ) - \overline{\varphi}_V \big)^2 } = \infty.
\end{equation}
Let 
\begin{equation}
\label{eq:lemma:LinearRankCorrelationStatisticExact}
S_{\mathrm{ex}}\n := \frac{1}{\sqrt{n}} \sum_{i=1}^n  \Big( a_{\mathrm{ex}}\n \big( R_i\n \big) b_{\mathrm{ex}}\n \big( Q_i\n \big)  -  \bar{\varphi_U} \bar{\varphi_V}   \Big) , 
\end{equation}
where $\bar{\varphi_U} := \int_0^1 \varphi_U (u) \mathrm{d}u$ and  $\bar{\varphi_V} := \int_0^1 \varphi_V (v) \mathrm{d}v$; note that 
$$\overline{\varphi}_U = \mathrm{E}\big[ \varphi_U( U_1\n )\big]  = \frac{1}{n}\sum_{i=1}^n \mathrm{E}\big[ \varphi_U( U_1\n ) | R_1\n = i \big]  = \frac{1}{n}\sum_{i=1}^n a_{\mathrm{ex}}\n \big( i  \big)\vspace{-2mm}$$
and, similarly, $\overline{\varphi}_V =  \frac{1}{n}\sum_{i=1}^n b_{\mathrm{ex}}\n ( i )$. Also  define \vspace{-2mm}
\begin{equation}
\label{eq:lemma:LinearRankCorrelationStatisticApprox}
S_{\mathrm{appr}}\n := \frac{1}{\sqrt{n}} \sum_{i=1}^n \left( a_{\mathrm{appr}}\n \big( R_{i}\n \big) b_{\mathrm{appr}}\n \big( Q_{i}\n \big)  - \overline{a}_{\mathrm{appr}}\n \overline{b}_{\mathrm{appr}}\n \right),
\vspace{-2mm}\end{equation}
where $\overline{a}_{\mathrm{appr}}\n := \frac{1}{n} \sum_{i=1}^n a_{\mathrm{appr}}\n \big( i \big)$ and $\overline{b}_{\mathrm{appr}}\n := \frac{1}{n} \sum_{i=1}^n b_{\mathrm{appr}}\n \big( i \big)$. The following 
Lemma 
 shows that  both $S_{\mathrm{ex}}\n$ and $S_{\mathrm{appr}}\n$ admit the asymptotic representation \vspace{-2mm}
\begin{equation}
\label{eq:lemma:LinearRankCorrelationStatisticT}
T\n := \frac{1}{\sqrt{n}} \sum_{i=1}^n  \left( \varphi_U \big( U_i\n \big) \varphi_V \big( V_i\n \big)  -  \overline{\varphi}_U\n  \overline{\varphi}_V\n   \right), 
\vspace{-2mm}\end{equation}
where $\overline{\varphi}_U\n = \frac{1}{n} \sum_{i=1}^n \varphi_U \big( U_i\n \big)$ and $\overline{\varphi}_V\n = \frac{1}{n} \sum_{i=1}^n \varphi_V \big( V_i\n \big)$.
% ----- ----- ----- ----- ----- ----- ----- ----- ----- ----- ----- ----- ----- ----- ----- ----- ----- ----- ----- ----- ----- ----- ----- -----
% Lemma on Equivalence of Exact and Approximate Scores 
% ----- ----- ----- ----- ----- ----- ----- ----- ----- ----- ----- ----- ----- ----- ----- ----- ----- ----- ----- ----- ----- ----- ----- -----
\begin{lemma}\label{lemma:rankCorrelationEquivalence}
Let   $\big( U_1\n, V_1\n \big),\ldots , \big( U_n\n, V_n\n \big)
$ and the scores
  $\varphi_U$, $\varphi_V$ 
satisfy (D1)-(D2). Then, as $n\to\infty$,\vspace{-4mm}
\begin{equation}
(i)~~S_{\mathrm{appr}}\n   = S_{\mathrm{ex}}\n + o_{L^2}( 1 ) \qquad \text{and} \qquad (ii)~~S_{\mathrm{appr}}\n = T\n + o_{L^2}( 1 ), \label{eq:lemma:rankExactAsymptoticEquivalence}
\vspace{-2mm}\end{equation}
with $S_{\mathrm{ex}}\n$, $S_{\mathrm{appr}}\n$, and $T\n$ defined in (\ref{eq:lemma:LinearRankCorrelationStatisticExact}), (\ref{eq:lemma:LinearRankCorrelationStatisticApprox}), and (\ref{eq:lemma:LinearRankCorrelationStatisticT}), respectively. 
\end{lemma}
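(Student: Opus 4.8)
The plan is to follow the structure of Theorem~V.1.8 in \citet{hs1967} and reduce both claims to a single \emph{score-generating} principle. Say that a triangular array of scores $\{c\n_i\}_{i=1}^n$ \emph{generates} a square-integrable $\varphi$ if $\int_0^1\big(c\n_{\lceil(n+1)u\rceil}-\varphi(u)\big)^2\mathrm{d}u\to 0$ as $n\to\infty$. I would first prove the auxiliary statement: if $\{c\n_i\}$ generates $\varphi_U$ and $\{d\n_i\}$ generates $\varphi_V$, then, under (D1), $n^{-\half}\sum_{i=1}^n\big(c\n_{R_i\n}d\n_{Q_i\n}-\overline{c}\n\,\overline{d}\n\big)=T\n+o_{L^2}(1)$, where $\overline{c}\n,\overline{d}\n$ are the score averages. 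Granting this, it remains to verify the generating property for the two score systems involved: the \emph{approximate} scores $i\mapsto\varphi_U(i/(n+1))$ and $i\mapsto\varphi_V(i/(n+1))$, for which the claim is precisely the $L^2$ convergence of a step-function (Riemann-type) approximation of $\varphi_U,\varphi_V$ --- this is the only place where (D2)/(A5) is used, since for a difference of two monotone square-integrable functions that convergence holds whereas it can fail for a general square-integrable score --- and the \emph{exact} scores $a\n_{\mathrm{ex}}(i)=\mathrm{E}[\varphi_U(U_1\n)\mid R_1\n=i]=\mathrm{E}[\varphi_U(U\n_{(i)})]$, $b\n_{\mathrm{ex}}(i)=\mathrm{E}[\varphi_V(V\n_{(i)})]$, which generate $\varphi_U,\varphi_V$ for \emph{any} square-integrable scores by a conditional-expectation (Jensen) argument, since $U\n_{(i)}$ concentrates around $i/(n+1)$. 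Statement~(ii) then follows at once, and (i) follows from (ii) applied to both $S\n_{\mathrm{appr}}$ and $S\n_{\mathrm{ex}}$ together with the triangle inequality in $L^2$.

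To establish the generating principle I would write $c\n_{R_i\n}=\varphi_U(U_i\n)+\rho_i\n$ and $d\n_{Q_i\n}=\varphi_V(V_i\n)+\sigma_i\n$, with $\rho_i\n:=c\n_{R_i\n}-\varphi_U(U_i\n)$ and $\sigma_i\n:=d\n_{Q_i\n}-\varphi_V(V_i\n)$, expand the product $c\n_{R_i\n}d\n_{Q_i\n}$ and recentre, so that $n^{-\half}\sum_i\big(c\n_{R_i\n}d\n_{Q_i\n}-\overline{c}\n\,\overline{d}\n\big)$ becomes $T\n$ plus three remainder sums built from $\varphi_U(U_i\n)\sigma_i\n$, $\rho_i\n\varphi_V(V_i\n)$, and $\rho_i\n\sigma_i\n$. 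Because $R_i\n/(n+1)$ (resp.\ $Q_i\n/(n+1)$) concentrates around $U_i\n$ (resp.\ $V_i\n$), the generating hypothesis gives $\mathrm{E}[(\rho_1\n)^2]\to 0$ and $\mathrm{E}[(\sigma_1\n)^2]\to 0$. A direct second-moment computation for each remainder sum --- splitting each double sum into its $i=j$ diagonal and its $i\ne j$ off-diagonal part, and evaluating the combinatorial averages via the exchangeability of $(R_1\n,\ldots,R_n\n)$ and, independently, of $(Q_1\n,\ldots,Q_n\n)$ --- then bounds the $L^2$ norm of each remainder by a quantity of order $\mathrm{E}[(\rho_1\n)^2]+\mathrm{E}[(\sigma_1\n)^2]+o(1)$, hence $o(1)$. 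The mutual independence in (D1) is used repeatedly, both to factorise the combinatorial averages and to kill the cross terms between the two samples; the auxiliary condition \eqref{eq:NoetherCondition}, a consequence of (D2), is not needed for this $L^2$-equivalence itself but records the non-degeneracy that the subsequent normality steps rely on.

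I expect the main obstacle to be the second-moment bookkeeping for the remainder sums. Since the ranks within one sample are dependent, $\mathrm{Var}\big(n^{-\half}\sum_i\rho_i\n\varphi_V(V_i\n)\big)$ is not merely $n\inv\sum_i\mathrm{E}[(\rho_i\n)^2]\,\mathrm{E}[\varphi_V(V_i\n)^2]$, and one must control the off-diagonal covariances $\mathrm{Cov}\big(\rho_i\n\varphi_V(V_i\n),\rho_j\n\varphi_V(V_j\n)\big)$, which involve the joint law of two coordinates of a uniform random permutation; getting these to be $o(1)$ in aggregate is where the independence of the $U$- and $V$-samples and the generating bounds $\mathrm{E}[(\rho_1\n)^2]\to 0$ must be combined carefully. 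A secondary and more routine point is the verification that the two score systems generate $\varphi_U,\varphi_V$ (standard, \citet{hs1967}, \S V.1), which nonetheless has to be done with attention because $\varphi_{f_j}\circ F_j\inv$ is only assumed to be a difference of monotone functions (assumption (A5)), not monotone itself.
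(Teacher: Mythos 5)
Your strategy is sound and ultimately rests on the same H\'{a}jek--\v{S}id\'{a}k machinery as the paper's proof, but it is organized differently and is more self-contained. The paper passes to the antirank representation $n^{-1/2}\sum_{i}\big(a\n(i)-\overline{a}\n\big)b\n\big(Q_{i;\ast}\n\big)$, splits $S_{\mathrm{ex}}\n$ and $T\n$ into a leading term plus a remainder, and kills the remainders by citing Lemma V.1.6a and Theorems V.1.5a and V.1.6a of \citet{hs1967} under the Noether condition (\ref{eq:NoetherCondition}); it proves (i) and $S_{\mathrm{ex}}\n=T\n+o_{L^2}(1)$ directly and gets (ii) by the triangle inequality. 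You instead prove (ii) for both score systems from a single ``generating'' principle and deduce (i), re-deriving the projection step by expanding $c\n_{R_i\n}d\n_{Q_i\n}=\big(\varphi_U(U_i\n)+\rho_i\n\big)\big(\varphi_V(V_i\n)+\sigma_i\n\big)$; this buys a citation-free argument at the price of the combinatorial bookkeeping you anticipate. Two points need care if you execute it. First, the centerings are not interchangeable: $n^{1/2}\big(\overline{a}_{\mathrm{appr}}\n\overline{b}_{\mathrm{appr}}\n-\overline{\varphi}_U\n\overline{\varphi}_V\n\big)$ is $O_{\mathrm{P}}(1)$, not $o_{\mathrm{P}}(1)$, so each remainder sum must be paired with its own recentering --- e.g.\ the cross term must be written as $n^{-1/2}\sum_i\big(\rho_i\n-\overline{\rho}\n\big)\big(\sigma_i\n-\overline{\sigma}\n\big)$, whose second moment factorizes (by independence of the two samples) into $\mathrm{E}\big[(\rho_1\n-\overline{\rho}\n)(\rho_j\n-\overline{\rho}\n)\big]\mathrm{E}\big[(\sigma_1\n-\overline{\sigma}\n)(\sigma_j\n-\overline{\sigma}\n)\big]$ terms and is then $O\big(\mathrm{E}[(\rho_1\n)^2]\,\mathrm{E}[(\sigma_1\n)^2]\big)$; a naive Cauchy--Schwarz bound on the uncentered $n^{-1/2}\sum_i\rho_i\n\sigma_i\n$ only gives $n^{1/2}o_{\mathrm{P}}(1)o_{\mathrm{P}}(1)$ and does not close. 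Second, the step from the generating property to $\mathrm{E}[(\rho_1\n)^2]\to 0$ is not just ``$R_1\n/(n+1)$ concentrates at $U_1\n$'': for the exact scores it is the $L^2$ (martingale) convergence of $\mathrm{E}[\varphi_U(U_1\n)\mid R_1\n]$ to $\varphi_U(U_1\n)$, and for the approximate scores it additionally uses the Ces\`{a}ro-$L^2$ equivalence of the two score systems, which is exactly where (D2) enters. Your observation that (\ref{eq:NoetherCondition}) is a non-degeneracy condition for the subsequent normality rather than for the $L^2$-equivalence itself is fair, although the paper lists it among the hypotheses of the cited results.
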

% ----- ----- ----- ----- ----- ----- ----- ----- ----- ----- ----- ----- ----- ----- ----- ----- ----- ----- ----- ----- ----- ----- ----- -----
\begin{proof}
Let us show that \vspace{-4mm}
\begin{equation}
\label{eq:lemma:rankCorrelationEquivalence1}
(i\pr ) ~~\lim_{n \to \infty} \mathrm{E}\big[ \big( S_{\mathrm{appr}}\n - S_{\mathrm{ex}}\n \big)^2 \big] = 0 \qquad \text{and} \qquad (ii\pr )~~  \lim_{n \to \infty} \mathrm{E}\big[ \big( S_{\mathrm{ex}}\n - T\n \big)^2 \big] = 0; 
\vspace{-2mm}\end{equation}
while $\ref{eq:lemma:rankExactAsymptoticEquivalence}(i)$ is the same as $(i\pr )$, $\ref{eq:lemma:rankExactAsymptoticEquivalence}(ii)$ is a consequence of $(i\pr )$, $(ii\pr )$ and   the triangle inequality. 

Defining the \emph{antirank} of $V_i\n $ with respect to $U_i\n $ by $Q_{i; \ast}\n   := \{ r : R_r\n   = i \}$ (so that~$R_{Q_{i; \ast}\n}\n  = i$), the sequence $\big( Q_{1; \ast}\n, \ldots, Q_{n; \ast}\n \big)$ is uniformly distributed over $\big\{1, \ldots, n \big\}$ in view of the  independence between the  $U_i\n$'s and the $V_i\n$'s. 
Reordering  terms, we have \vspace{-2mm}
$$
S_{\mathrm{appr}}\n := \frac{1}{\sqrt{n}} \sum_{i=1}^n \big( a_{\mathrm{appr}}\n \big( i \big)  - \overline{a}_{\mathrm{appr}}\n \big) b_{\mathrm{appr}}\n \big( Q_{i; \ast}\n \big) \quad \text{and} \quad S_{\mathrm{ex}}\n := \frac{1}{\sqrt{n}} \sum_{i=1}^n \big( a_{\mathrm{ex}}\n \big( i \big)  - \bar{\varphi_U} \big) b_{\mathrm{ex}}\n \big( Q_{i; \ast}\n \big).
\vspace{-2mm}$$
Write $S_{\mathrm{ex}}\n = S_{\ast; 1}\n + S_{\ast; 2}\n$, where\vspace{-2mm}
\small
$$
S_{\ast; 1}\n := n^{-1/2}\sum_{i=1}^n  \big( a_{\mathrm{appr}}\n \big( i \big)  - \overline{a}_{\mathrm{appr}}\n \big) b_{\mathrm{ex}}\n \big( Q_{i; \ast}\n  \big) \quad \text{and} \quad  S_{\ast; 2}\n := \frac{1}{\sqrt{n}}\sum_{i=1}^n  \big( b_{\mathrm{ex}}\n \big( i \big) - \bar{\varphi_U}  \big)\big(  a_{\mathrm{ex}}\n \big( R_{i; \ast}\n \big)  - a_{\mathrm{appr}}\n \big( R_{i; \ast}\n \big)  \big),\vspace{-2mm}
$$
\normalsize
where $R_{i; \ast}\n := \{ r : Q_r\n = i \}$ denotes the antirank of $U_i\n$ with respect to $V_i\n$. Assumption~(A5), (\ref{eq:NoetherCondition}), Lemma  V.1.6a, and Theorem  V.1.6a from \citet{hs1967} together imply
 $
\lim_{n \to \infty} \mathrm{E} \big[ \big( S_{\mathrm{appr}}\n -  S_{\ast; 1}\n \big)^2 \big] = 0$ and $ \lim_{n \to \infty} \mathrm{E} \big[ \big( S_{\ast; 2}\n \big)^2 \big] = 0,
$ 
which, along with the triangle inequality, establishes $(i\pr )$ in (\ref{eq:lemma:rankCorrelationEquivalence1}).

Let ${\bf U}\n_{(\cdot)} := \big( U_{(1)}\n, \ldots, U_{(n)}\n \big)\pr$ and ${\bf V}\n_{(\cdot)} := \big( V_{(1)}\n, \ldots, V_{(n)}\n \big)\pr$ denote the order statistics for the $n$-tuples $\{U_i\n \}_{i=1}^n$ and $\{V_i\n \}_{i=1}^n$, respectively.   
Because the antiranks $R_{1; \ast}\n$ are uniformly distributed and independent of $R_1\n, \ldots, R_n\n$, the $R_{1; \ast}\n$th order statistic  $U_{(R_{1; \ast}\n)}$ is uniformly distributed over the unit interval (the same holds true for the $Q_{1; \ast}\n$th order statistic  $V_{(Q_{1; \ast}\n)}$). Write $T\n = T_{\ast; 1}\n  + T_{\ast; 2}\n$, where\vspace{-4mm}
$$
T_{\ast; 1}\n := \frac{1}{\sqrt{n}}\sum_{i=1}^n  \big( a_{\mathrm{ex}}\n \big( i \big) -  \overline{\varphi}_U \big) \varphi_V \big( V_{( Q_{i; \ast}\n )} \big) \vspace{-5mm}$$ and   \vspace{-2mm} 
$$T_{\ast; 2}\n :=  \frac{1}{\sqrt{n}}\sum_{i=1}^n  \big( \varphi_V \big( V_i \big)  -  \overline{\varphi}_V\n \big) \big( \varphi_U( U_{(R_{i; \ast}\n) })  - a_{\mathrm{ex}}\n \big( R_{i; \ast}\n \big) \big).
\vspace{-1mm}$$
Then (\ref{eq:NoetherCondition}) and Theorem  V.1.5a from \citet{hs1967} imply that \vspace{-2mm}
$$
\lim_{n \to \infty} \mathrm{E} \big[ \big( S_{\mathrm{ex}}\n -  T_{\ast; 1}\n \big)^2 \big] = 0 \qquad \text{and} \qquad \lim_{n \to \infty} \mathrm{E} \big[ \big( T_{\ast; 2}\n \big)^2 \big] = 0,
\vspace{-2mm}$$
which establishes $(ii\pr )$ in (\ref{eq:lemma:rankCorrelationEquivalence1}).
\end{proof} 
% ----- ----- ----- ----- ----- ----- ----- ----- ----- ----- ----- ----- ----- ----- ----- ----- ----- ----- ----- ----- ----- ----- ----- -----
\begin{proof}[Proof of Proposition \ref{prop:RankCenSeqEquivalence}]
All expectations  in this section are under  $\mathrm{P}^{(n)}_{\pmb {\mu}, {\bf L}, f}$, unless otherwise specified;  ${\bf R}_i\n $ stands for $ {\bf R}_i\n({\bf L})$, $i=1, \ldots, n$.  
For part (i) of the proposition to hold, it is sufficient that,     for $\utT\n_{ {\bf L}, f; \mathrm{ex}}$ and $\utT\n_{ {\bf L}, f}$  
   in (\ref{exdelta}) and (\ref{rankScoreMatrix}), \vspace{-2mm}
\begin{equation}
\label{eq:proofprop2toshow1}
\big(\utT\n_{ {\bf L}, f }\big)_{rs}  =  \big( \utT\n_{ {\bf L}, f; \mathrm{ex}} \big)_{rs} + o_{L^2}(1) \quad\text{for all $r, s \in \{1, \ldots, k \}$, as $n \to \infty$}.
\end{equation}
 First, fix $r \ne s \in \{1, \ldots, k \}$. Then,  \vspace{-2mm}
$$
\big( \utT\n_{ {\bf L}, f; \mathrm{ex}} \big)_{rs} = \frac{1}{\sqrt{n}} \sum_{i=1}^n \mathrm{E}\left[ J_{f_r}\left( U_{1r}\n \right)   \big| R_{ir}\n \right] \mathrm{E}\left[ F_s\inv \left( U_{1s} \n \right) \big|  R_{is}\n \right]\vspace{-2mm}
$$
by independence between distinct components, and\vspace{-2mm}
$$
\big(\utT\n_{ {\bf L}, f }\big)_{rs} := \frac{1}{\sqrt{n}} \sum_{i=1}^n \Big( J_{f_r} \big( \frac{R_{ir}\n}{n+1} \big) F_s\inv \big( \frac{ R_is\n }{n+1} \big) - \overline{J_{f_r}}\n  \overline{F_s\inv}\n \Big).\vspace{-2mm}
$$
Letting $\phi_U=J_{f_r}$ and $\phi_V=F_s\inv$,  (\ref{eq:proofprop2toshow1}) (for $r \ne s$) thus directly follows from Lemma~A1.
For~$r=s$, the H\'{a}jek projection theorem for linear rank statistics and the convergence rate of Riemann sums imply\vspace{-2mm}
\begin{eqnarray*}
\big( \utT\n_{ {\bf L}, f; \mathrm{ex} } \big)_{rr} \!\! &\!\! :=\!\! &\!\!   n^{-\half} \sum_{i=1}^n \left(  \mathrm{E} \left[ J_{f_r}\left( U_{ir}\n \right) F_r\inv \left( U_{ir} \n \right) \big|  R_{ir}\n \right] - 1 \right) \\
&=& n^{-\half} \sum_{i=1}^n \left( J_{f_r}\left( \frac{R_{ir}\n}{n+1} \right) F_r\inv \left( \frac{R_{ir}\n}{n+1} \right) - 1 \right)   + o_{L^2}(1) \\
&=& n^{-\half} \left( \frac{1}{n} \sum_{i=1}^n J_{f_r}\left( \frac{i}{n+1} \right) F_r\inv \left( \frac{i }{n+1} \right) \! -\!  \int_0^1\!\! J_{f_r}(u) F_r\inv(u) \mathrm{d}u  \right) + o_{L^2}(1)
=o_{L^2}(1) 
\end{eqnarray*}
as $n \to \infty$, under $\mathrm{P}^{(n)}_{\pmb {\mu}, {\bf L}, f}$. 
This establishes part (i) of Proposition~\ref{prop:RankCenSeqEquivalence}. As for part~(ii), it follows from the results in \citet{hw2003} that  $\utDelta\n_ {\mathbf{L}, \pmb \mu, f; \mathrm{ex} } = \mathbf{\Delta}^{(n) \ast}_{\mathbf{L}, \pmb \mu, f} + o_{L^2}(1) $  as~$n \to \infty$, under $\mathrm{P}^{(n)}_{\pmb {\mu}, {\bf L}, f}$. 
This, along with part (i) of the proposition   and the triangle inequality,  implies part (ii). 
\end{proof} 

% ----- ----- ----- ----- ----- ----- ----- ----- ----- ----- ----- ----- ----- ----- ----- ----- ----- ----- ----- ----- ----- ----- ----- -----
% ----- ----- ----- ----- ----- ----- ----- ----- ----- ----- ----- ----- ----- ----- ----- ----- ----- ----- ----- ----- ----- ----- ----- -----
% ----- ----- ----- ----- ----- ----- ----- ----- ----- ----- ----- ----- ----- ----- ----- ----- ----- ----- ----- ----- ----- ----- ----- -----
% ----- ----- ----- ----- ----- ----- ----- ----- ----- ----- ----- ----- ----- ----- ----- ----- ----- ----- ----- ----- ----- ----- ----- -----
\begin{proof}[Proof of Proposition \ref{prop:AsymptoticRepresentation} ]
In order to establish  part (i) of the proposition, it is sufficient to show  that,   for every $r \ne s \in \{1, \ldots, k\}$,
 $
\big( \utT_{{\bf L}; f}\n \big)_{rs}  = \big(  {\bf T}^{\diamond (n)}_{ {\bf L}, {\pmb \mu}; f, g}  \big)_{rs} + o_{L^2}(1) 
$  
as $n \to \infty$, under~$\mathrm{P}^{(n)}_{\pmb \mu, {\bf L}, g}$. Let $\mathbf{V}_i\n := \mathbf{G} \left( \mathbf{Z}_i^{(n)} \right)=:( V_{i1}\n, \ldots, V_{ik}\n)\pr$, $i=1,\ldots,n$. The rank of $V_{ij}\n$ amongst $V_{1j}\n, \ldots, V_{nj}\n$ is $R_{ij}\n ( {\bf L} )$ for each $j=1, \ldots, k$. The   claim  follows from Lemma~\ref{lemma:rankCorrelationEquivalence} by taking  score functions    $J_{f_r}$ and $F_s\inv$. 

% ----- ----- ----- ----- ----- ----- ----- ----- ----- ----- ----- ----- ----- ----- ----- ----- ----- ----- ----- ----- ----- ----- ----- -----
The proof for parts (ii) and (iii) follow from that of Theorem 3.2(ii) and (iii) in \citet{ip2011}. However,  the presence of asymmetry in the independent components implies   different cross-information matrices. The result is obtained, via 
 Le~Cam's Third Lemma, from an evaluation of the covariance matrix in the asymptotically normal  joint distribution of  $\pmb{\Delta}^{\diamond (n)}_{ {\bf L}, {\pmb \mu}; f, g}$ and (\ref{eq:ParametricULANRepresentation}) under $\mathrm{P}^{(n)}_{\pmb \mu, {\bf L}, g}$. 
That covariance matrix  follows from the covariance of $\pmb{\Delta}^{\diamond (n)}_{ {\bf L}, {\pmb \mu}; f, g}$ and $\pmb{\Delta}_{\mathbf{L}, \pmb \mu, g}^{(n)}$, under $\mathrm{P}^{(n)}_{\pmb \mu, {\bf L}, g}$ which depends on  \vspace{-2mm}
$$
\mathrm{E}\big[ \text{vec}\big({\bf T}\n_{{\pmb \mu}, {\bf L}, g}  \big) \text{vec}\big( {\bf T}_{{\pmb \mu}, {\bf L}, f, g, \diamond}\n  \big)\pr   \big] = \sum_{ \substack{r, s, p, q=1 \\ r \ne s} }^k  \mathrm{E}\big[ \big({\bf T}\n_{{\pmb \mu}, {\bf L}, g}  \big)_{r,p} \big( {\bf T}_{{\pmb \mu}, {\bf L}, f, g, \diamond}\n  \big)_{s,q}   \big]  \mathbf{e}_p \mathbf{e}_q^{\prime} \otimes  \mathbf{e}_r \mathbf{e}_s^{\prime}.
\vspace{-3mm}$$
Evaluating this expression eventually 
yields the value of ${\bf G}_{f,g}$ appearing in~(\ref{eq:CrossInfMatrix}) for the cross-information matrix. 
\end{proof}

\section{Supplemental material: further simulation results}

 Figures \ref{fig:MDISimGroup1} - \ref{fig:MDISimGroup4} below are summarizing  the same simulation results as   Figures \ref{fig:SimGroup1} - \ref{fig:SimGroup4}, with Amari errors replaced with the  minimum distance index proposed by  Ilmonen et al.~(2010).  Conclusions are essentially similar.

\newpage
\begin{landscape}
\vspace{-6mm}
\begin{figure}%[H]%[htbp]%
\begin{subfigure}[h]{1\linewidth}
\captionsetup{font=footnotesize}
\caption{Sample size $n=100$}
\begin{center}
\includegraphics[scale=0.44]{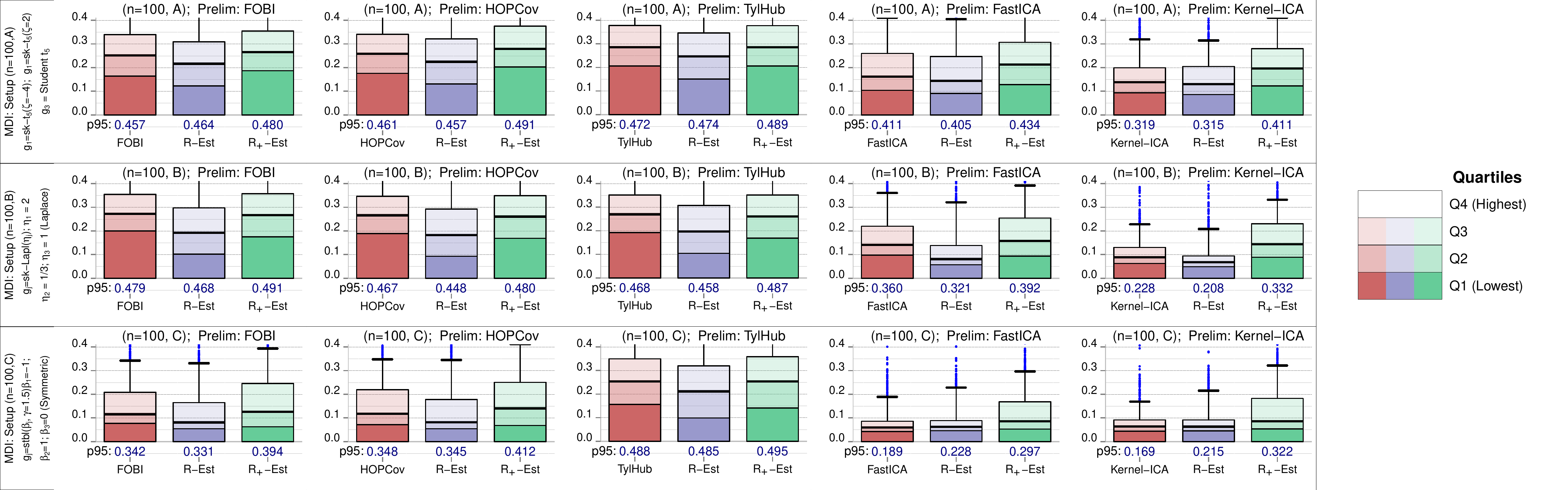}
\end{center}
\label{fig:MDISimGroup1-n100}
\end{subfigure}%
\\ \\
\begin{subfigure}[h]{1\linewidth}
\captionsetup{font=footnotesize}
\caption{Sample size $n=1,000$}
\begin{center}
\includegraphics[scale=0.44]{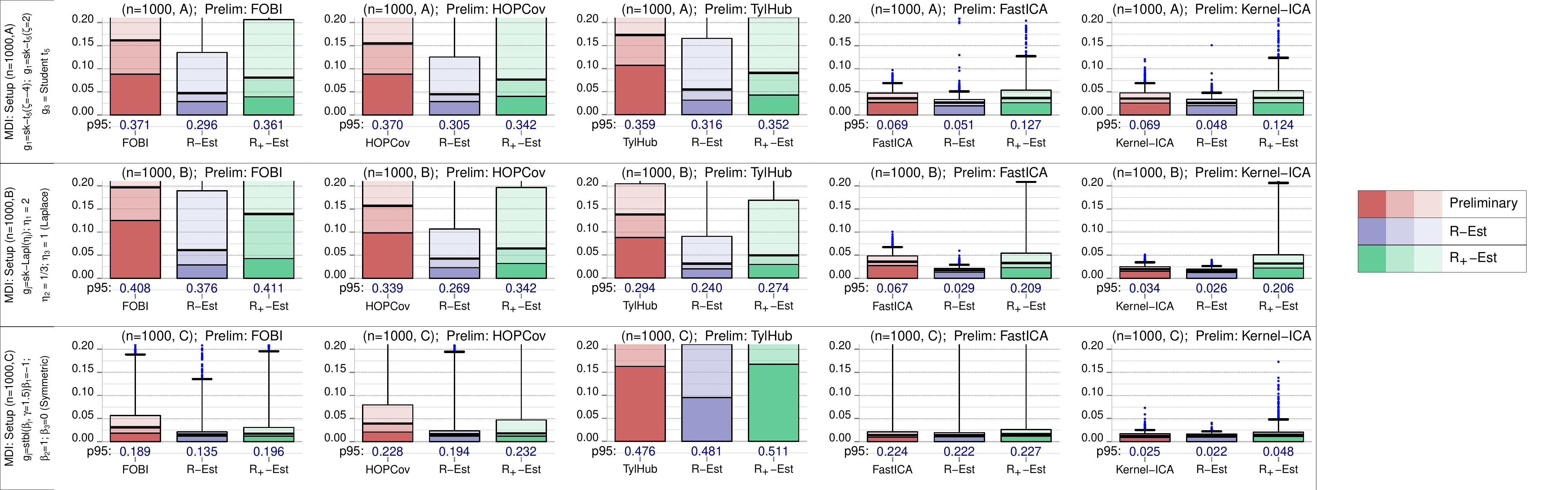}
\end{center}
\label{fig:MDISimGroup1-n1000}
\end{subfigure}%
\captionsetup{font=small}
\caption{
Boxplots of minimum distance index measurements obtained in $M=1,000$ replications of the setup 
 $(n, \text{S})$, $n~\! =~\! 100,\ 1,000$, $\text{S}=A,\, B,\, C,\vspace{1mm}$ for the preliminary 
$\tilde{\bf L}=\tilde{\bf L}_{\text{\tiny{Fobi}}}$,  
$\tilde{\bf L}_{\text{\tiny{HOPCov}}}$, 
$\tilde{\bf L}_{\text{\tiny{TylHub}}}$, 
$\tilde{\bf L}_{\text{\tiny{FIca}}}$, 
$\tilde{\bf L}_{\text{\tiny{KIca}}}\vspace{-3mm} $, the one-step $R$-estimator 
$\utLbfCaption^{\ast} ( \tilde{\bf L} )$, and  the one-step $R_+$-estimator 
$\utLbfCaption^{\ast}_+( \tilde{\bf L}) $  based on  the same preliminaries, with data-driven skew-$t$  and Student-$t$ scores, respectively. 
}\label{fig:MDISimGroup1}
\end{figure}

\begin{figure}%[H]%[htbp]%[H]
\begin{subfigure}[h]{1\linewidth}
\captionsetup{font=footnotesize}
\caption{Sample size $n=100$}
\begin{center}
\includegraphics[scale=0.44]{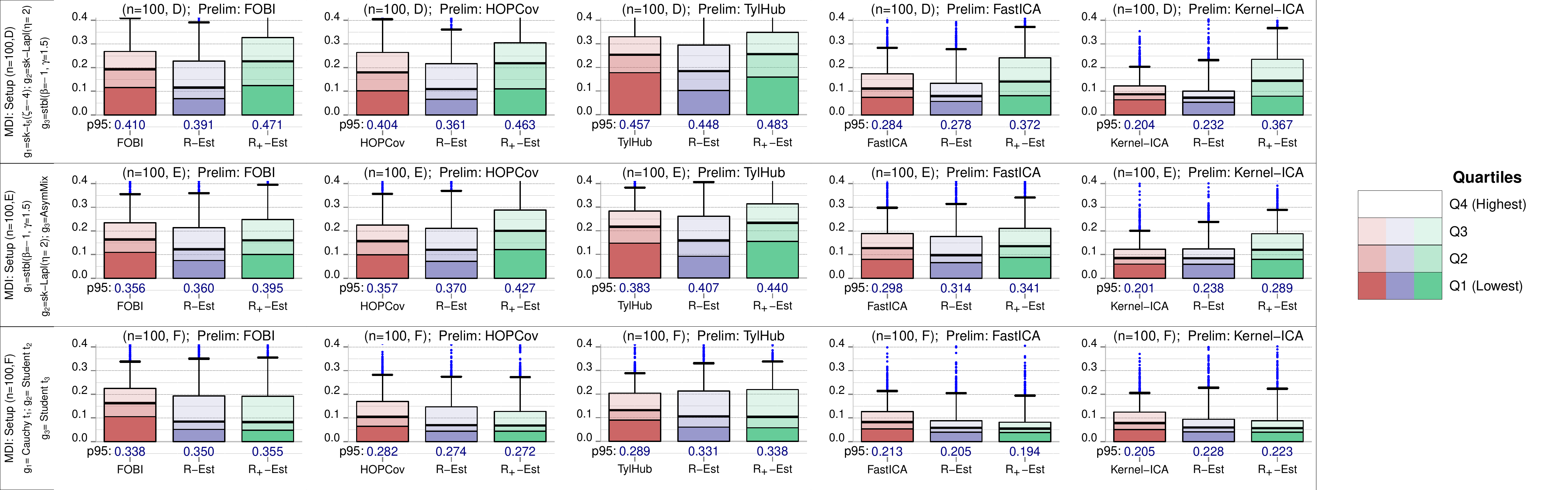}
\end{center}
\label{fig:MDISimGroup2-n100}
\end{subfigure}%
\\ \\
\begin{subfigure}[h]{1\linewidth}
\captionsetup{font=footnotesize}
\caption{Sample size $n=1,000$}
\begin{center}
\includegraphics[scale=0.44]{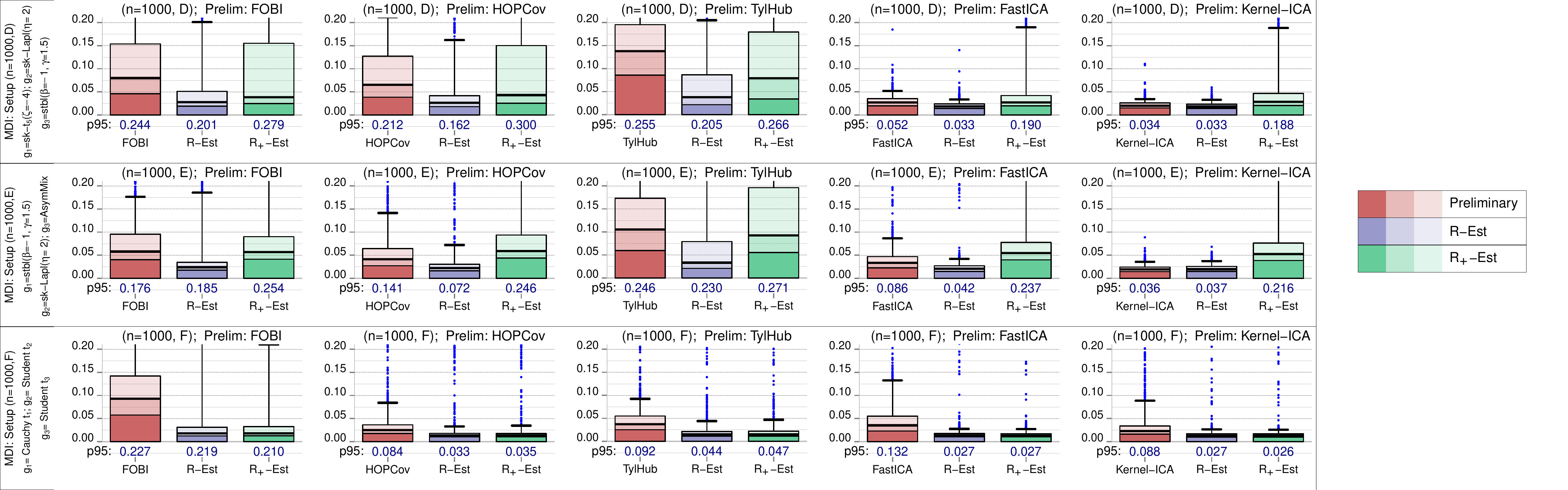}
\end{center}
\label{fig:MDISimGroup2-n1000}
\end{subfigure}%
\captionsetup{font=small}
\caption{
Boxplots of minimum distance index measurements obtained in $M=1,000$ replications of the setup $(n, \text{S})$, $n~\!\! =~\! \!100,\ 1,000$, $\text{S}=D,\, E,\, F,\vspace{1mm}$ for the 
    preliminary $\tilde{\bf L}=\tilde{\bf L}_{\text{\tiny{Fobi}}}$,  $\tilde{\bf L}_{\text{\tiny{HOPCov}}}$, 
    $\tilde{\bf L}_{\text{\tiny{TylHub}}}$, 
    $\tilde{\bf L}_{\text{\tiny{FIca}}}$, $\tilde{\bf L}_{\text{\tiny{KIca}}}\vspace{-3mm}$, the one-step $R$-estimator $\utLbfCaption^{\ast} ( \tilde{\bf L} )$, and  the one-step $R_+$-estimator  $\utLbfCaption^{\ast}_+( \tilde{\bf L}) \vspace{-1mm}$  based on  the same preliminaries with data-driven skew-$t$  and Student-$t$ scores, respectively.} 
\label{fig:MDISimGroup2}
\end{figure} 

\newpage

\begin{figure}%[H]
\begin{subfigure}[h]{1\linewidth}
\captionsetup{font=footnotesize}
\caption{Sample size $n=100$}
\begin{center}
\includegraphics[scale=0.44]{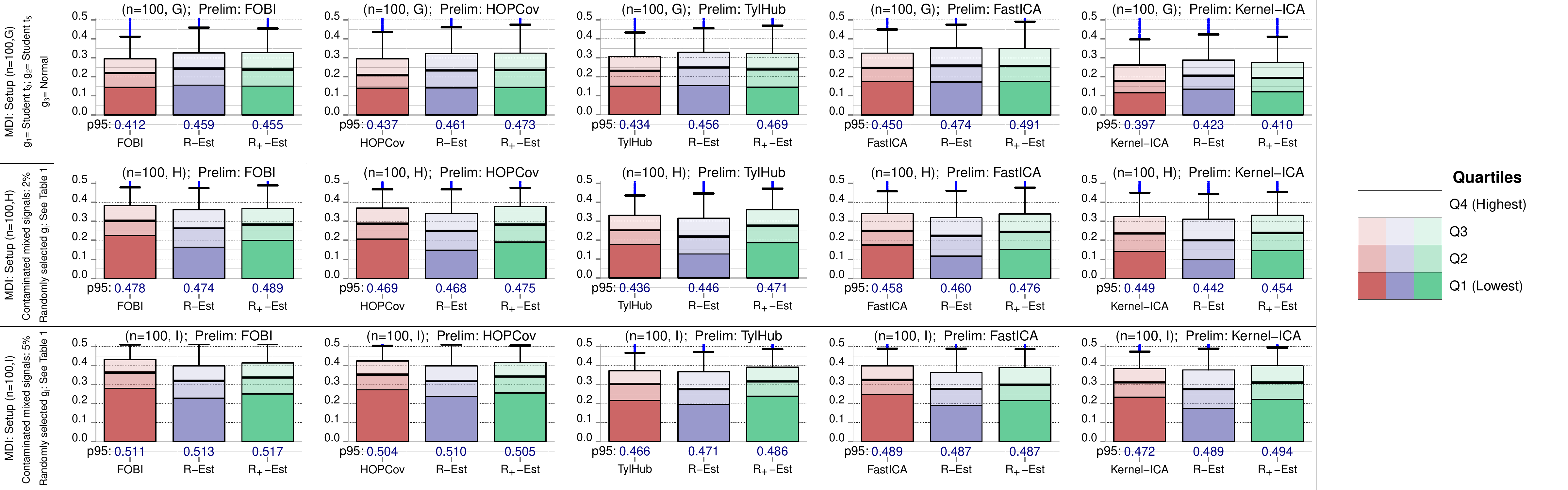}
\end{center}
\label{fig:MDISimGroup3-n100}
\end{subfigure}%
\\ \\
\begin{subfigure}[h]{1\linewidth}
\captionsetup{font=footnotesize}
\caption{Sample size $n=1,000$}
\begin{center}
\includegraphics[scale=0.44]{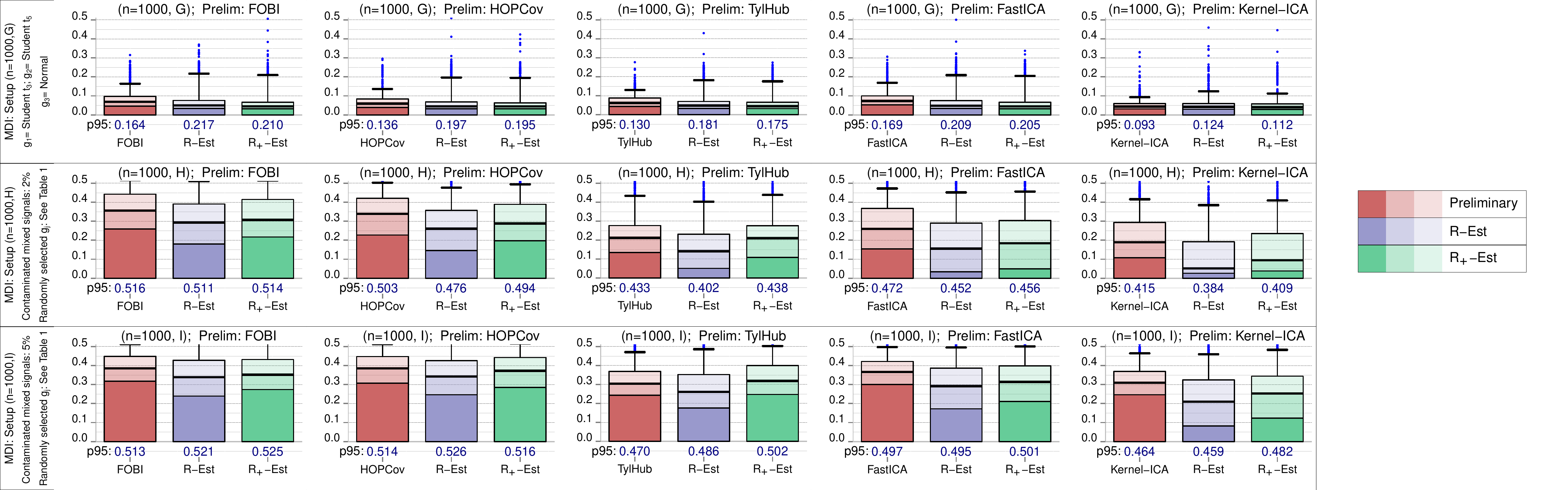}
\end{center}
\label{fig:MDISimGroup3-n1000}
\end{subfigure}%
\captionsetup{font=small}
\caption{Boxplots of minimum distance index measurements  
  obtained in $M=1,000$ replications of the setup $(n, \text{S})$, $n~\!\! =~\! \!100,\ 1,000$, $\text{S}=G,\, H,\, I,\vspace{1mm}$     for the 
    preliminary $\tilde{\bf L}=\tilde{\bf L}_{\text{\tiny{Fobi}}}$,  $\tilde{\bf L}_{\text{\tiny{HOPCov}}}$,
    $\tilde{\bf L}_{\text{\tiny{TylHub}}}$, 
     $\tilde{\bf L}_{\text{\tiny{FIca}}}$, $\tilde{\bf L}_{\text{\tiny{KIca}}}\vspace{-3mm}$, the one-step $R$-estimator $\utLbfCaption^{\ast} ( \tilde{\bf L} )$, and  the one-step $R_+$-estimator $\utLbfCaption^{\ast}_+( \tilde{\bf L})$     based on  the same preliminaries with data-driven skew-$t$ and Student-$t$ scores, respectively.} 
 \label{fig:MDISimGroup3}
   \end{figure} 
  
  \newpage
  \begin{figure}%[H]%[htbp]%
\begin{subfigure}[h]{0.5\linewidth}
\captionsetup{font=footnotesize}
\caption{Sample size $n=100$}
\begin{center}
\includegraphics[scale=.58]{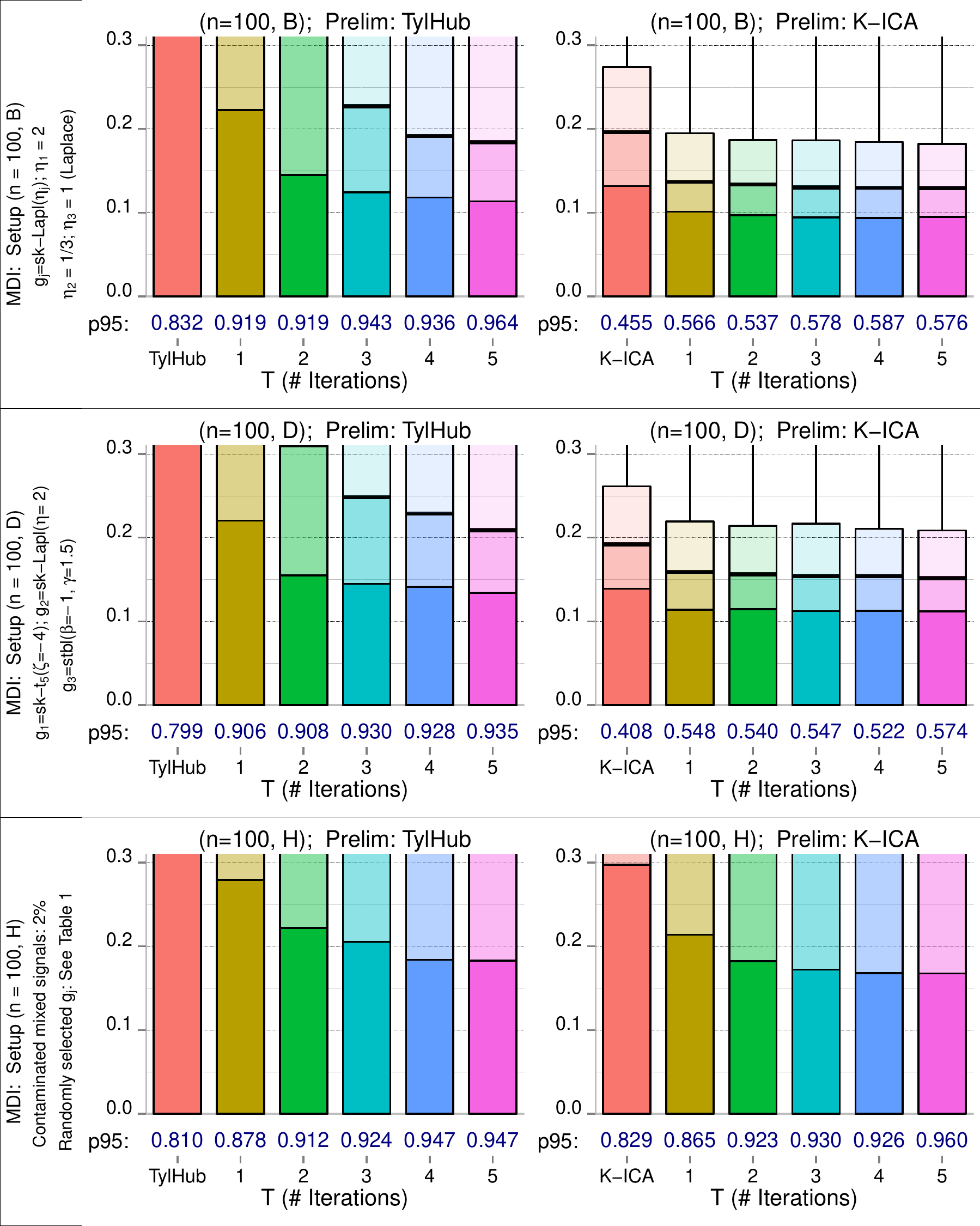}
\end{center}
\label{fig:MDISimGroup4-n100}
\end{subfigure}%
\begin{subfigure}[h]{0.5\linewidth}
\captionsetup{font=footnotesize}
\caption{Sample size $n=1,000$}
\begin{center}
\includegraphics[scale=.58]{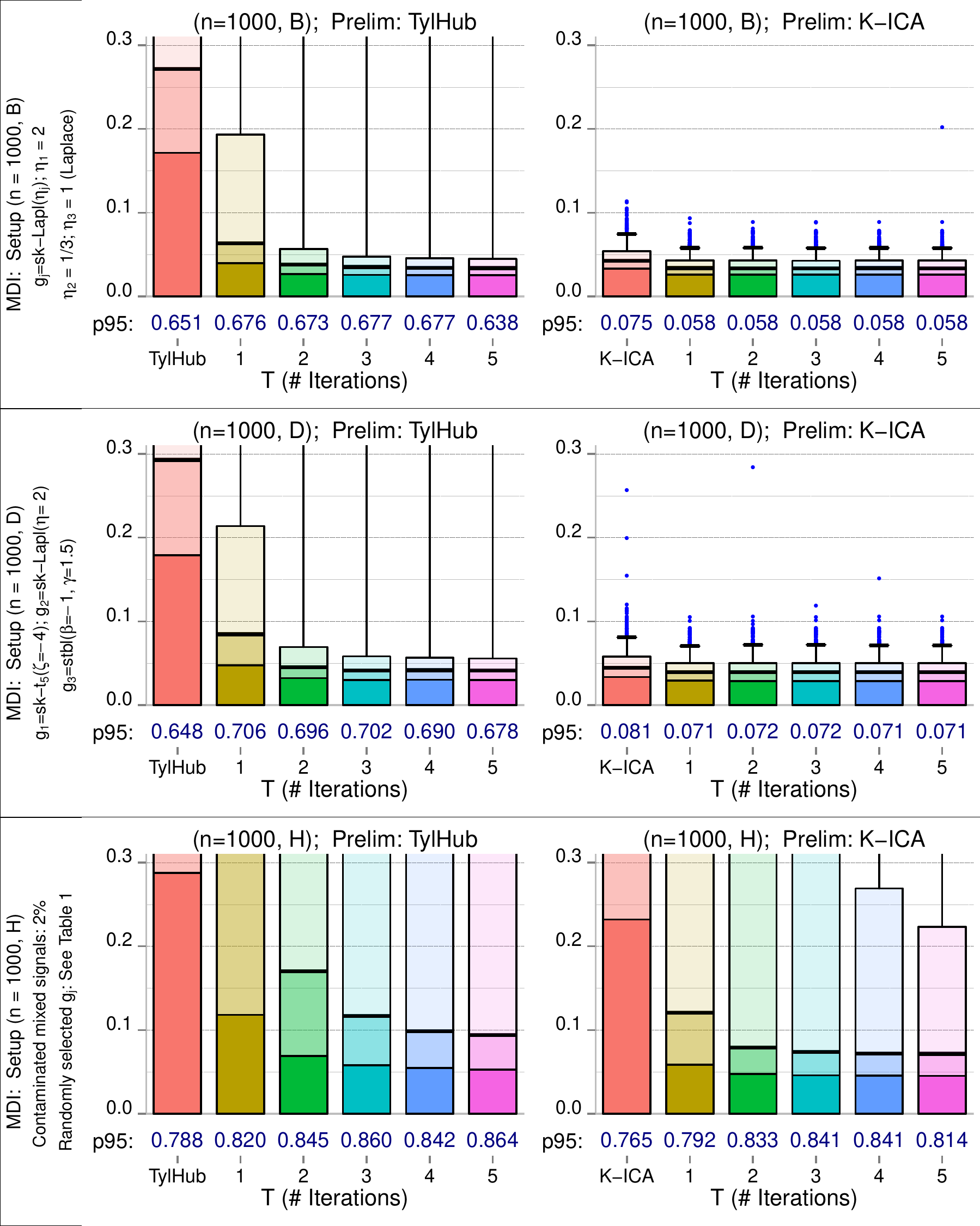}
\end{center}
\label{fig:MDISimGroup4-n1000}
\end{subfigure}%
\captionsetup{font=small}
\caption{
Boxplots of minimum distance index measurements 
  obtained in $M=1,000$ replications of the setup $(n, \text{S})$, $n=100, 1000$, $\text{S}=B,\, D,\, H,\vspace{1mm}$ 
   for the $T$-step $R$-estimator $\utLbfCaption^{\ast} ( \tilde{\bf L} )$   based on  preliminary  
      $\tilde{\bf L}=  \tilde{\bf L}_{\text{\tiny{TylHub}}}$ and     $\tilde{\bf L}_{\text{\tiny{KIca}}}\vspace{-3mm}$, respectively,
and  data-driven skew-$t$ scores, $T=1,\ldots, 10$}\label{fig:MDISimGroup4}
\end{figure}
\end{landscape}

\end{document}